\newcommand{\prob}[1]{\textsf{#1}}  
\newcommand{\myparagraph}[1]{\smallskip\noindent\textbf{#1}\quad}
\newcommand{\alg}[1]{\textbf{#1}}   
\def\calP{{\mathcal P}}   
\def\calM{{\mathcal M}}   
\def\calU{{\mathcal U}}   
\newcommand{\PCopt}{\overline{OPT}}
\def\affm{\mathbf{A}}
\newcommand{\trace}{{Se}}
\newlength{\tablength}
\newlength{\spacelength}
\newcommand{\tabstar}{\hspace*{\tablength}}
\newcommand{\spacestar}{\hspace*{\spacelength}}
\def\obeytabs{\catcode`\^^I=\active}
{\obeytabs\global\let^^I=\tabstar}
{\obeyspaces\global\let =\spacestar}
\newenvironment{display}{\begingroup\obeylines\obeyspaces\obeytabs}{\endgroup}
\newenvironment{prog}{\begin{display}\parskip0pt\sf}{\end{display}}
\newenvironment{pseudo}{\begin{quote}\begin{prog}}{\end{prog}\end{quote}}
\newcounter{foo}
\newtheorem{theorem}[foo]{Theorem}
\newtheorem{lemma}[foo]{Lemma}
\newtheorem{corollary}[foo]{Corollary}
\newtheorem{defn}[foo]{Definition}
\newtheorem{proposition}{Proposition}[section]
\newtheorem{claim}[foo]{Claim}
\newcommand{\biopt}{OPT^B}
\newcommand{\bipcopt}{\PCopt^B}
\def\hi{{1}}
\def\hj{{\hat{j}}}
\begin{document}
\title[Wireless Capacity in General Metrics]{Wireless Capacity with Oblivious Power in General Metrics}

\author[M. Halld\'orsson]{Magn\'us M. Halld\'orsson}
\address[M. Halld\'orsson]{School of Computer Science\\
Reykjavik University\\
Reykjavik 101, Iceland}
\email{mmh@ru.is}

\author[P. Mitra]{Pradipta Mitra}
\address[P. Mitra]{School of Computer Science\\
Reykjavik University\\
Reykjavik 101, Iceland}
\email{ppmitra@gmail.com}

\begin{abstract} 
\small
 The capacity of a wireless network is the maximum possible amount of
simultaneous communication, taking interference into account. 
Formally, we treat the following problem.  Given is a set of links,
each a sender-receiver pair located in a metric space, and an assignment
of power to the senders.  We seek a maximum subset of links that are
feasible in the SINR model: namely, the signal received on each link
should be larger than the sum of the interferences from the other
links.  We give a constant-factor approximation that holds for any
length-monotone, sub-linear power assignment and any distance metric.

We use this to give essentially tight characterizations of capacity
maximization under power control using oblivious power assignments.
Specifically, we show that the mean power assignment is optimal for
capacity maximization of bi-directional links, and give a tight
$\theta(\log n)$-approximation of scheduling bi-directional links with
power control using oblivious power. For uni-directional links we give
a nearly optimal $O(\log n + \log \log \Delta)$-approximation to the
power control problem using mean power, where $\Delta$ is the ratio of
longest and shortest links. Combined, these results clarify
significantly the centralized complexity of wireless communication
problems.
\end{abstract}

\maketitle

\section{Introduction.}

How much communication can occur simultaneously in any given wireless
network? This basic question addressed here of the \emph{capacity} of
a wireless network has attracted tremendous attention, led by 
the highly-cited work on non-constructive average-case analysis by
Gupta and Kumar \cite{kumar00}. Less is known about algorithmic
results on arbitrary instances.


In the problems we consider, we are given a set of links in a
metric space.  In the \emph{uni-directional} model, the link $\ell_v$
represents a directed communication request from a sender $s_v$ to a
receiver $r_v$, each a point in the space. In the
\emph{bi-directional} model, communication is two way, thus both
points in the link act as sender and receiver.  In either model, a power $P_v$ 
is associated with each
link $\ell_v$  (which may part of the input, or
something the algorithm must find out).  When communicating during the
same time, two links $\ell_u$ and $\ell_v$ interfere with each other.
We adopt the physical model or ``SINR model" of interference,
precisely defined in the Section \ref{sec:model}.
It is known to capture reality more faithfully than the graph-based
models most common in the theory literature, as shown 
theoretically as well as
experimentally~\cite{GronkMibiHoc01,MaheshwariJD08,Moscibroda2006Protocol}.

Within this overall framework, a large number of problems can be defined.
Even limited to the specific scope of this paper, at least the
following dimensions in the problem space can be considered. 

First, one can consider \emph{fixed} versus \emph{arbitrary} power.
In the fixed version, the power that a link must use is a part of the input,
whereas
for the arbitrary power (or, ``power control") case, 
choosing the appropriate powers to optimize capacity is a part of the problem. 
For ease of implementation in distributed settings,
research on both fixed and arbitrary power
has focused on \emph{oblivious} power assignments, where the power
of a link is a (usually simple) function of the length of the link. 
We study how well oblivious power
assignments can handle the power control problem and provide
tight characterization (upper and lower bounds) for certain cases.

Second, there is the \emph{capacity} problem (also known as the
\emph{single slot scheduling} problem), of finding the
maximum number of links that can communicate simultaneously
 vs.\ the \emph{scheduling}
problem, which is to partition the links into the smallest number of 
slots such that each slot can be scheduled at
once. 
We focus mainly on the capacity problem, but study the
relationship between capacity and scheduling, specially in the context
of the oblivious vs.\ arbitrary power dichotomy.

Finally, there is the issue of the metric space.
The common assumption of the 2-dimensional Euclidean plane is 
natural, but for wireless transmission it is clearly a grand simplification.
Antenna directionality, walls and obstructions, 
environmental conditions, and terrains all complicate and distort signal
transmissions, questioning the geometric assumption. 
In this work, we are interested in exploring what can be said in the case of
completely arbitrary metrics. 
%
This relates also to the issue of the \emph{path loss exponent} $\alpha$
 (defined in Section \ref{sec:model}),
a constant that defines how quickly the signal falls off from its source. 
Most approximation results have depended on the 
assumption that $\alpha > 2$
\cite{moscibroda06b,MoscibrodaOW07,chafekar07,GHWW09,HW09}, while it
is known that $\alpha$ can actually be equal to or smaller
than 2 in real networks (see \cite{SWTFA95}, for example). 

The aim of our work is to clarify the relationships between the
different problem variants, as well as obtaining improved bounds for
the fundamental problems.

\subsection{Our Contributions.}


For general metrics, we
provide a $O(1)$-approximation algorithm for the capacity problem for all length-monotone, sub-linear
power assignments (these terms are defined in Section \ref{sec:model}). In particular, this implies constant factor approximation for uniform, mean-power
and linear-power assignments. The results apply for both uni- and bi-directional links. Previously,
an $O(1)$-approximation was known only for uniform power 
in the plane with $\alpha > 2$ \cite{HW09}. 

We use the above results to study the power control problem through the use of oblivious power, specifically, mean power.
For bi-directional links,
we achieve a $O(1)$-approximation algorithm 
improving the previous $O(\log n)$-approximation for metrics with bounded Assouad dimension (or \emph{doubling} metrics)\cite{us:esa09full}, where $n$ is the number of links.
For uni-directional links, we improve and generalize the $O(\log n
\cdot \log\log \Delta)$-approximation of \cite{us:esa09full} for
\emph{doubling} metrics to a $O(\log n + \log \log \Delta)$-approximation, 
where $\Delta$ is the ratio between longest to the shortest linklength.
This nearly matches the lower bound of $\Omega(\log \log \Delta)$ for oblivious power assignments \cite{us:esa09full}.

Algorithmic results for capacity imply approximation algorithms for the corresponding
scheduling problems with an extra multiplicative $O(\log n)$ factor. In all cases,
these improve and/or generalize existing bounds. Among these, the $O(\log n)$-approximation
algorithm for scheduling bi-directional links with power control deserves special mention.
It improves on previous poly-logarithmic bounds \cite{KV10,FKRV09}, and we additionally
show it to be best possible for any algorithm using oblivious power assignment.
This is done by providing an example of links in a metric space that can be scheduled using some
power assignment in one slot, yet require $\Omega(\log n)$
slots using oblivious power. 
Observe that in this construction all links are of equal length, for
which uniform power is known to result in a constant-factor
approximation in the 2-dimensional setting (with $\alpha > 2$) \cite{us:esa09full}. Hence,
this also implies that when scheduling, general metrics are strictly
harder than doubling metrics. 

Another feature of our results is that they work for all positive values of the path loss exponent $\alpha$, generalizing
a host of results that required $\alpha > 2$.
Finally, noting that the approximation factors we achieve are exponential in $\alpha$ (which is a constant
for typical values of $\alpha$), we 
prove an inapproximability result showing that in general,
exponential dependence on the path loss exponent 
is necessary.


The techniques used can be viewed as a step in the evolution of 
SINR analysis. The key addition is to treat both the
interference caused as well as interference received. We treat the
pairwise affectances as a matrix, and build on sparsity properties.
In addition, we extend the key technical lemma -- the \emph{red-blue
  lemma} -- of \cite{GHWW09,HW09} from the plane to arbitrary distance metrics.

\subsection{Related Work.}

Early work on scheduling in the SINR model was focused on heuristics
and/or non-algorithmic average-case analysis (e.g.~\cite{kumar00}). 
In contrast, the body of algorithmic work is mostly on graph-based models
that ultimately abstract away the nature of wireless communication.
The inefficiency of graph-based protocols in the SINR model is
well documented and has been shown theoretically as well as
experimentally~\cite{MaheshwariJD08,Moscibroda2006Protocol}.

In seminal work, Moscibroda and Wattenhofer \cite{MoWa06}
propose the study of the \emph{scheduling complexity} of arbitrary set
of wireless links. 
Early work on approximation algorithms
\cite{moscibroda06b,MoscibrodaOW07,chafekar07} produced approximation
factors that grew with structural properties of the network.

The first constant factor approximation algorithm was obtained for
capacity problem for uniform power in \cite{GHWW09} (see also
\cite{HW09}) in $\mathbf{R^2}$ with $\alpha > 2$.
%
Fangh\"anel, Kesselheim and V\"ocking \cite{FKV09} gave an algorithm
that uses at most $O(OPT + \log^2 n)$ slots for the scheduling problem
with \emph{linear} power assignment $P_v = \ell_v^\alpha$,
that holds in general distance metrics.
Recently, Kesselheim and V\"ocking \cite{KV10} gave a distributed
algorithm that gives a $O(\log^2 n)$-approximation for the scheduling
problem with any fixed length-monotone and sub-linear power assignment.
Our capacity algorithm yields an improved $O(\log n)$-approximate
centralized algorithm.

The situation with power control has been more difficult.
Moscibroda, Oswald and Wattenhofer \cite{MoscibrodaOW07} showed that
using either uniform or linear power assignment results in only a 
$\Omega(n)$-approximation of the capacity and scheduling problems.
In terms of $\Delta$, their construction gives a $\Omega(\log
\Delta)$-lower bound. 
Avin, Lotker and Pignolet \cite{ALP09} show non-constructively that 
the ratio between the optimal capacity with power control and optimal
capacity with uniform power is $O(\log \Delta)$ in $\mathbf{R^1}$, for any $\alpha > 0$.
Fangh\"anel et al.~\cite{FKRV09} showed that any oblivious power
scheduling results in a $\Omega(n)$-lower bound. In terms of $\Delta$, this
was shown to be $\Omega(\log\log \Delta)$ in \cite{us:esa09full}.
A $O(\log\log \Delta \cdot \log n)$-approximation algorithm was given
in \cite{us:esa09full}, that holds in any \emph{doubling} metric where
$\alpha$ is strictly greater than the doubling coefficient
(generalizing the Euclidean setup).
This was shown to lead to $O(\log\log \Delta \cdot \log^3
n)$-approximation in general distance metrics \cite{FV09}.

Much of the difficulty of the power control scenario has to do with
the asymmetry of the communication links. Fangh\"anel et al.~\cite{FKRV09} 
introduced a bi-directional version, where both endpoints
of each link are senders and receivers (using the same power).
They give an algorithm for arbitrary metric 
that approximates scheduling within a $O(\log^{3.5+\alpha} n)$ factor.
This was improved to a $O(\log n)$-factor for doubling metrics in
\cite{us:esa09full}. 

In a very recent breakthrough, Kesselheim obtained 
a $O(1)$-approximation algorithm
for the capacity problem with power control for doubling metrics
and $O(\log n)$ for general metrics \cite{KesselheimSoda11}. 
In contrast, in the power control part of our
paper, we are interested in what can achieved through locally
obtainable oblivious assignments, whereas the power assignment in
\cite{KesselheimSoda11} is necessarily non-oblivious.


\section{Model and Preliminaries.}
\label{sec:model}

Given is a set $L = \{\ell_1, \ell_2, \ldots, \ell_n\}$ of links, where
each link $\ell_v$ represents a communication request from a sender
$s_v$ to a receiver $r_v$. 
The distance between two points $x$ and $y$ is denoted $d(x,y)$.
The asymmetric distance from link $\ell_v$ to link $\ell_w$ is the distance from
$v$'s sender to $w$'s receiver, denoted $d_{vw} = d(s_v, r_w)$.
The length of link $\ell_v$ is denoted 
simply $\ell_v$.
%
Let $\Delta$ denote the ratio between the maximum and minimum length of a link.
We assume that each link has a unit-traffic demand, and model the case
of non-unit traffic demands by replicating the links.

Let $P_v$ denote the power assigned to link $\ell_v$, or, in other words, $s_v$ transmits with power $P_v$. 
%
We assume the \emph{path loss radio propagation} model for the
reception of signals, where the signal received at point
$y$ from a sender at point $x$ with power $\mathcal{P}$  is $\mathcal{P}/d(x, y)^\alpha$ where the constant 
$\alpha > 0$ denotes the
path loss exponent. 
%
We adopt the \emph{physical model} (or \emph{SINR model})
of interference, in which a node $r_v$
successfully receives a message from a sender $s_v$ if and only if the
following condition holds:
\begin{equation}
 \frac{P_v/\ell_v^\alpha}{\sum_{\ell_w \in S \setminus  \{\ell_v\}}
   P_w/d_{wv}^\alpha + N} \ge \beta, 
 \label{eq:sinr}
\end{equation}
where $N$ is a universal constant denoting the ambient noise, $\beta \ge 1$ denotes the minimum
SINR (signal-to-interference-noise-ratio) required for a message to be successfully received,
and $S$ is the set of concurrently scheduled links in the same \emph{slot}.
%
We say that $S$ is \emph{SINR-feasible} (or simply \emph{feasible}) if (\ref{eq:sinr}) is
satisfied for each link in $S$. 

Let \prob{$\calP$-Capacity} denote the
problem of finding a maximum SINR-feasible set of links under power
assignment $\calP$.
Let \prob{PC-Capacity} denote the problem of finding a maximum
cardinality set of links and a power assignment that makes these links feasible.
Let $\PCopt(L)$ denote the optimal capacity of a linkset $L$ under
any power assignment. 

The above defines the physical model for uni-directional links.
In the bi-directional setting, 
the asymmetry between senders and receivers disappear.
The SINR constraint (\ref{eq:sinr}) changes only in that the definition
of distance between link changes to 
\[ d_{wv} = \min(d(s_w,r_v),d(s_v,r_w), d(s_w,s_v), d(r_w,r_v))\ . \]
With this new definition of inter-link distances, all other definitions and conditions remain unchanged.
For all problems we will consider, when the problem name is prefixed 
with \prob{Bi-}, we assume bi-directional communication.

A power assignment $P$ is \emph{length-monotone} if $P_v \ge P_w$ whenever
$\ell_v \ge \ell_w$ and \emph{sub-linear} if $\frac{P_v}{\ell_v^{\alpha}} \le \frac{P_w}{\ell_w^{\alpha}}$ whenever $\ell_v \ge \ell_w$.

There are two specific power assignments we will be using throughout the
paper.  The first is $\calU$, 
or the uniform power assignment, where
every link transmits with the same power. The second is $\calM$,
or
the mean power (or ``square-root'') assignment, where $P_v$ is proportional to
$\sqrt{\ell_v^\alpha}$. It is easy to verify that both $\calU$ and $\calM$
are length-monotone and sub-linear.

\myparagraph{Affectance.}
We will use the notion of \emph{affectance}, introduced in
\cite{GHWW09} and refined in \cite{HW09}, which has a number of
technical advantages.  
The affectance $a^P_w(v)$ of link $\ell_v$ caused by another link $\ell_w$,
with a given power assignment $P$,
is the interference of $\ell_w$ on $\ell_v$ relative to the power
received, or
  \[ a^P_{w}(v) 
     = c_v \frac{P_w/d_{wv}^\alpha}{P_v/\ell_v^\alpha}
     = c_v \frac{P_w}{P_v} \cdot \left(\frac{\ell_v}{d_{wv}}\right)^\alpha,
  \] 
where $c_v = \beta/(1 - \beta N \ell_v^\alpha/P_v)$ is a constant
depending only on the length and power of the link $\ell_v$. We will
drop $P$ when it is clear from context. 
Let $a_v(v) = 0$.
For a set $S$ of links and a link $\ell_v$, 
let $a^P_v(S) = \sum_{w \in S} a^P_v(w)$ and $a^P_S(v) = \sum_{w \in S} a^P_w(v)$.
Using this notation, Eqn.~\ref{eq:sinr} can be rewritten as $a^P_S(v)
\leq 1$, and this is the form we will use. This transforms the
relatively complicated Eqn.~\ref{eq:sinr} into an inequality involving
a simple sum that can be manipulated more easily.

We will frequently consider the affectances arranged as a matrix
$\affm = (a_{uv})$, where $a_{uv} = a_v(u)$. 
Note that feasibility of a set $S$ is equivalent to $\sum_{v} a_{uv}
\leq 1; \forall u \in S$, where $\affm = (a_{uv})$ 
is the appropriate affectance matrix. 

We will require two simple matrix-algebraic results. Before we state
them, we need a few definitions. For a matrix $\mathbf{M}$ (vector
$\mathbf{v}$), let $\trace(\mathbf{M})$ ($\trace(\mathbf{v})$) denote
the sum of entries in the matrix (vector).  We say that a matrix
$\mathbf{M} = (m_{ij})$ is \emph{$q$-approximately-symmetric} if, for
any two indices $i$ and $j$, $m_{ij} \le q \cdot m_{ji}$ for some $q
\geq 0$. For a matrix $\mathbf{M}$ and vectors $\mathbf{u}$ and
$\mathbf{v}$, we will use the notation $\mathbf{M} \cdot \mathbf{u}
\leq \mathbf{v}$ to mean $(\mathbf{M \cdot u})_i \leq \mathbf{v}_i$
for all $i$.


The proof of the following two claims can be found in the appendix (Section \ref{app:matrix}).
\begin{claim}
\label{matrixsumclaim}
Let $\mathbf{M}$ be a $n \times n$ matrix that contains only non-negative
entries. Assume $\trace(\mathbf{M}) \leq \gamma n$ for some $\gamma \geq 0$. Then, for any $\lambda > 1$, there are at least 
$(1 - \frac{1}{\lambda}) n$ rows (columns) $\mathbf{r}$ of $\mathbf{M}$ for which $\trace{(\mathbf{r})} \leq \gamma \cdot \lambda$.
\end{claim}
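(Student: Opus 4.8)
The plan is to prove Claim~\ref{matrixsumclaim} by a direct averaging (Markov-style) argument on the row sums of $\mathbf{M}$. First I would observe that $\trace(\mathbf{M})$ is exactly the sum of the row sums: writing $s_i = \trace(\mathbf{r}_i)$ for the sum of entries in row $i$, we have $\sum_{i=1}^n s_i = \trace(\mathbf{M}) \le \gamma n$. Since all entries of $\mathbf{M}$ are non-negative, each $s_i \ge 0$, so this is a genuine average: the mean row sum is at most $\gamma$.

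Next I would apply Markov's inequality to the row sums. Let $B = \{\, i : s_i > \gamma\lambda \,\}$ be the set of ``bad'' rows whose sum exceeds $\gamma\lambda$. Then $\gamma n \ge \sum_{i=1}^n s_i \ge \sum_{i \in B} s_i > |B|\cdot \gamma\lambda$, so (assuming $\gamma > 0$; the case $\gamma = 0$ is trivial since then all entries are zero and every row qualifies) we get $|B| < n/\lambda$. Hence the number of ``good'' rows with $s_i \le \gamma\lambda$ is at least $n - |B| > (1 - \tfrac{1}{\lambda})n$, as required. The statement for columns follows by applying the identical argument to $\mathbf{M}^T$, whose trace-sum equals that of $\mathbf{M}$.

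There is essentially no obstacle here — the only things to be a little careful about are the degenerate case $\gamma = 0$ and the non-strict versus strict inequalities, but these are routine. I would write the proof in a few lines in the appendix section referenced in the text.
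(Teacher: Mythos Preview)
Your proposal is correct and is essentially the same Markov-style averaging argument as the paper's own proof; the paper phrases it by contradiction (assuming at least $n/\lambda$ rows are ``bad'' and deriving $\trace(\mathbf{M}) > \gamma n$), while you phrase it directly, but the content is identical.
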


\begin{lemma}
Let $n$ be a number and $\mathbf{p}$ be a positive $n$-dimensional vector.
Let $\mathbf{M}$ be a non-negative $q$-approximately-symmetric $n$-by-$n$
matrix such that $\mathbf{M} \mathbf{p} \le \mathbf{p}$.
Then, $\trace(\mathbf{M}) \le (q+1) n$.
\label{lem:nearly-symm}
\end{lemma}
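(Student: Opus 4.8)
The plan is to bound the diagonal sum $\trace(\mathbf{M}) = \sum_i m_{ii}$ by exploiting the inequality $\mathbf{M}\mathbf{p} \le \mathbf{p}$ together with approximate symmetry. First I would write out what $\mathbf{M}\mathbf{p} \le \mathbf{p}$ says coordinate-wise: for every $i$, $\sum_j m_{ij} p_j \le p_i$, equivalently $\sum_{j \ne i} m_{ij} p_j \le (1 - m_{ii}) p_i$. In particular, since all entries and all $p_i$ are positive, this already forces $m_{ii} \le 1$ for every $i$, which is the ``$+1$'' part of the claimed bound; the real content is to show that the off-diagonal mass, suitably weighted, controls the rest.

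The key idea is to sum the inequalities $\sum_j m_{ij} p_j \le p_i$ over all $i$, which gives $\sum_{i,j} m_{ij} p_j \le \sum_i p_i = \trace(\mathbf{p})$. Now I would regroup the left-hand side by columns: $\sum_j p_j \left(\sum_i m_{ij}\right) \le \trace(\mathbf{p})$. This bounds a $\mathbf{p}$-weighted sum of column sums of $\mathbf{M}$. To convert column sums into a statement about the diagonal, I use $q$-approximate symmetry in the direction $m_{ij} \le q\, m_{ji}$: for a fixed column $j$, $\sum_i m_{ij} \ge \sum_i m_{ji}/q$ is the wrong direction, so instead I want the bound $m_{ji} \le q\, m_{ij}$, giving $\sum_{i} m_{ji} \le q \sum_i m_{ij}$. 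Hence the $\mathbf{p}$-weighted row sums are controlled: $\sum_j p_j \left(\sum_i m_{ji}\right) \le q \sum_j p_j \left(\sum_i m_{ij}\right) \le q\,\trace(\mathbf{p})$. Adding the two gives $\sum_j p_j \sum_i (m_{ij} + m_{ji}) \le (q+1)\trace(\mathbf{p})$, but this still has $\mathbf{p}$-weights rather than a clean count.

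The step I expect to be the main obstacle is removing the dependence on the specific vector $\mathbf{p}$ to land exactly at $(q+1)n$ rather than something weighted. The clean route is not to sum over all $i$ but to use the inequality more locally: fix an index $j$ and look at the $j$-th coordinate, $\sum_i m_{ji} p_i \le p_j$. Bounding a single term, $m_{jj} p_j \le p_j$ so $m_{jj} \le 1$ — that recovers the diagonal bound but loses the factor structure. So instead I would argue as follows: by approximate symmetry $m_{ij} p_j \le q\, m_{ji} p_j$, and I want to relate $m_{ji}p_j$ to $m_{ji}p_i$; this is where I need to be careful since $p_i \ne p_j$ in general. The correct trick, which I would carry out, is: $\sum_i m_{ii} = \sum_i m_{ii}$, and for each $i$ use $m_{ii} p_i \le p_i - \sum_{j\ne i} m_{ij} p_j \le p_i$ on the diagonal contribution directly, while for the off-diagonal accounting we compare $\sum_{i} \sum_{j \ne i} m_{ij} p_j$ (bounded by $\sum_i (p_i - m_{ii}p_i) \le \sum_i p_i$) against $\sum_i \sum_{j\ne i} m_{ji} p_j$ via $m_{ji} \le q m_{ij}$...

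Rather than belabor the weighting issue, the cleanest complete argument I would actually write is the direct double-counting one: $\trace(\mathbf{M}) = \sum_i m_{ii} \le n$ since each $m_{ii} \le 1$ (immediate from $\mathbf{M}\mathbf{p}\le\mathbf{p}$ and positivity of $\mathbf{p}$), and then observe that this already gives $\trace(\mathbf{M}) \le n \le (q+1)n$ since $q \ge 0$. So in fact the approximate symmetry and the full strength of $\mathbf{M}\mathbf{p}\le\mathbf{p}$ are not even needed for this particular inequality — only positivity of $\mathbf{p}$ and the $i$-th coordinate inequality $m_{ii}p_i \le (\mathbf{M}\mathbf{p})_i \le p_i$. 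I would double-check whether the intended statement is really $\trace(\mathbf{M}) \le (q+1)n$ or a sharper $q$-dependent bound on the \emph{total} sum of entries (not just the diagonal), in which case the column/row regrouping above is the right tool and the obstacle genuinely is the $\mathbf{p}$-weighting, handled by noting the argument can be applied after rescaling rows and columns so that $\mathbf{p}$ becomes the all-ones vector, i.e. replacing $m_{ij}$ by $m_{ij}p_j/p_i$, which preserves non-negativity, maps the constraint to $\mathbf{M}'\mathbf{1}\le\mathbf{1}$, changes approximate symmetry to $m'_{ij} \le q\,(p_j/p_i)^2 m'_{ji}$ — which does \emph{not} preserve the constant, so the rescaling trick must be applied to the symmetrized quantity $m_{ij}p_j$ instead, and that is exactly the calculation I would grind through in the final write-up.
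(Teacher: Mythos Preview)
You have misread the notation: in this paper $\trace(\mathbf{M})$ denotes the sum of \emph{all} entries of $\mathbf{M}$, not the diagonal sum (the paper says so explicitly just before Claim~\ref{matrixsumclaim}). So your ``cleanest complete argument'' --- that each $m_{ii}\le 1$ and hence the diagonal sum is at most $n\le (q+1)n$ --- proves the wrong thing. The actual claim is $\sum_{i,j} m_{ij}\le (q+1)n$, and for this the $q$-approximate symmetry is genuinely needed.

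You did, in your last paragraph, consider the possibility that the total entry sum is meant, and you correctly isolated the obstacle: the constraint $\mathbf{M}\mathbf{p}\le\mathbf{p}$ is $\mathbf{p}$-weighted, and naive rescalings destroy the approximate-symmetry constant. But you did not find the way past it. The missing idea is to \emph{order the indices by $p$-value}. If $p_1\ge p_2\ge\cdots\ge p_n$, then for each $i$ the $i$-th row constraint $\sum_j m_{ij}p_j\le p_i$ restricted to $j\le i$ (where $p_j\ge p_i$) gives $\sum_{j\le i} m_{ij}\le 1$; summing over $i$ bounds the upper-triangular total by $n$. Approximate symmetry then bounds the strictly lower-triangular total by $q$ times the strictly upper-triangular total, hence by $qn$, and the two pieces add to at most $(q+1)n$. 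The paper phrases this same argument as an induction on $n$, peeling off the index with the smallest $p$-value: the last row then has unweighted sum at most $1$, the last column at most $q$ by symmetry, and the remaining $(n-1)\times(n-1)$ block is handled by the inductive hypothesis.
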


\myparagraph{Signal-strength and robustness.}
A \emph{$\delta$-signal} set is one where the
affectance of any link is at most $1/\delta$.
A set is SINR-feasible iff it is a 1-signal set.
Let $OPT_\delta^P = OPT_\delta^P(L)$ be the maximum number of links in a
$\delta$-signal set given a power assignment $P$ (and let $OPT_\delta^{P,B} = OPT_\delta^{P,B}(L)$ be the 
corresponding maximum for bi-directional links). We drop $\delta$, $B$ and
$P$ when understood from context.

The following results are essential.

\begin{proposition}[\cite{FKRV09},Prop.~3]
Let $L$ be set of links and $P$ be a power assignment
for $L$.
Then, for any $\tau \le \delta$, $OPT^P_\delta(L) \ge \frac{OPT^P_\tau(L)}{4 \delta/\tau} \quad \textrm{and}\quad
OPT_\delta^{P,B}(L) \ge \frac{OPT_\tau^{P,B}(L)}{8\delta/\tau}$. 
This is constructive.
\label{prop:signal-length}
\end{proposition}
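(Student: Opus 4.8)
The plan is to prove the uni-directional statement by a pigeonhole/partitioning argument, and then adapt it for the bi-directional case with a constant-factor loss.

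\medskip

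\noindent\textbf{Setup.} Let $S$ be an optimal $\tau$-signal set, so $|S| = OPT^P_\tau(L)$ and $a^P_S(v) \le 1/\tau$ for every $\ell_v \in S$. We want to extract a large subset $S' \subseteq S$ that is a $\delta$-signal set, i.e.\ $a^P_{S'}(v) \le 1/\delta$ for all $\ell_v \in S'$. Writing $k = \lceil \delta/\tau \rceil$, the idea is to partition $S$ into roughly $4k$ (resp.\ $8k$ in the bi-directional case) groups in such a way that each group has the required stronger signal, and then take the largest group.

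\medskip

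\noindent\textbf{Main argument.} First I would observe that it suffices to reduce the \emph{received} affectance $a_{S'}(v)$ on each retained link by a factor of about $\delta/\tau$. The standard trick (this is essentially the content of \cite{FKRV09}) is a two-stage split. Stage one: since $\sum_{v \in S} a_S(v) = \trace(\affm_S) \le |S|/\tau$ where $\affm_S$ is the affectance matrix restricted to $S$, Claim~\ref{matrixsumclaim} (applied with $\gamma = 1/\tau$, $\lambda = 2$) gives a subset $S_1 \subseteq S$ with $|S_1| \ge |S|/2$ on which each link $\ell_v$ has $a_{S_1}(v) \le a_S(v)$ bounded, but more importantly we also need control on the affectance \emph{caused}; using the claim on columns simultaneously and intersecting, we keep $\ge |S|/2$ links that are ``good'' in both the row and column sense, each with total row-sum and column-sum at most $2/\tau$. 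Stage two: partition this $S_1$ into $2k$ classes by a greedy/coloring argument so that within each class, each link is affected by at most a $1/(2k)$-fraction of its original affectance budget, hence $a(v) \le 2/\tau \cdot \frac{1}{2k} \le 1/\delta$. Concretely, order the links and assign each to the class minimizing incremental affectance; a counting argument shows some class works, and the largest class has size $\ge |S_1|/(2k) \ge |S|/(4k) \ge \frac{OPT^P_\tau(L)}{4\delta/\tau}$. This subset is $\delta$-signal, so $OPT^P_\delta(L) \ge |S'| \ge \frac{OPT^P_\tau(L)}{4\delta/\tau}$. The construction is clearly polynomial-time — a single pass of greedy assignment — which gives the constructivity claim.

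\medskip

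\noindent\textbf{Bi-directional case.} For bi-directional links the only change is that the distance $d_{wv}$ is the minimum over the four sender/receiver pairings, which makes the affectance matrix $\affm_S$ more symmetric but does not otherwise affect the counting; however, each link now behaves as both a sender and a receiver, so one must control affectance in two roles rather than one, effectively doubling the number of constraints in the partition step. This costs an extra factor of $2$, yielding $2 \cdot 4k = 8k$ classes and the bound $OPT_\delta^{P,B}(L) \ge \frac{OPT_\tau^{P,B}(L)}{8\delta/\tau}$.

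\medskip

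\noindent\textbf{Main obstacle.} The delicate point is the second-stage partitioning: naively bounding only the total affectance does not immediately give a class where \emph{every} link simultaneously has small affectance — one needs the averaging to hold per-link. The resolution is to first pass to the subset where both $a_S(v)$ and $a_v(S)$ are $O(1/\tau)$ (via Claim~\ref{matrixsumclaim} applied to rows and columns), and only then run the greedy class assignment, so that each link's ``debt'' to any single class is controllable. Getting the constants to come out as exactly $4$ and $8$ — rather than merely $O(1)$ — requires being careful that the two halving steps (row/column pruning, then class selection) compose cleanly, but since the statement cites \cite{FKRV09} these constants are known to be achievable.
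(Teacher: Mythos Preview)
The paper does not prove this proposition; it is quoted from \cite{FKRV09} (their Proposition~3) and used as a black box throughout. So there is no in-paper argument to compare yours against.

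On the merits of your sketch: Stage~2 has a genuine gap. You claim a greedy assignment into $2k$ classes yields a partition in which ``within each class, each link is affected by at most a $1/(2k)$-fraction of its original affectance budget,'' but then retreat to ``a counting argument shows some class works'' --- these two statements are in tension, and neither is actually justified. The difficulty is precisely the one you flag in your ``Main obstacle'' paragraph and do not resolve: assigning $\ell_v$ to the class minimizing the affectance it \emph{currently} receives bounds the affectance on $v$ from links already in that class at assignment time, but gives no control over the affectance from links placed into the same class afterwards. Your Stage~1 restriction to links with $a_v(S_1)\le 2/\tau$ bounds the \emph{total} outgoing affectance of each link, not what any individual later link does to $v$, so it does not close the hole. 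As written, you have not shown that any class is $\delta$-signal, let alone a large one.

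A route that does work and gives the factor $4\delta/\tau$ (and makes Stage~1 unnecessary): color the $\tau$-signal set $S$ uniformly at random into $m=\lceil 2\delta/\tau\rceil$ classes; for $\ell_v$ landing in class $C_t$ one has $\mathbb{E}[a_{C_t}(v)]\le a_S(v)/m\le 1/(2\delta)$, so by Markov at most half the links have in-class affectance exceeding $1/\delta$. Deleting those from each class leaves every class $\delta$-signal with at least $|S|/2$ links surviving overall, and the largest surviving class has size at least $|S|/(2m)$. Derandomize by conditional expectations for the constructive claim. Your bi-directional explanation (``two roles'') is also off: in the bi-directional model there is still a single affectance $a_w(v)$, just computed with the symmetric distance, so the extra factor of~$2$ must arise elsewhere in the \cite{FKRV09} argument, not from doubling constraints as you describe.
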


\begin{lemma}[\cite{us:esa09full}]
Let $\ell_u, \ell_v$ be links in a $q^\alpha$-signal set.
Then, $d_{uv} \cdot d_{vu} \ge q^2 \cdot \ell_u \ell_v$. 
\label{lem:ind-separation}
\end{lemma}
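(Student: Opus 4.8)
The plan is to exploit the symmetry that appears when one multiplies the two "cross" affectances $a_u(v)$ and $a_v(u)$ together: the power ratios cancel, leaving a quantity that depends only on lengths and inter-link distances.

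First I would record the consequence of the $q^\alpha$-signal hypothesis at the level of single pairs. If $S$ is a $q^\alpha$-signal set containing $\ell_u$ and $\ell_v$, then $a_S(v) \le 1/q^\alpha$ and $a_S(u) \le 1/q^\alpha$; since affectance is a sum of non-negative terms and $\ell_u, \ell_v \in S$, this gives the pairwise bounds $a_u(v) \le 1/q^\alpha$ and $a_v(u) \le 1/q^\alpha$, hence $a_u(v)\cdot a_v(u) \le 1/q^{2\alpha}$.

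Next I would expand the product using the definition $a^P_w(x) = c_x \frac{P_w}{P_x}\bigl(\tfrac{\ell_x}{d_{wx}}\bigr)^\alpha$. Writing it out for the pair $(u,v)$,
\[
 a_u(v)\cdot a_v(u) = c_u c_v \cdot \frac{P_u}{P_v}\cdot\frac{P_v}{P_u}\cdot\left(\frac{\ell_v}{d_{uv}}\right)^{\alpha}\left(\frac{\ell_u}{d_{vu}}\right)^{\alpha}
 = c_u c_v \left(\frac{\ell_u\ell_v}{d_{uv}\,d_{vu}}\right)^{\alpha}.
\]
Since $c_x = \beta/(1-\beta N\ell_x^\alpha/P_x) \ge \beta \ge 1$ for every link (the denominator is positive and at most $1$ whenever the link is feasible at all), we get the lower bound $a_u(v)\cdot a_v(u) \ge \bigl(\tfrac{\ell_u\ell_v}{d_{uv}d_{vu}}\bigr)^{\alpha}$.

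Finally I would chain the two bounds: $\bigl(\tfrac{\ell_u\ell_v}{d_{uv}d_{vu}}\bigr)^{\alpha} \le a_u(v)\cdot a_v(u) \le 1/q^{2\alpha}$, take $\alpha$-th roots (valid since $\alpha>0$ and all quantities are positive), and rearrange to obtain $d_{uv}\,d_{vu} \ge q^2\,\ell_u\ell_v$. There is essentially no serious obstacle here; the only points needing a moment's care are that a single pairwise affectance is dominated by the total affectance (so the signal condition applies) and that $c_x \ge 1$, both of which are immediate from the definitions in Section~\ref{sec:model}. In the bi-directional case the identical argument works verbatim, since replacing $d_{uv}$ and $d_{vu}$ by the symmetrized minimum distances only strengthens the upper bounds on the affectances.
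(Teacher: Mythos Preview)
The paper does not supply its own proof of this lemma; it is simply quoted from \cite{us:esa09full}. Your argument is correct and is essentially the standard one: multiply the two pairwise affectances so the power ratios cancel, use $c_u,c_v\ge\beta\ge 1$, and combine with the $q^\alpha$-signal bound on each affectance to get $\bigl(\ell_u\ell_v/(d_{uv}d_{vu})\bigr)^\alpha\le q^{-2\alpha}$.
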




\section{Fixed Power in General Metrics.}

In this section, we will deal with the \prob{$\calP$-Capacity} problem
for \emph{fixed} power assignments.  The most widely studied fixed power
assignments are \emph{oblivious}, 
where the power assigned to a link depends only on the length
of the link. Our results are not restricted to oblivious power, but we will
see that they do apply for popular oblivious assignments considered in the 
literature, such as $\calU, \calM$ and linear power (where $P_v = \ell_v^{\alpha}$).

We use the simple greedy algorithm \alg{C} described in Fig.~\ref{alg1fig}
for \prob{$\calP$-Capacity}.
Algorithm \alg{C}
is closely related to
Algorithm \alg{A}(c) in \cite{HW09}, differing primarily in that
it considers also the outgoing affectances from the
given link $\ell_v$. Also, the last line was added to ensure
feasibility in general metrics; as shown in \cite{HW09}, it is not
needed in $\mathbf{R^2}$ when $\alpha > 2$.

\begin{figure}[htbp]
\begin{pseudo}
  \textrm{Algorithm} \alg{C}(L,P)
     Suppose the links $L{}={}\{\ell_1,\ell_2,\ldots,\ell_n\}$ 
        are in non-decreasing order of length.  
     Let $\gamma = 1/2$
     $S\leftarrow\emptyset$
     for $v\leftarrow{}1$ to $n$ do
        if $a^P_S(v)+a^P_v(S)<\gamma$
           then add $\ell_v$ to $S$
     Output $X=\{\ell_v\in{}S:a_{S}^P(v)\le{}1\}$
\end{pseudo}
\caption{Algorithm for the capacity problem}
\label{alg1fig}
\end{figure}

\subsection{Constant Factor Approximation for \prob{$\calP$-Capacity}.}
We will show the following.
\begin{theorem}
\label{consfactorthm}
Algorithm \alg{C} approximates \prob{$\calP$-Capacity} within a constant factor for any
length-monotone, sub-linear $\calP$.
\end{theorem}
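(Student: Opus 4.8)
The plan is to show that the output $X$ of Algorithm~\alg{C} satisfies $|X| = \Omega(|O|)$ for an optimal solution $O$ of the $\calP$-capacity problem on $L$, by establishing three facts about the set $S$ built in the main loop. First, $X$ is feasible: each $\ell_v \in X$ has $a^P_S(v) \le 1$ by the last line of the algorithm, so $a^P_X(v) \le a^P_S(v) \le 1$ since $X \subseteq S$ and affectances are non-negative; thus $X$ is a $1$-signal set. Second, $|X| \ge |S|/2$: whenever $\ell_v$ is added, the current partial set $S'$ obeys $a^P_{S'}(v) + a^P_v(S') < \gamma$, and summing this over all $v \in S$ while regrouping the resulting sum over the (unordered) pairs of $S$ gives $\trace(\affm|_S) = \sum_{u,v \in S} a^P_v(u) < \gamma|S| = |S|/2$; Claim~\ref{matrixsumclaim} applied to $\affm|_S$ with $\lambda = 2$ then yields at least $|S|/2$ indices $u$ with $a^P_S(u) \le 1$, which are exactly the elements of $X$. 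Third — and this is the substance — $|S| = \Omega(|O|)$.

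For the third part I would first invoke Proposition~\ref{prop:signal-length} (with $\tau = 1$) to reduce to the case where $O$ is a $\delta$-signal set, for a constant $\delta = \delta(\alpha, \beta)$ fixed later; this costs only a factor $4\delta$ in the final bound. Assuming distinct link lengths (ties are harmless), each $\ell_v \in O \setminus S$ was rejected, so $a^P_{S_v}(v) + a^P_v(S_v) \ge \gamma$, where $S_v$ is the state of $S$ when $\ell_v$ was processed and every link of $S_v$ is shorter than $\ell_v$. Summing over $O \setminus S$, enlarging to all of $O$, and swapping the order of summation gives
\[
 \gamma\,|O \setminus S| \ \le\ \sum_{v \in O}\bigl(a^P_{S_v}(v) + a^P_v(S_v)\bigr) \ \le\ \sum_{w \in S}\Bigl(a^P_w(O_{>w}) + a^P_{O_{>w}}(w)\Bigr),
\]
where $O_{>w} = \{v \in O : \ell_v > \ell_w\}$. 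The heart of the proof is a \emph{red--blue lemma} valid in an arbitrary metric and for every $\alpha > 0$: for any link $\ell$ and any $\delta$-signal set $R$ all of whose links are longer than $\ell$, one has $a^P_\ell(R) + a^P_R(\ell) \le C$ for a constant $C = C(\alpha, \beta, \delta)$. Granting this, since $O_{>w} \subseteq O$ is $\delta$-signal and entirely longer than $\ell_w$, each summand above is at most $C$, so $\gamma|O \setminus S| \le C|S|$, whence $|O| \le |O \cap S| + |O \setminus S| \le (1 + 2C)|S|$. Together with $|X| \ge |S|/2$ and the factor $4\delta$, this gives $|X| = \Omega(|O|)$, which is the theorem.

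I expect the red--blue lemma to be the main obstacle: the version in \cite{GHWW09,HW09} is specific to $\mathbf{R}^2$ with $\alpha > 2$ and rests on Euclidean packing, which is unavailable here. The plan for it is to split $R$ into length classes $R_k = \{v \in R : \ell_v \in [2^k\ell, 2^{k+1}\ell)\}$. Length-monotonicity and sub-linearity, used together with $\ell_v \ge \ell$, give $a^P_v(\ell) \le c_\ell(\ell_v/d_{v\ell})^\alpha$ and $a^P_\ell(v) \le c_v(\ell_v/d_{\ell v})^\alpha$, so it suffices to bound $\sum_{v \in R}(\ell_v/d_{v\ell})^\alpha$ and $\sum_{v \in R}(\ell_v/d_{\ell v})^\alpha$. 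Within a class $R_k$ I would further partition the links by the dyadic shell holding $d_{v\ell}$ (respectively $d_{\ell v}$) around $r_\ell$ (respectively $s_\ell$); by Lemma~\ref{lem:ind-separation} applied inside the $\delta$-signal set $R$, plus the triangle inequality, any two links of $R_k$ whose relevant endpoints lie in a common ball of radius small relative to $2^k\ell$ are forced so close that each affects the other by at least a fixed constant — here sub-linearity is precisely what ensures that even the shorter of two links keeps enough power density for this — so there can be only $O(1)$ of them once $\delta$ is a large enough constant, while the links in the farther shells contribute a geometrically decaying total. Summing over shells and over $k \ge 0$ yields the constant bound. The delicate point throughout is that only the triangle inequality and the separation of Lemma~\ref{lem:ind-separation} may be used, never any volume argument; this is what makes the red--blue lemma, and hence the theorem, survive in all metrics and for every $\alpha > 0$, at the price of a constant that grows with $\alpha$.
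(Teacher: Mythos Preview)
Your first two steps --- feasibility of $X$, and $|X|\ge|S|/2$ via Claim~\ref{matrixsumclaim} --- are correct and coincide with the paper. The gap is in the third step: the red--blue lemma you state is \emph{false} in general metrics. Take $\ell$ of length $1$ and $R=\{v\}$ with $\ell_v=2$ and $d(s_v,r_\ell)=D$; the singleton $R$ is trivially a $\delta$-signal set for every $\delta$, yet under uniform power $a^{\calU}_v(\ell)=c_\ell(1/D)^\alpha\to\infty$ as $D\to 0$. Your sketch breaks precisely at ``so there can be only $O(1)$ of them'': Lemma~\ref{lem:ind-separation} bounds the \emph{number} of links of $R_k$ with senders in a small ball around $r_\ell$, but not their total contribution, since already one such link can contribute arbitrarily much. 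No shell decomposition repairs this in a general metric, because nothing ties $d_{v\ell}$ (a distance to the extraneous link $\ell$) to the internal separation of the $\delta$-signal set $R$. Consequently your charging inequality $\sum_{w\in S}\bigl(a_w(O_{>w})+a_{O_{>w}}(w)\bigr)\le C|S|$ has no reason to hold.

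The paper's Red--Blue Lemma is of a different nature and avoids this obstruction. With $B=O\setminus S$ (a large $3^\alpha$-signal set) and $R=S$, it shows that \emph{some} $b\in B$ satisfies
\[
a_{R^-_b}(b)+a_b(R^-_b)\ \le\ 3^\alpha\bigl(a_B(b)+a_b(B)\bigr),
\]
a \emph{relative} bound anchored to the feasibility of $B$, not an absolute one. The proof assigns to each $r\in R$ at most two ``guards'' in $B$ (the longer blue links whose sender, respectively receiver, is nearest the relevant node of $r$); after deleting all guards, every surviving $b$ has $a_r(b)\le 3^\alpha a_{x}(b)$ for some guard $x$ of $r$, so the red affectance on $b$ is dominated by blue affectance on $b$, which is small since $b\in B$. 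This is then used by contradiction: if $|O\setminus S|>4|S|$, such a surviving $b$ exists and would have passed the algorithm's test $a_{S^-_b}(b)+a_b(S^-_b)<\gamma$, contradicting $b\notin S$. Note the reversal of roles compared with your setup --- one looks at a link of the \emph{optimal} set and bounds affectance from \emph{shorter} links of $S$, rather than at a link of $S$ against longer optimal links --- and that reversal is exactly what lets feasibility of $O$ do the work that your absolute bound cannot.
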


The feasibility of the solution output by \alg{C} is evident by virtue of the last
line of the algorithm.
We proceed to show that it gives constant-factor approximation.
We first compare the final output to the set $S$.

\begin{lemma}
Let $S$ and $X$ be the sets created by the algorithm $\alg{C}$ on a
given link set. Then, $|X| \ge |S| (1 - \gamma)$.
\label{lem:bound-on-x}
\end{lemma}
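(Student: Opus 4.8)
The plan is to recognize $X$ as the set of rows of the affectance matrix of $S$ having row-sum at most $1$, to bound the total sum of that matrix using the greedy acceptance rule, and then to apply Claim~\ref{matrixsumclaim}.

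First I would write $\affm_S = (a^P_{uv})_{u,v\in S}$ for the affectance matrix restricted to $S$. By definition of $X$ in Algorithm~\alg{C}, $X = \{\ell_v \in S : a^P_S(v) \le 1\}$ is exactly the set of links whose corresponding row in $\affm_S$ has sum $\trace(\mathbf{r}) = a^P_S(v) \le 1$. Also recall $\trace(\affm_S) = \sum_{v\in S} a^P_S(v) = \sum_{v\in S} a^P_v(S)$, since both double sums range over all ordered pairs of links in $S$.

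The crux is to show $\trace(\affm_S) < \gamma |S|$. For each $\ell_v \in S$, let $S_{<v}$ be the contents of $S$ at the moment $\ell_v$ was added; the test passed at that moment reads $a^P_{S_{<v}}(v) + a^P_v(S_{<v}) < \gamma$. Summing this inequality over all $\ell_v \in S$, each unordered pair $\{\ell_w,\ell_v\}\subseteq S$ is accounted for exactly once — when the later of the two in the processing order is considered — and contributes the term $a^P_w(v) + a^P_v(w)$; since the diagonal entries $a^P_v(v)$ vanish, the sum of the left-hand sides equals $\sum_{w\ne v} a^P_w(v) = \trace(\affm_S)$. Hence $\trace(\affm_S) < \gamma|S|$.

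Finally I would invoke Claim~\ref{matrixsumclaim} with $\mathbf{M} = \affm_S$ (whose entries are non-negative), the claim's parameter $\gamma$ set to ours, and $\lambda = 1/\gamma = 2$: at least $(1 - 1/\lambda)|S| = (1-\gamma)|S|$ rows of $\affm_S$ have sum at most $\gamma\lambda = 1$. These rows are precisely the links of $X$, giving $|X| \ge (1-\gamma)|S|$. The one point requiring care is the bookkeeping in the summation step: one must check that the incremental greedy conditions aggregate \emph{exactly} to $\trace(\affm_S)$ (not merely up to a constant), which hinges on the fact that when a link is added the current set contains all previously accepted links and none of the later ones.
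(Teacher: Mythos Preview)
Your proof is correct and follows the same approach as the paper: bound $\trace(\affm_S)$ by $\gamma|S|$ using the greedy acceptance criterion, then apply Claim~\ref{matrixsumclaim} with $\lambda = 1/\gamma$. The paper's proof compresses the bookkeeping step into the single phrase ``by the algorithm construction,'' whereas you spell out explicitly how summing the incremental acceptance conditions over all accepted links recovers the full trace; your detailed pair-counting argument is correct and clarifies that point.
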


\begin{proof}
Observe that by the algorithm construction, the sum 
$\trace(\affm) = \sum_{v,w \in S} a_v(w) = \sum_{v \in S} a_S(v)$ is upper bounded by $\gamma |S|$.
Then the Lemma follows from Claim \ref{matrixsumclaim} by plugging in $\lambda = \frac{1}{\gamma}$.
\end{proof}

The main task is to relate the size of the set $S$ to the optimal solution.
For this, we extend the ``blue dominant centers'' lemma of
\cite{GHWW09} (and ``blue-shadowed lemma'' of \cite{HW09}) to
arbitrary metric spaces and length-monotone sub-linear power assignments.  

For a set $S$ and link $\ell_v$, let $S_v^- = \{\ell_w \in S :
\ell_w \leq \ell_v \}$ and $S_v^+ = \{\ell_w \in S : \ell_w \geq \ell_v \}$. 

\begin{lemma}[Red-Blue Lemma]
\label{blueshadowed}
Let $R$ and $B$ be disjoint sets of links, 
referred to as red and blue links in a length-monotone, sub-linear power metric.
If $|B| > 4 |R|$ and $B$ is a $3^\alpha$-signal set, 
then there is a blue link $\ell_b$ such that
$a_{R^-_b}(b) + a_b{(R^-_b)} \le 3^{\alpha} (a_{B}(b) + a_b(B))$.
\label{lem:rb-links}
\end{lemma}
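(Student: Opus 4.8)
The plan is to prove the contrapositive in quantitative form: the set $\hat B$ of \emph{bad} blue links --- those $\ell_b\in B$ with $a_{R^-_b}(b)+a_b(R^-_b) > 3^\alpha\bigl(a_B(b)+a_b(B)\bigr)$ --- satisfies $|\hat B|\le 4|R|$, so that $|B|>4|R|$ forces the existence of a good blue link. Since $\hat B\subseteq B$ is again a $3^\alpha$-signal set, and replacing $B$ by $\hat B$ only decreases $a_B(b)$ and $a_b(B)$ while leaving every $R^-_b$ unchanged, summing the defining inequality of a bad link over $b\in\hat B$ yields
\[
\sum_{b\in\hat B}\bigl(a_{R^-_b}(b)+a_b(R^-_b)\bigr)\ >\ 3^\alpha\sum_{b\in\hat B}\bigl(a_{\hat B}(b)+a_b(\hat B)\bigr)\ =\ 2\cdot 3^\alpha\,\trace(\affm_{\hat B}) .
\]
Re-indexing the left side by the red endpoint gives $\sum_{b\in\hat B}\bigl(a_{R^-_b}(b)+a_b(R^-_b)\bigr)=\sum_{r\in R}\bigl(a_r(\hat B^+_r)+a_{\hat B^+_r}(r)\bigr)$, where $\hat B^+_r=\{b\in\hat B:\ell_b\ge\ell_r\}$ is a $3^\alpha$-signal set of links no shorter than $\ell_r$. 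Summing over all of $B$ rather than $\hat B$ would not help, since $\trace(\affm_B)$ can be small while the red--blue interaction is large; the role of $\hat B$, and of the geometry below, is to absorb each red link's two-sided interaction with the longer bad-blue links into the internal affectance of $\hat B^+_r$ with only a bounded blow-up, thereby violating the displayed inequality.

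For the geometry, I would use that $\calP$ is length-monotone and sub-linear: for $\ell_b\ge\ell_r$ one has $P_r\le P_b$ and $P_b/P_r\le(\ell_b/\ell_r)^\alpha$, whence
\[
a_r(b)\ \le\ c_b\Bigl(\tfrac{\ell_b}{d_{rb}}\Bigr)^\alpha\qquad\text{and}\qquad a_b(r)\ \le\ c_r\Bigl(\tfrac{\ell_b}{d_{br}}\Bigr)^\alpha ,
\]
so a large two-sided interaction of $r$ with a longer $b$ forces $d(s_r,r_b)$ and $d(s_b,r_r)$ both to be $O(\ell_b)$. If one red link $r$ has large two-sided interaction with two longer bad-blue links $b_1,b_2$ (say $\ell_{b_1}\le\ell_{b_2}$), routing the triangle inequality through $s_r$ and $r_r$ bounds $d_{b_1b_2}$ and $d_{b_2b_1}$ by $O(\ell_{b_2})$, while Lemma~\ref{lem:ind-separation} applied to $\{b_1,b_2\}\subseteq\hat B$ gives $d_{b_1b_2}\,d_{b_2b_1}\ge 9\,\ell_{b_1}\ell_{b_2}$ (equivalently, $a_{b_1}(b_2)\,a_{b_2}(b_1)=c_{b_1}c_{b_2}\bigl(\ell_{b_1}\ell_{b_2}/(d_{b_1b_2}d_{b_2b_1})\bigr)^\alpha$ together with the signal property). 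Comparing, $\ell_{b_2}/\ell_{b_1}$ must exceed a fixed constant; hence the bad-blue links interacting heavily with a given $r$ are length-separated, and a charging argument that tracks the affectance each contributes --- using also that a bad link is charged to \emph{some} red link of length at most its own --- bounds the number of bad-blue links attributable to each red link, giving $|\hat B|\le 4|R|$.

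The step I expect to be the main obstacle is making this charging go through in a completely arbitrary metric with the right constants. In the Euclidean ``blue dominant centers''/``blue-shadowed'' arguments of \cite{GHWW09,HW09} one invokes a packing bound to say few links lie near $s_r$; here the interacting blue links may be at wildly different scales, so one cannot argue one-sidedly --- the remedy is to carry the incoming affectance $a_{\hat B^+_r}(r)$ and the outgoing affectance $a_r(\hat B^+_r)$ together and feed them into the \emph{product} separation $d_{uv}d_{vu}\ge q^2\ell_u\ell_v$, which is the two-sided quantity the metric actually controls (this is also why Algorithm~\alg{C}, and this lemma, account for outgoing as well as incoming affectance). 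Two further technical points: the per-red-link estimates must be aggregated so as to compare correctly against $\trace(\affm_{\hat B})$ over the overlapping sets $\hat B^+_r$; and the noise-dependent constants $c_v$ must be controlled --- which they are, since $\hat B$ lies in a $3^\alpha$-signal set, so each $c_v$ is within a constant factor of $\beta$ --- ensuring every comparison above holds with constants independent of $N$.
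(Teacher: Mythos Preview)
Your proposal is a sketch with a genuine gap at exactly the point you yourself flag as the ``main obstacle'', and the paper's proof takes a fundamentally different (and more direct) route.

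\textbf{Where your argument stalls.} After the aggregation step you have
\[
\sum_{r\in R}\bigl(a_r(\hat B^+_r)+a_{\hat B^+_r}(r)\bigr)\ >\ 2\cdot 3^\alpha\,\trace(\affm_{\hat B}),
\]
and you then want a charging argument bounding the number of bad blue links per red link. Two concrete problems:
\begin{itemize}
\item Your geometric claim only yields that any two blue links in $\hat B^+_r$ which both ``interact heavily'' with $r$ have lengths separated by a fixed constant factor. That does \emph{not} bound their number: in a general metric you can have an unbounded chain of such links at geometrically increasing lengths, all within $O(\ell_b)$ of $s_r,r_r$. Nothing in the argument rules this out, and this is exactly the phenomenon that defeats one-sided packing arguments outside doubling metrics.
\item There is no mechanism for assigning each bad blue link $b$ to a \emph{single} red link. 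The badness of $b$ is an inequality about a \emph{sum} over $R^-_b$; no individual $r$ need contribute more than an arbitrarily small fraction. So the ``charging'' is undefined, and the aggregation over the overlapping sets $\hat B^+_r$ --- which you note must be handled --- has no controlled multiplicity.
\end{itemize}
The displayed inequality alone also cannot close the argument: $\trace(\affm_{\hat B})$ is only \emph{upper}-bounded by $|\hat B|/3^\alpha$ via the signal property, which goes the wrong way for extracting a lower bound on $|\hat B|$ from the left side.

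\textbf{What the paper does instead.} The paper avoids aggregation entirely. It proves the two halves $a_{R^-_b}(b)\le 3^\alpha a_B(b)$ and $a_b(R^-_b)\le 3^\alpha a_b(B)$ separately (Lemmas~\ref{rblem1} and~\ref{rblem2}), each holding \emph{pointwise} for all $b$ outside a set of at most $2|R|$ guards. For the first half, process each $\ell_r\in R$ and remove from $B$ the blue link in $B^+_r$ whose sender is closest to $s_r$ and the one whose receiver is closest to $s_r$ (at most two links). For any surviving $\ell_b$ and any $\ell_r\in R^-_b$, a short triangle-inequality computation together with Lemma~\ref{lem:ind-separation} shows $d_{xb}\le 3d_{rb}$ for the sender-guard $\ell_x$, whence $a_r(b)\le 3^\alpha a_x(b)$ by length-monotonicity. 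Summing over $r$ bounds $a_{R^-_b}(b)$ by $3^\alpha a_B(b)$, since the guards are themselves blue. The second half is symmetric, with guards chosen around $r_r$ and sub-linearity replacing length-monotonicity. The union of the two good sets has size at least $|B|-4|R|>0$.

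The key idea you are missing is this guard mechanism: rather than arguing about how many blue links a red link can affect, one shows that \emph{every} red-to-blue affectance is dominated (up to $3^\alpha$) by the affectance of a single designated blue guard, so the bound holds link-by-link without any counting or charging.
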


This lemma is a consequence of the following two symmetric lemmas.

\begin{lemma}
\label{rblem1}
Let $R$ and $B$ be disjoint sets of links, 
referred to as red and blue links in a length-monotone power metric.
If $|B| > 2 |R|$ and $B$ is a $3^\alpha$-signal set, 
then there is a set $B' \subseteq B$ of size at least $|B| - 2|R|$
such that 
for all $\ell_b \in B'$,
$a_{R^-_b}(b) \le 3^{\alpha} a_{B}(b)$.
\end{lemma}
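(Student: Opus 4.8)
The plan is to prove Lemma~\ref{rblem1} by a charging/pigeonhole argument on the red links, using length-monotonicity to control the ratio of incoming affectances and Lemma~\ref{lem:ind-separation} to exploit the $3^\alpha$-signal property of $B$. For a blue link $\ell_b$, the quantity $a_{R^-_b}(b)=\sum_{\ell_r\in R, \ell_r\le\ell_b} a_r(b)$ measures the interference at $b$ coming from shorter-or-equal red links. The key observation is that for each red link $\ell_r$, the set of blue links $\ell_b$ with $\ell_b\ge\ell_r$ that receive ``a lot'' of affectance from $\ell_r$ relative to their total blue-affectance $a_B(b)$ must be small — in fact, I expect to show that for each $\ell_r$ there are at most $2$ such ``bad'' blue links. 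Summing over all red links then gives at most $2|R|$ bad blue links, and setting $B'$ to be the complement yields $|B'|\ge |B|-2|R|$.

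First I would fix a red link $\ell_r$ and consider the blue links $\ell_b\in B$ with $\ell_b\ge \ell_r$ for which $a_r(b) > 3^\alpha a_B(b)$. I want to argue there can be at most two such links. Suppose $\ell_b,\ell_{b'}$ are two such blue links with, say, $\ell_b\le\ell_{b'}$. Then $a_r(b)>3^\alpha a_B(b)\ge 3^\alpha a_{b'}(b)$ and similarly $a_r(b')>3^\alpha a_b(b')$. Expanding the affectances: $a_r(b) = c_b \frac{P_r}{P_b}(\ell_b/d_{rb})^\alpha$, and since the power assignment is length-monotone and $\ell_r\le\ell_b$, we have $P_r\le P_b$, so $a_r(b)\le c_b(\ell_b/d_{rb})^\alpha$. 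Meanwhile $a_{b'}(b) = c_b\frac{P_{b'}}{P_b}(\ell_b/d_{b'b})^\alpha \ge c_b(\ell_b/d_{b'b})^\alpha$ using $P_{b'}\ge P_b$ (as $\ell_{b'}\ge\ell_b$). Combining, $a_r(b)>3^\alpha a_{b'}(b)$ forces $d_{b'b} > 3 d_{rb}$, i.e.\ $\ell_r$'s sender is much closer to $b$'s receiver than $b'$'s sender is. Symmetrically, from $a_r(b')>3^\alpha a_b(b')$ I get $d_{bb'} > 3 d_{rb'}$. Now I invoke Lemma~\ref{lem:ind-separation} on the $3^\alpha$-signal blue pair $\ell_b,\ell_{b'}$: $d_{bb'}d_{b'b}\ge 9\ell_b\ell_{b'}$. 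I would like these geometric facts, together with the triangle inequality relating $d_{rb}$, $d_{rb'}$, $d_{bb'}$, $d_{b'b}$ and the link lengths, to produce a contradiction if there were three (or more) such blue links — essentially, three blue links all ``dominated'' by the same red sender at their receivers would have to be mutually too close, violating the separation guarantee.

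The main obstacle I anticipate is making the geometric packing argument airtight in a general metric: in the Euclidean plane one would use that the senders of the bad blue links cluster near $s_r$ while the receivers must be mutually far apart, but in an arbitrary metric I must work purely through the triangle inequality and the asymmetric distances $d_{vw}=d(s_v,r_w)$, being careful that $d_{rb}$ small does not literally mean $s_r$ near $r_b$ in a way that bounds $d(r_b,r_{b'})$ directly. The fix is to route through the sender–sender and receiver–receiver distances explicitly: bound $d(s_b, s_{b'})$ or $d(r_b,r_{b'})$ by chaining $d(s_b,r_b)\le\ell_b$, $d(r_b,s_r)=d_{rb}$, $d(s_r,r_{b'})=d_{rb'}$, $d(r_{b'},s_{b'})\le\ell_{b'}$, and then compare against what Lemma~\ref{lem:ind-separation} demands. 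Once the ``at most two bad blue links per red link'' claim is established, the union bound over $R$ finishes the proof; the hypothesis $|B|>2|R|$ guarantees $B'$ is nonempty. The symmetric statement controlling $a_b(R^-_b)$ (outgoing affectance) is proved identically with the roles of senders and receivers swapped, using sub-linearity where length-monotonicity was used above, and Lemma~\ref{lem:rb-links} then follows by applying both with $R$ and $B$ and taking the union of the two exceptional sets of size $\le 2|R|$ each, which is why the bound there is $4|R|$.
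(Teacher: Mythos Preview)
Your proposal has a genuine gap in the summing step. You define a blue link $\ell_b$ to be ``bad'' for a red link $\ell_r$ when $a_r(b)>3^\alpha a_B(b)$, and aim to show each $\ell_r$ has at most two bad blue links. But even granting that claim, a blue link $\ell_b$ that is ``good'' for every $\ell_r\in R^-_b$ only satisfies $a_r(b)\le 3^\alpha a_B(b)$ for each such $r$ individually; summing over $R^-_b$ gives at best $a_{R^-_b}(b)\le |R^-_b|\cdot 3^\alpha a_B(b)$, not $a_{R^-_b}(b)\le 3^\alpha a_B(b)$. The conclusion of the lemma is a bound on a \emph{sum} of red affectances, and your per-term criterion does not add up to that.

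The paper's proof avoids this by charging each term $a_r(b)$ not to the whole of $a_B(b)$ but to a distinct piece of it. For each $\ell_r\in R$ it selects at most two ``guard'' links $X_r\subseteq B^+_r$ (the blue links in $B^+_r$ whose sender, respectively receiver, is nearest to $s_r$), removes them from the running set $B'$, and then proves that for any surviving $\ell_b\in B'$ and any $\ell_r\in R^-_b$ one has $a_r(b)\le 3^\alpha a_{X_r}(b)$. Because the sets $X_r$ are constructed to be pairwise disjoint subsets of $B$, summing over $r$ yields
\[
a_{R^-_b}(b)\;\le\;3^\alpha\sum_{\ell_r\in R^-_b} a_{X_r}(b)\;\le\;3^\alpha\, a_B(b).
\]
The geometric work (triangle inequality together with Lemma~\ref{lem:ind-separation}) is then used to compare $d_{rb}$ with $d_{xb}$ for a guard $\ell_x$, via the key estimate $d_{rb}\ge d(s_x,s_r)/2$, rather than to bound the number of bad blue links directly. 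Your instincts to use length-monotonicity for the power ratios and Lemma~\ref{lem:ind-separation} for separation are correct; what is missing is this disjoint-guard charging structure that makes the sum telescope into $a_B(b)$.
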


\begin{lemma}
\label{rblem2}
Let $R$ and $B$ be disjoint sets of links, 
referred to as red and blue links in a sub-linear power metric.
If $|B| > 2 |R|$ and $B$ is a $3^\alpha$-signal set, 
then there is a set $B' \subseteq B$ of size at least $|B| - 2|R|$
such that 
for all $\ell_b \in B'$,
$a_{b}(R^-_b) \le 3^{\alpha} a_{b}(B)$.
\end{lemma}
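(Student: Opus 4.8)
The plan is to recast the conclusion as a counting statement. Call a blue link $\ell_b$ \emph{bad} if $a_b(R_b^-) > 3^\alpha a_b(B)$, and set $B' := B \setminus \{\text{bad blue links}\}$. Then $B'$ automatically satisfies the required inequality, so the whole lemma reduces to showing that there are at most $2|R|$ bad blue links. I would prove this by a charging (``dominant centers'') argument in the spirit of the blue-dominant/blue-shadowed lemmas of \cite{GHWW09,HW09}: assign to each bad blue link $\ell_b$ a \emph{witness} red link $w(\ell_b) \in R_b^-$, and then show that no red link is the witness of more than two bad blue links.

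For the charging rule, recall that $a_b(\ell) = c_\ell \cdot \frac{P_b/d(s_b,r_\ell)^\alpha}{P_\ell/\ell^\alpha}$, so among the links $\ell \in R_b^- \cup B$ the ones receiving the most affectance from $\ell_b$ are exactly those whose receiver $r_\ell$ lies closest to $s_b$ once the metric is rescaled by the length-normalized gain $(P_\ell/\ell^\alpha)^{1/\alpha}$; sub-linearity makes this gain non-increasing in length, which is what allows the red and blue contributions to be compared scale by scale. I would choose $w(\ell_b)$ to be a red link in $R_b^-$ that is, in this rescaled sense, at least as dominant for $\ell_b$ as the blue links; the role of the badness inequality is precisely to guarantee that such a red link exists, and the restriction to $R_b^-$ (so $\ell_r \le \ell_b$) together with sub-linearity is what converts the relative inequality into an absolute conclusion. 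Concretely, after translating back through the path-loss law and using length-monotonicity and sub-linearity, the witness property should give that $r_{w(\ell_b)}$ lies within a constant multiple of $\ell_b$ of $s_b$, and closer to $s_b$ than the receiver of any blue link.

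The counting step is then purely geometric and uses only the triangle inequality, which is available in an arbitrary metric. Suppose a red link $\ell_r$ were the witness of three distinct bad blue links $\ell_{b_1}, \ell_{b_2}, \ell_{b_3}$. Each $s_{b_i}$ is close to the common point $r_r$ (scaled by $\ell_{b_i}$) and $\ell_r \le \ell_{b_i}$, so running the triangle inequality through $r_r$ for two of them, say $\ell_{b_1}$ and $\ell_{b_2}$, bounds both $d_{b_1 b_2} = d(s_{b_1},r_{b_2})$ and $d_{b_2 b_1} = d(s_{b_2},r_{b_1})$ above by small multiples of $\ell_{b_1}$ and $\ell_{b_2}$; multiplying these contradicts Lemma~\ref{lem:ind-separation} applied to the $3^\alpha$-signal set $B$, which forces $d_{b_1 b_2} \cdot d_{b_2 b_1} \ge 9\,\ell_{b_1}\ell_{b_2}$. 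The slack in the ``$3$'' of the signal requirement is exactly what leaves room for two, but not three, bad blue links per red witness. Hence at most $2|R|$ blue links are bad, so $|B'| \ge |B| - 2|R|$ and every $\ell_b \in B'$ satisfies $a_b(R_b^-) \le 3^\alpha a_b(B)$. Lemma~\ref{rblem1} is the mirror image, proved the same way with ``affectance caused'' and ``affectance received'' interchanged and length-monotonicity used in place of sub-linearity, and the two together yield the Red--Blue Lemma~\ref{lem:rb-links}.

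The main obstacle is the witness/charging step. The difficulty is that badness is only a \emph{relative} condition and $R$ is an arbitrary link set, not a feasible one: $R_b^-$ may contain many red links, $a_b(R_b^-)$ need not be dominated by any single term, and $a_b(B)$ may be tiny, so one cannot naively deduce ``bad $\Rightarrow$ some red receiver within an absolute constant of $s_b$.'' Making it go through requires comparing the ratios $d(s_b,r_\ell)^\alpha\cdot(P_\ell/\ell^\alpha)$ across $\ell \in R_b^- \cup B$ rather than raw affectances, exploiting $\ell_r \le \ell_b$ via sub-linearity, and carrying the link-dependent constants $c_\ell \ge \beta \ge 1$ (which appear on both sides of the badness inequality) through the estimates; arranging all of this so that it still closes with the signal constant set to $3$ is the delicate part of the proof.
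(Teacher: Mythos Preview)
Your charging runs in the opposite direction from the paper's, and that reversal is where the argument breaks. The paper does \emph{not} try to identify which blue links are bad and then attach a red witness to each. Instead it processes the \emph{red} links: for each $\ell_t \in R$ it removes from the current $B'$ at most two blue ``guards'' in $B_t^+$ --- the one whose receiver is nearest $r_t$ and the one whose sender is nearest $r_t$ --- and $B'$ is whatever survives after all red links have been processed. For any surviving $\ell_b \in B'$ and any single $\ell_t \in R_b^-$, the guard $\ell_z$ of $t$ then satisfies $d_{bz} \le 3\,d_{bt}$ (triangle inequality plus a short contradiction via Lemma~\ref{lem:ind-separation}, which is where the second guard and the $3^\alpha$-signal hypothesis enter), and sub-linearity converts this into $a_b(t) \le 3^\alpha a_b(z)$. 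Summing over $t \in R_b^-$, with the guard sets disjoint and contained in $B$, gives $a_b(R_b^-) \le 3^\alpha a_b(B)$ as a \emph{term-by-term} comparison; no notion of ``bad'' is ever needed, and the bound $|B'|\ge |B|-2|R|$ is immediate from having removed at most two guards per red link.

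Your route, by contrast, needs to extract from the aggregate inequality $a_b(R_b^-) > 3^\alpha a_b(B)$ a single witness $w(\ell_b)\in R_b^-$ whose receiver lies within a fixed multiple of $\ell_b$ of $s_b$, or at least closer to $s_b$ than every blue receiver. Neither follows. The set $R_b^-$ may consist of arbitrarily many red links, each with receiver \emph{farther} from $s_b$ than the nearest blue receiver, whose affectances only in aggregate exceed $3^\alpha a_b(B)$; and since $a_b(B)$ carries no lower bound, badness yields no absolute distance control whatsoever. Sub-linearity and the restriction $\ell_t\le\ell_b$ do not rescue this: they compare powers and lengths, not sums. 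You correctly flag this obstacle in your final paragraph, but it is not a detail to be cleaned up --- it is precisely why the charging has to run from red links to blue guards, so that each red term is matched against one blue term rather than a sum against a sum.
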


We prove Lemma \ref{rblem1} below and relegate proof of the very similar
 Lemma \ref{rblem2} to the appendix. 

\begin{proof}[Proof of Lemma \ref{rblem1}]
For each link $\ell_r \in R$, we will assign a set of ``guards" $X_r \subseteq B$.
For different $\ell_r$, the $X_r$'s will be disjoint and each $X_r$ will be of size at most $2$. 
We claim that each link $\ell_b$ that
remains in $B$ after removing all the guard sets satisfies the lemma.

Here's how we choose the guards. 
We process the links in $R$ in an arbitrary order.
Initially set $B' \leftarrow B$.
For each link $\ell_r \in R$, add to $X_r$ the link $\ell_x \in B^+_r$
with the sender nearest to
$s_r$ (if one exists); also add the link $\ell_y \in B^+_r$ with the
receiver nearest to $s_r$ (if one exists; possibly, $\ell_y = \ell_x$); 
finally, remove the links in $X_r$ from $B'$ and repeat the loop.

\begin{figure}[htbp]
  \begin{center}
    \includegraphics[height=5cm]{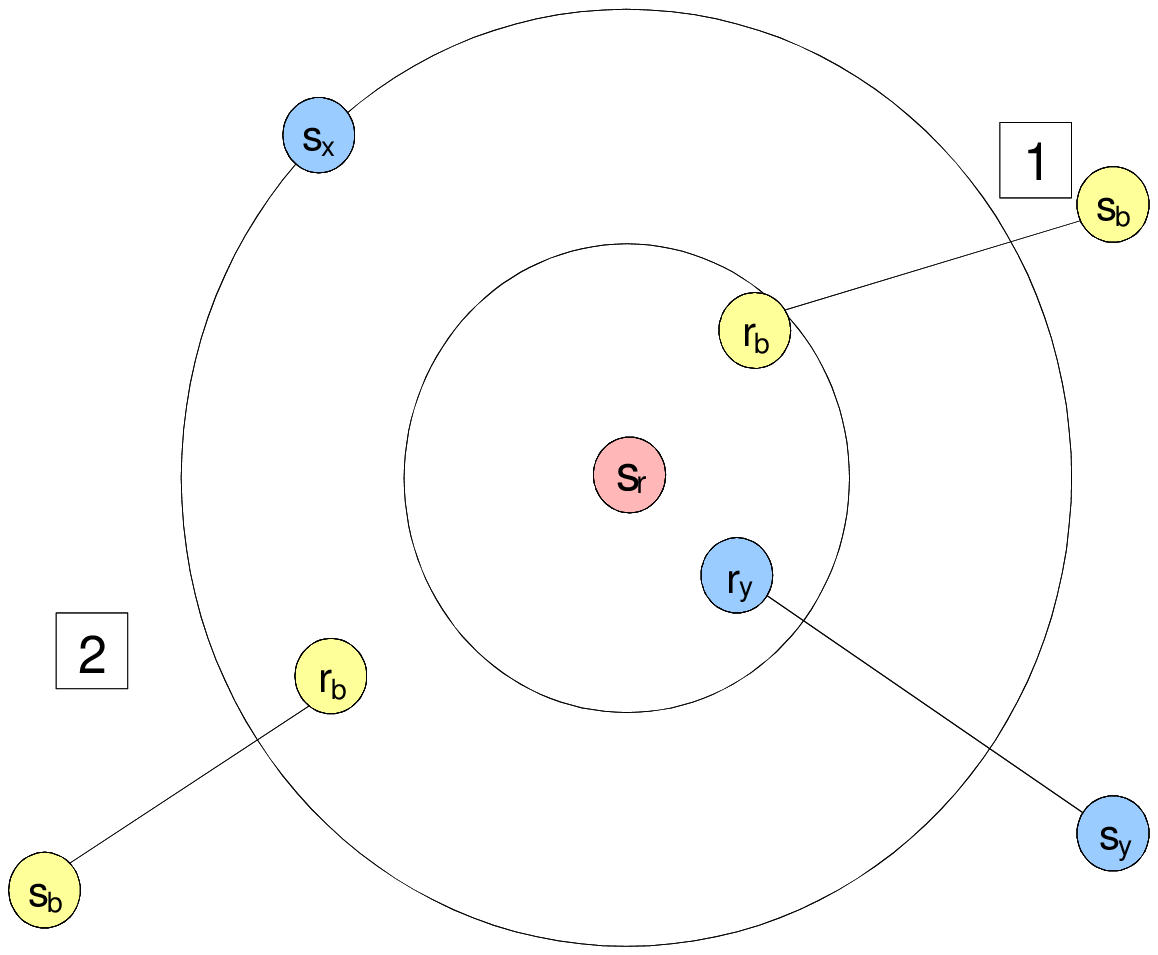}
  \end{center}
\caption{Example configuration of guards and points. Guard $s_x$
  (on left) is the nearest blue sender to red point $s_r$
  (in center) among those of links larger than $\ell_r$. }
\label{fig:red-blue}
\end{figure}

Since $|B| > 2 |R|$, $|B'| \geq |B| - 2|R| > 0$, by construction.
Consider any link $\ell_b \in B'$.
In what follows, we will show that the affectance of any link  
$l_r \in R^-_b$ on $\ell_b$ is comparable to the affectance of one of the guards of $l_r$
on $\ell_b$. Once we recall that the guards are blue, this implies that the overall affectance
on $\ell_b$ from $R^-_b$ is not much worse than that from $B$.

Since $\ell_b \not \in X_r$, $X_r$ is non-empty and contains guards
$\ell_x$ and $\ell_y$ (potentially $\ell_x = l_y$).
Consider any $\ell_r \in R^-_b$.
Let $d$ denote $d(s_{x}, s_r)$ and observe that
since $\ell_b \not \in X_r$, $d(s_b, s_r) \ge d$.

We claim that $d_{rb} \ge d/2$. Before proving the claim, let us see how 
this leads to the proof of the Lemma. If the claim is true, then by the triangular inequality,
$d_{x b} = d(s_{x}, r_b) \le d(s_x, s_r) + d(s_r, r_b) = d +
d_{rb} \le 3 d_{rb}\ .$
Since $\ell_x \geq \ell_r$ and the
power metric is length-monotone, $P_x \geq P_r$.
Thus, 
$ \frac{a_x(b)}{a_r(b)} = \frac{P_x}{P_r}
\left(\frac{d_{rb}}{d_{xb}}\right)^\alpha \ge 3^{-\alpha}\ . $
Summing over all links in $B$,
$ a_{B}(b) \ge \sum_{\ell_r \in R^-_b} a_{X_r}(b) \ge 3^{-\alpha} \sum_{\ell_r \in
  R^-_b} a_r(b) = 3^{-\alpha} a_{R^-_b}(b)\ . $

To prove the claim that $d_{rb} \ge d/2$, let us suppose otherwise for contradiction.
Then, by the triangular inequality,
$ \ell_b = d(s_b,r_b) \ge d(s_b,s_r) - d(r_b,s_r) > d - d/2 = d/2\ . $
Since $\ell_y$ was chosen into $X_r$, its receiver is at least as close to
$s_r$ as $r_b$, that is $d(r_y,s_r) \le d(r_b, s_r) < d/2$,
and its sender is also at least far as $s_{x}$, or
$d(s_y, s_r) \ge d$. So, $\ell_y \ge d(s_y,s_r) - d(s_r,r_y) > d/2$.
Now,
$d(r_{y}, r_b) \le d(r_y, s_r) + d(s_r, r_b) < d$ and
$ d_{y b} \cdot d_{b y} \le (\ell_y + d(r_y,r_b)) \cdot (\ell_b + d(r_y,r_b))
    < (\ell_{y}+d) \cdot (\ell_{b} + d) < 9 \cdot \ell_{y} \ell_b. $
But since $B$ is a $3^\alpha$-signal set,  
this is a contradiction by Lemma \ref{lem:ind-separation}.
Hence, any link $\ell_r$ in $R_b^-$ satisfies $d(s_r, r_b) = d_{rb} \ge d/2$. 
\end{proof}

\begin{lemma}
Let $S$ be the set found by the algorithm and $\tau = 3^{\alpha+1}/2\gamma$.
Then, $|S| \ge OPT_{2 \tau}/10$.
\label{lem:rb-applic}
\end{lemma}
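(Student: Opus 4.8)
The plan is to compare the greedy set $S$ with an optimal $2\tau$-signal set $O$, where $\tau = 3^{\alpha+1}/2\gamma$, using the Red-Blue Lemma (Lemma \ref{lem:rb-links}). First I would observe that it suffices to show $|S| \ge |O|/10$ for $O = OPT_{2\tau}$; note that by Lemma \ref{lem:bound-on-x} the output $X$ is nonempty whenever $S$ is, so $S$ itself is the right object to bound. The key point is that each link of $O$ that is \emph{not} in $S$ was rejected by the algorithm, meaning that at the time it was considered, $a^P_{S'}(v) + a^P_v(S') \ge \gamma$ where $S'$ is the prefix of $S$ of links no longer than $\ell_v$; since the algorithm processes links in nondecreasing length order, $S' = S^-_v$. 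Thus every rejected link of $O$ is ``blocked'' by the shorter links already in $S$.

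Next I would set up the Red-Blue application. Partition (or rather, restrict attention to) $O \setminus S$, call these the blue links $B$, and let $R = S$ be the red links. Suppose for contradiction that $|S| < |O|/10$; then $|B| \ge |O| - |S| > 9|S| > 4|R|$. Since $O$ is a $2\tau$-signal set and $2\tau \ge 3^{\alpha+1} \ge 3^\alpha$ (using $\gamma = 1/2$, so $\tau = 3^{\alpha+1}$), $B$ is in particular a $3^\alpha$-signal set, so the hypotheses of Lemma \ref{lem:rb-links} are met. The lemma then yields a blue link $\ell_b$ with
\[ a_{R^-_b}(b) + a_b(R^-_b) \le 3^\alpha\bigl(a_B(b) + a_b(B)\bigr)\ . \]
Since $R = S$ and $R^-_b = S^-_b$, the left side is exactly the quantity $\ge \gamma$ that caused $\ell_b$ to be rejected (here I should double-check that the ``$+a_b(S)$'' in the algorithm's test is over all of $S$ versus just $S^-_b$ — but affectances onto longer links that were added later only add to the sum, so the prefix version is a lower bound, which is the direction we want). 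Hence $3^\alpha(a_B(b) + a_b(B)) \ge \gamma$, i.e. $a_B(b) + a_b(B) \ge \gamma/3^\alpha$. But $B \subseteq O$ and $O$ is a $2\tau$-signal set, so $a_O(b) \le 1/(2\tau)$ and $a_b(O) \le 1/(2\tau)$ (the signal condition bounds incoming affectance; the outgoing bound needs the symmetric signal definition or an appeal to how $2\tau$-signal sets are defined for both directions — I'd lean on Proposition-style bidirectional bookkeeping, or simply note $a_B(b)+a_b(B) \le a_O(b) + a_b(O)$ with both terms small). This gives $a_B(b) + a_b(B) \le 2/(2\tau) = 1/\tau = 2\gamma/3^{\alpha+1} = \gamma/(3 \cdot 3^\alpha) < \gamma/3^\alpha$, a contradiction.

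The main obstacle I anticipate is the bookkeeping around \emph{outgoing} affectance in the signal condition: the definition of a $\delta$-signal set in the excerpt bounds $a^P_S(v) \le 1/\delta$ (incoming interference on each link), and it is not immediately stated that $a^P_v(S) \le 1/\delta$ too. For a feasible (1-signal) set in a general metric this asymmetry is real, so I need to be careful: either the relevant $OPT_{2\tau}$ should be read as controlling both directions (plausible given how the authors use it together with the approximately-symmetric matrix machinery), or I should split $B$ further and use the two one-sided Lemmas \ref{rblem1} and \ref{rblem2} separately, losing only another constant factor that is already absorbed in the ``$/10$''. A secondary point is confirming the exact constants: the factor $4$ in the Red-Blue Lemma hypothesis and the factors lost in Lemma \ref{lem:bound-on-x} and Proposition \ref{prop:signal-length} must multiply out to something comfortably below $10$, which I'd verify at the end but expect to be routine. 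The cleanest writeup is the contradiction argument above, pinning down $\tau = 3^{\alpha+1}$ so that $1/\tau < \gamma/3^\alpha$ exactly, which is what forces the contradiction.
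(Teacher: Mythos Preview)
Your overall strategy matches the paper's exactly: set $B = O \setminus S$, $R = S$, invoke the Red-Blue Lemma, and derive a contradiction with the greedy rejection of $\ell_b$. You also correctly pinpoint the one genuine gap, namely the bound on the \emph{outgoing} affectance $a_b(B)$. However, neither of your two proposed resolutions works. The definition of a $\delta$-signal set in this paper bounds only incoming affectance, so ``read $OPT_{2\tau}$ as controlling both directions'' is simply false in a general metric. And using Lemmas~\ref{rblem1} and~\ref{rblem2} separately does not help: Lemma~\ref{rblem2} compares $a_b(R^-_b)$ with $a_b(B)$, so you are still stuck bounding $a_b(B)$, which is exactly the outgoing quantity the signal condition does not control.

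The paper's fix is a short preliminary step you are missing. Since every row of the affectance matrix of $OPT_{2\tau}$ sums to at most $1/(2\tau)$, the total $\trace(\affm)$ is at most $|OPT_{2\tau}|/(2\tau)$. Apply Claim~\ref{matrixsumclaim} with $\lambda = 2$ to the \emph{columns} to obtain a subset $O \subseteq OPT_{2\tau}$ of size at least $|OPT_{2\tau}|/2$ such that $a_u(O) \le 1/\tau$ for every $u \in O$; of course $a_O(u) \le 1/(2\tau)$ still holds. Now run your contradiction argument with this $O$ (aiming for $|S| \ge |O|/5$, which yields $|S| \ge OPT_{2\tau}/10$). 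With these bounds one gets
\[
a_{S^-_b}(b) + a_b(S^-_b) \le 3^\alpha\bigl(a_B(b) + a_b(B)\bigr) \le 3^\alpha\Bigl(\tfrac{1}{2\tau} + \tfrac{1}{\tau}\Bigr) = \tfrac{3^{\alpha+1}}{2\tau} = \gamma,
\]
contradicting the rejection of $\ell_b$. (A side remark: your parenthetical worry about ``$+a_b(S)$'' versus the prefix is a non-issue --- at the moment $\ell_b$ is examined, the current $S$ consists only of links no longer than $\ell_b$, so it \emph{is} $S^-_b$.)
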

\begin{proof}
By Claim~\ref{matrixsumclaim}, there is a set $O \subseteq OPT_{2 \tau}$ of size at least $OPT_{2 \tau}/2$ such that for all $u \in O$, $a_{u}(O) \leq \frac{1}{\tau}$. By definition, $a_O(u) \le \frac{1}{2\tau}$.

We claim that $|S| \ge |O|/5$.
Suppose otherwise. Then, $|S| < |O\setminus S|/4$.
Applying Lemma \ref{lem:rb-links} with $B =
O \setminus S$ and $R=S$, we find that there is a link $\ell_b$ in $O
\setminus S$ that satisfies 
$a_{S_b^-}(b) + a_b{(S_b^-)} \le 3^\alpha (a_{B}(b) + a_{b}(B))
\le 3^{\alpha}(\frac{1}{\tau} + \frac{1}{2\tau}) = \gamma$. 
The operation of the algorithm is then such that the algorithm would
have added the link $\ell_b$ to $S$, which is a contradiction.
\end{proof}

The proof of Thm.~\ref{consfactorthm} is now straightforward.
\begin{proof}
We can bound $ |X| \ge \frac{|S|}{2} \ge \frac{OPT_{2\tau}}{20} = \Omega(OPT)$. 
The first inequality is by Lemma \ref{lem:bound-on-x}, the second by 
Lemma \ref{lem:rb-applic}, and the last equality is a consequence of Proposition
\ref{prop:signal-length}.
\end{proof}


We note that our algorithm applies equally in the {\bf bi-directional} setting.

\subsection{The Necessity of Exponential Dependence on $\alpha$.}
In all results achieved in prior work as well as this work, the
approximation factor depends exponentially on $\alpha$. Here we show a
simple approximation preserving reduction from the maximum independent
set problem to prove that this dependence is necessary in
general. However, this does not rule out better dependence on $\alpha$
for special cases, say for small values of $\alpha$ on the
plane.

\begin{theorem}
\label{pathlossnecessary}
For $\alpha \ge \lg n + 1$ and any fixed $\epsilon > 0$, there can be no polynomial time algorithm that approximates 
$\prob{$\calU$-Capacity}$ in general metrics to within a factor better than
 $\Omega(2^{(1-\epsilon) \alpha})$, unless $P =NP$. 
\end{theorem}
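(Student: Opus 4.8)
The plan is to give an approximation-preserving reduction from \textsc{Maximum Independent Set} (MIS) on general graphs, exploiting the fact that for large $\alpha$ the SINR constraint degenerates into a purely combinatorial adjacency condition. Given a graph $G = (V,E)$ with $|V| = n$, I would build a metric space on a set of $n$ links $\{\ell_v : v \in V\}$, all of unit length, embedded so that for each non-adjacent pair $u,v$ the inter-link distance $d_{uv}$ is ``large'' (say $\ge 2$) while for each adjacent pair $d_{uv}$ is exactly $1$ (or as small as the metric permits given unit-length links). The standard trick is to take the \emph{shortest-path metric of a suitably weighted graph}: put sender and receiver of $\ell_v$ at distinguished points, give weight $1$ to edges realizing adjacency in $G$ and weight $2$ to the rest, then verify the triangle inequality holds by construction since shortest-path distances are always a metric. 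One must check that this embedding can be realized with all links having equal length and with $d_{uv}$ taking only the two values $\Theta(1)$ (adjacent) and $\Theta(n)$ or at least $\Theta(1)$-separated (non-adjacent); a clean way is to make non-adjacent links distance $2$ apart and rely on the exponential blow-up in $\alpha$ rather than on geometric separation.

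The core computation is then: under $\calU$ (uniform power $P_v = 1$), the affectance is $a_u(v) = c_v (\ell_v/d_{uv})^\alpha = c_v (1/d_{uv})^\alpha$, where $c_v = \beta/(1-\beta N)$ is a fixed constant (taking $N$ small or $0$, so $c_v \le 2\beta = O(1)$). For an adjacent pair, $a_u(v) = \Theta(1)$, which for $\alpha \ge \lg n + 1$ already makes any set containing both $u$ and $v$ infeasible once $c_v \ge 1$ — so a feasible set must be an independent set in $G$. Conversely, if $I$ is an independent set, every pair in $I$ is non-adjacent, so $d_{uv} \ge 2$ and $a_u(v) \le c_v 2^{-\alpha} \le 2\beta \cdot 2^{-\alpha}$; summing over at most $n-1$ interferers gives $a_I(v) \le 2\beta n 2^{-\alpha} \le 2\beta \cdot 2^{-1} = \beta$... here I need the constant to land below $1$, so I would instead demand $\alpha \ge \lg n + \lg(2\beta) + 1$ or simply fold the $O(1)$ slack into the definition of the metric by scaling non-adjacent distances up by a constant factor, e.g. to $4$, giving $a_I(v) \le 2\beta n 4^{-\alpha}$, which is $< 1$ for $\alpha \ge \lg n$. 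Either way, \prob{$\calU$-Capacity} on the constructed instance equals the MIS of $G$ exactly.

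Finally I would invoke the Håstad-type inapproximability of MIS: MIS on $n$-vertex graphs cannot be approximated within $n^{1-\epsilon'}$ unless $P = NP$. Since the instance has $n$ links and $\alpha$ is chosen so that $2^{\alpha} = \Theta(n)$ (namely $\alpha = \Theta(\lg n)$), we have $n^{1-\epsilon'} = \Theta(2^{(1-\epsilon')\alpha})$ up to constant-factor adjustments in the exponent, which, after relabeling $\epsilon'$ to absorb the $O(1)$ slack from $c_v$ and the base change $4$ vs.\ $2$, yields the claimed $\Omega(2^{(1-\epsilon)\alpha})$ hardness. The reduction is clearly polynomial-time and the objective values are preserved exactly, so approximating \prob{$\calU$-Capacity} within $c$ would approximate MIS within $c$.

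The main obstacle I anticipate is the metric realizability: I must exhibit an honest metric (a genuine distance function satisfying the triangle inequality, not merely a matrix of desired ``distances'') in which all $n$ links have \emph{exactly equal length} while $d_{uv}$ encodes adjacency. Using the shortest-path metric of a weighted graph handles the triangle inequality for free, but I need to be careful that imposing $\ell_v = \ell_w$ for all $v,w$ and $d_{uv} \in \{1, 2\}$ (or $\{1,4\}$) simultaneously is consistent — one typically places all $2n$ endpoints and checks no triangle inequality is violated, e.g. by making the ``far'' weight exactly twice the ``near'' weight so that all shortest paths behave. The rest (the affectance bound, the choice of $\alpha$, and citing MIS hardness) is routine once the instance is in hand.
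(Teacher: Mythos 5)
Your proposal is correct and follows essentially the same reduction as the paper: unit-length links, adjacency in $G$ encoded by inter-link distance $1$ versus $2$ (a valid metric since all distances lie in a $2{:}1$ range), and the $\Omega(n^{1-\epsilon})$ inapproximability of maximum independent set. The paper resolves your worry about the constant landing below $1$ simply by setting $N=0$ (so $c_v=\beta$, chosen as $1.5$) and using $(1/2)^{\alpha}\le 1/(2n)$, giving an independent-set affectance sum of at most $\beta/2<1$ without any rescaling of the non-adjacent distances.
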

\begin{proof}
By reduction from the maximum independent set problem in graphs.
Given graph $G(V, E)$, form a link $\ell_v$ for each vertex $v\in V$,
with $\ell_v = 1$.
If $\{u, v\} \in E$, set $d(s_v, r_u) = d(s_u,
r_v) = 1$, while if $\{u, v\} \not \in E$, set $d(s_v, r_u) = d(s_u, r_v) =
2$. It is easy to show that the triangle inequality holds.  Set
assume $N = 0$ and $\beta = 1.5$.  Any
set of vertices $S \subseteq V$ is an independent set if and only if
the corresponding set of links are SINR-feasible. To see this, first consider
a set of vertices that contains at least one edge $\{u, v\}$. Then
$a_S(v) \geq a_u(v) = \beta (\frac{1}{1})^{\alpha} > 1$. So, $S$ is not feasible.
On the other hand, if $S$ is an independent set, then for all $v \in S$,
\[ a_S(v) = \sum_{u \in S} a_u(v) = |S|\beta \left(\frac{1}{2}\right)^{\alpha}
   \le \frac{\beta}{2} \cdot |S| \left(\frac{1}{2}\right)^{\lg n}
   < 1\ . \]
The reduction is clearly approximation preserving.  Since independent
set is $\Omega(n^{1 - \epsilon})$-inapproximable unless P=NP (see
\cite{Zuckerman06}) for any fixed $\epsilon > 0$, the capacity problem
is hard to approximate within a factor of $\Omega(n^{1 - \epsilon}) =
\Omega(2^{(1-\epsilon) \alpha})$.
\end{proof}

\section{Approximating Power Control Using Oblivious Power.}

We will prove the following results.
\begin{theorem}
\label{pcunidir}
There is a $O(\log \log \Delta + \log n)$-approximation algorithm for
\prob{PC-Capacity} in general metrics that uses the mean power
assignment $\calM$.
\end{theorem}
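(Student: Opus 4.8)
The plan is to reduce \prob{PC-Capacity} to a small number of instances of \prob{$\calM$-Capacity}, to each of which the constant-factor approximation of Theorem \ref{consfactorthm} applies, at the cost of a $O(\log\log\Delta + \log n)$ factor. Concretely, suppose $OPT^*$ links are feasible under some (arbitrary, possibly non-oblivious) power assignment $P^*$. The first step is to argue that one can assume all links in the optimal solution have comparable ``effective power'', or more precisely that the optimal solution decomposes into a controlled number of sub-solutions on which mean power behaves well. A standard trick is to bucket links by length into $O(\log\Delta)$ length classes; but $\log\Delta$ is too expensive, so instead I would bucket by the ratio $P^*_v/\calM_v = P^*_v/\sqrt{\ell_v^\alpha}$, i.e. by how far the true power deviates from the mean-power value. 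Within a single such bucket the two assignments differ by only a constant factor, so the optimal links in that bucket are (up to a constant signal loss, using Proposition \ref{prop:signal-length}) feasible under $\calM$ as well, and Theorem \ref{consfactorthm} recovers a constant fraction of them. The issue is to bound the number of relevant buckets.

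The key structural fact I would invoke is that a feasible set under \emph{any} power assignment cannot spread its power ratios over too wide a range without the links being ``well separated'' in length — this is exactly the kind of statement that yields a $\log\log\Delta$ rather than a $\log\Delta$ dependence, and it is the same phenomenon exploited in \cite{us:esa09full}. Roughly: partition the optimal solution by length into classes $\ell_v \in [2^i, 2^{i+1})$; within one length class, mean power is (proportional to) a fixed value, so mean power and any feasible assignment restricted to that class differ only in a bounded way, and one length class contributes a constant-factor-recoverable sub-instance. Across length classes, links of very different lengths interfere weakly in a way that lets one group the $O(\log\Delta)$ length classes into $O(\log\log\Delta)$ super-groups (e.g. classes whose indices lie in a dyadic range $[2^j, 2^{j+1})$), each of which can be handled essentially independently under mean power. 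Combining the ``within a class'' constant factor with the ``number of super-groups'' $O(\log\log\Delta)$ factor and an additional $O(\log n)$ factor — which arises from the standard reduction that handles the interaction between links of wildly different lengths, or equivalently from a halving argument over the $n$ links — gives the claimed $O(\log\log\Delta + \log n)$; the ``$+$'' rather than ``$\cdot$'' comes from taking the better of the two regimes depending on whether $\log n$ or $\log\log\Delta$ dominates, i.e. one does a case split on $\Delta$ versus $n$.

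More carefully, the algorithm I would propose is: (1) guess (by trying all $O(\log n)$ powers of two) the value of $OPT^*$, or equivalently work with a fixed target; (2) partition $L$ into $O(\log\log\Delta)$ groups $L_1,\dots,L_k$ where each $L_j$ consists of links whose $\log\log$ of length falls in a common unit interval — so that within each group the length ratio is at most doubly-exponentially bounded but the \emph{cross-group} ratios are so extreme that feasibility essentially decouples; (3) within each group run \alg{C} with power assignment $\calM$; (4) also, in parallel, handle the case where $\Delta$ is small (say $\log\Delta = O(\log n)$, where instead one just buckets into $O(\log n)$ length classes directly and uses $\calM$ on each) versus $\Delta$ large; (5) output the largest feasible set found. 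The analysis charges $OPT^* = \bipcopt$ against these sub-instances: the dominant group retains a $\Omega(1/(\log\log\Delta + \log n))$ fraction of $OPT^*$ once one shows that $\calM$ restricted to that group is within a constant of optimal \emph{for that group's links}, which is where Lemma \ref{lem:ind-separation} and the red-blue machinery enter — we need that no power assignment can beat mean power by more than a constant on a set of links of nearly-equal length.

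The main obstacle is Step (2)'s decoupling claim, i.e. proving that an optimal solution under an arbitrary power assignment can be split into $O(\log\log\Delta)$ pieces each of which mean power handles within a constant factor. The naive length-class partition gives $O(\log\Delta)$ pieces; squeezing this to $O(\log\log\Delta)$ requires showing that when two length classes are far apart (ratio bigger than some polynomial in $n$), the longer-class links are so numerous-or-separated relative to the shorter ones that one can afford to treat alternating blocks of classes independently — a dyadic grouping of the class indices. Establishing that this grouping loses only a constant factor per super-group, uniformly over arbitrary power assignments, is the technical heart; it will combine the signal-strengthening Proposition \ref{prop:signal-length} (to buy slack), Lemma \ref{lem:ind-separation} (geometric separation of feasible nearly-equilength links), and a counting argument bounding how many short links can crowd near a long link. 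The remaining ingredient — that within one length class mean power is $O(1)$-competitive with arbitrary power — follows because on equilength links $\calM$ \emph{is} (proportional to) uniform power and the red-blue lemma already gives $O(1)$ there, so no genuinely new argument is needed beyond Theorem \ref{consfactorthm}.
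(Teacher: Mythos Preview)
Your proposal captures one correct ingredient --- that on nearly-equilength links any Lipschitz (in particular, mean) power assignment is $O(1)$-competitive with arbitrary power control, which is the paper's Theorem~\ref{lipscitz-nealy-eql} --- but the mechanism you propose for obtaining the $O(\log\log\Delta)$ factor has a genuine gap, and your account of where the ``$+$'' comes from is incorrect.

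The gap is in your Step~(2). Grouping links by $\lfloor\log\log\ell_v\rfloor$ does produce $O(\log\log\Delta)$ groups, but inside a group the length ratio can be as large as $e^{e^{j+1}}/e^{e^j}$, so $\log\Delta$ restricted to that group is unbounded (exponential in $j$). Running \alg{C} with $\calM$ on such a group approximates $\calM$-capacity, not PC-capacity; to bridge the two you would need Theorem~\ref{lipdelta}, which costs a factor of $\log\Delta_{\text{group}}$ --- and that is no longer $O(1)$ or even $O(\log n)$ for these super-groups. Your ``decoupling'' claim would have to assert that \emph{inside} each super-group the effective length diversity collapses, and nothing in the proposal supports that. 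The alternative route you float, bucketing by the ratio $P^*_v/\calM_v$, is also not usable: the algorithm does not know $P^*$.

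What the paper does is structurally different. It partitions links into \emph{levels} $L_r$ in which the intra-level length ratio is $D=\Theta(n^{2/\alpha})$, so $\log\Delta$ restricted to a level is $\Theta(\log n)$; there may be many levels (about $\log\Delta/\log n$, not $O(\log\log\Delta)$). Within each level one runs the algorithm behind Theorem~\ref{lipdelta}, paying $O(\log n)$. Across levels, the algorithm processes levels from shortest to longest and additionally rejects a link $\ell_w$ if some already-selected link $\ell_v$ from a lower level satisfies $\max(a^\calM_v(w),a^\calM_w(v))\ge 1/(2n)$. The crucial structural fact is Lemma~\ref{wellsep} (from \cite{us:esa09full}): a single such $\ell_v$ can block at most $O(\log\log\Delta)$ links of any PC-feasible set of much longer links. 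This is where $\log\log\Delta$ actually enters; it is a ``one short link blocks few long optimal links'' counting lemma specific to mean power, not a partitioning statement.

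Finally, the ``$+$'' is not obtained by a case split on $\Delta$ versus $n$. Writing $s_r,o_r$ for the sizes of the selected and optimal links at level $r$, and $t_{qr}$ for the number of optimal level-$r$ links blocked by selected level-$q$ links, the within-level analysis gives $c\,(\log n)\, s_r \ge o_r - \sum_{q<r} t_{qr}$, while Lemma~\ref{wellsep} gives $\sum_{r>q} t_{qr} \le c_1 (\log\log\Delta)\, s_q$. Summing over $r$ and exchanging the order of the double sum yields
\[
  (c\log n + c_1\log\log\Delta)\sum_r s_r \;\ge\; \sum_r o_r\,.
\]
The $\log\log\Delta$ term multiplies the \emph{algorithm's} output rather than the optimum, and that is precisely why the two losses add instead of multiply.
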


\begin{theorem}
\label{pcbidir}
There is a $O(1)$-approximation algorithm for \prob{Bi-PC-Capacity} in
general metrics  that uses the mean power
assignment $\calM$.
\end{theorem}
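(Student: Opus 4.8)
The plan is to show that, for bi-directional links, the mean power assignment is within a constant factor of \emph{any} power assignment, i.e.\ $OPT^{\calM,B}(L) = \Omega(\bipcopt(L))$; the theorem then follows immediately by running Algorithm \alg{C}$(L,\calM)$, which by Theorem~\ref{consfactorthm} (together with the remark that \alg{C} applies verbatim to bi-directional links) approximates $OPT^{\calM,B}(L)$ within a constant factor, hence also $\bipcopt(L)$.

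The crux is that under mean power the bi-directional affectance matrix is \emph{approximately symmetric}. Fix an optimal bi-directional set $S$ together with a power assignment $P$ witnessing its feasibility, so $a^P_S(v) \le 1$ for every $v \in S$. Let $\affm$ be the $|S| \times |S|$ matrix with entries $a_{uv} = a^{\calM}_v(u)$. Expanding $a^{\calM}_v(u) = c_u (\calM_v/\calM_u)(\ell_u/d_{vu})^\alpha$ and using $\calM_v \propto \ell_v^{\alpha/2}$ gives $a_{uv} = c_u (\ell_u\ell_v)^{\alpha/2}/d_{vu}^\alpha$. Since in the bi-directional model the inter-link distance is symmetric, $d_{vu} = d_{uv}$, the factor $(\ell_u\ell_v)^{\alpha/2}/d_{vu}^\alpha$ is symmetric in $u,v$, so $a_{uv}/a_{vu} = c_u/c_v = O(1)$ under the standard assumption that noise has bounded effect (so that $c_v = \Theta(\beta)$). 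Thus $\affm$ is $q$-approximately-symmetric for a constant $q$.

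Next I exhibit a positive vector $\mathbf{p}$ with $\affm \mathbf{p} \le O(1)\cdot\mathbf{p}$, extracted from the feasibility of $S$ under $P$. A direct computation yields $a^{\calM}_v(u)/a^P_v(u) = (c^{\calM}_u/c^P_u)\cdot r_u/r_v$, where $r_v := P_v/\ell_v^{\alpha/2} > 0$. Taking $p_v = r_v$, we get $\sum_{v\in S} a_{uv} p_v = (c^{\calM}_u/c^P_u)\, r_u \sum_{v\in S} a^P_v(u) \le O(1)\cdot p_u$, using $a^P_S(u) \le 1$. Hence $(\affm/O(1))\mathbf{p} \le \mathbf{p}$ with $\affm/O(1)$ still $q$-approximately-symmetric, so Lemma~\ref{lem:nearly-symm} gives $\trace(\affm) = O(|S|)$. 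By Claim~\ref{matrixsumclaim} (with $\lambda = 2$) there is a subset $S' \subseteq S$ with $|S'| \ge |S|/2$ and $a^{\calM}_{S'}(u) \le a^{\calM}_S(u) \le c$ for all $u \in S'$ and a constant $c$; that is, $S'$ is a $(1/c)$-signal set under $\calM$. Applying the bi-directional part of Proposition~\ref{prop:signal-length} (with $\tau = 1/c \le 1 = \delta$) converts $S'$ into an SINR-feasible bi-directional set of size $\Omega(|S'|/c) = \Omega(|S|) = \Omega(\bipcopt(L))$ under $\calM$. Combining with Theorem~\ref{consfactorthm} finishes the proof.

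I expect the main obstacle to be identifying the correct potential vector $\mathbf{p}$ and verifying the approximate symmetry: both rely on the exact exponent $\alpha/2$ of mean power interacting with the symmetry $d_{uv} = d_{vu}$ of bi-directional distances, which is precisely the structure that fails for uni-directional links (and explains why only $O(\log n + \log\log\Delta)$ is obtained there). A secondary, routine point is the bookkeeping of the constants $c_v^{\calM}, c_v^P$, which is where the mild bounded-noise assumption is used.
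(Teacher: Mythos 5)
Your proposal is correct and follows essentially the same route as the paper: the paper's Theorem~\ref{thm:mean-vs-any-bidi} uses exactly your reparametrization $P(v)=P'(v)\cdot\calM(\ell_v)$ (your $r_v$), the (approximate) symmetry of the mean-power bi-directional affectance matrix, Lemma~\ref{lem:nearly-symm}, Claim~\ref{matrixsumclaim}, and Proposition~\ref{prop:signal-length}, before invoking Theorem~\ref{consfactorthm}. The only cosmetic difference is that the paper scales powers to make noise negligible so that $c_v=\beta=1$ and the matrix is exactly symmetric, whereas you carry the $c_u/c_v=O(1)$ factor through explicitly.
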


Thus for bi-directional links,
we find that mean power is essentially the best possible. For uni-directional links, oblivious
power is known to be sub-optimal, but we achieve nearly tight bounds in light
of lower bounds in \cite{us:esa09full}.

For simplicity we assume $\beta=1$.
We shall also assume that the noise $N$ is negligible, 
which is acceptable in the power control case since
we can scale the powers to make the noise arbitrarily small. The following simple lemma 
makes this rigorous.

For any oblivious power assignment $P$ and any number $\mathbf{s} > 0$ define the assignment
$P^\mathbf{s}$ as $P_v^\mathbf{s} = \mathbf{s} P_v$. That is, $P^\mathbf{s}$ is a simple
linear scaling of $P$.

\begin{lemma}
Assume a given link set $L$ and ambient noise $N$.
Then for any oblivious power assignment $P$, there exists a number $\mathbf{s}$ such that the assignment
$P^\mathbf{s}$ has the following property: If a set $S \in L$ is feasible using $P^\mathbf{s}$ assuming
zero noise, then there is a set $S'$ of size $\Omega(S)$ which is feasible assuming ambient noise $N$.
Also, such an $\mathbf{s}$ can be found efficiently.
\end{lemma}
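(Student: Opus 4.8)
The plan is to pick the scaling factor $\mathbf{s}$ large enough that the noise term in the SINR constraint becomes a negligible fraction of the signal for \emph{every} link simultaneously, so that a zero-noise feasible set is, after a constant-factor loss, feasible with ambient noise $N$. The loss by a constant factor is where the mild issue lies, because scaling cannot by itself drive down the constant $c_v = \beta/(1 - \beta N\ell_v^\alpha/P_v)$ below $\beta$; rather, we will arrange that $c_v$ is within a factor of $2$ of $\beta$ and then invoke the signal-strengthening machinery (Proposition~\ref{prop:signal-length}) to pass from a $1$-signal set in the noiseless model to a $2$-signal set in the noisy model, at the cost of a constant factor in size.

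Here is the construction. Given the oblivious assignment $P$ and the linkset $L$, let $\ell_{\max}$ denote the longest link length in $L$ and set $\mathbf{s} = 2\beta N \ell_{\max}^\alpha / \min_v P_v$ (using that $P$ is a fixed positive assignment on a finite set, so $\min_v P_v > 0$; any slightly larger efficiently-computable value works, so this is computable in polynomial time). Then for every link $\ell_v$ we have $\beta N \ell_v^\alpha / P_v^{\mathbf{s}} = \beta N \ell_v^\alpha/(\mathbf{s} P_v) \le \beta N \ell_{\max}^\alpha/(\mathbf{s}\,\min_v P_v) = 1/2$, hence $c_v = \beta/(1 - \beta N\ell_v^\alpha/P_v^{\mathbf{s}}) \le 2\beta$. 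Since scaling $P$ by a common factor $\mathbf{s}$ changes neither the ratios $P_w/P_v$ nor the geometry, the affectance $a^{P^{\mathbf{s}}}_w(v) = c_v \frac{P_w}{P_v}(\ell_v/d_{wv})^\alpha$ differs from $a^{P}_w(v)$ only through the constant $c_v$; in particular, under $P^{\mathbf{s}}$ every affectance is at most $2$ times what it would be with the noiseless constant $\beta$ in place of $c_v$.

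Now suppose $S \subseteq L$ is feasible using $P^{\mathbf{s}}$ \emph{assuming zero noise}; that is, $\sum_{w \in S} \beta \frac{P_w}{P_v}(\ell_v/d_{wv})^\alpha \le 1$ for every $\ell_v \in S$. By the previous paragraph, $a^{P^{\mathbf{s}}}_S(v) \le 2$ for every $\ell_v \in S$ in the \emph{noisy} model, i.e.\ $S$ is a $\tfrac12$-signal set under ambient noise $N$. Applying Proposition~\ref{prop:signal-length} with $\delta = 1$ and $\tau = \tfrac12$ (note $\tau \le \delta$), we obtain a subset $S' \subseteq S$ of size at least $|S|/(4\delta/\tau) = |S|/8$ that is a $1$-signal set, i.e.\ SINR-feasible with ambient noise $N$; and the proposition asserts this is constructive, so $S'$ is found efficiently. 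This gives $|S'| = \Omega(|S|)$, as required. (For bi-directional links one uses the bi-directional half of Proposition~\ref{prop:signal-length} instead, losing a factor $16$ instead of $8$.)

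The only step that requires any care is the passage from "every affectance bounded by $2$" to an honestly feasible subset: one cannot simply keep $S$, since the noise genuinely inflates each $c_v$, so a constant-factor thinning via Proposition~\ref{prop:signal-length} is essential — this is the "hard part," though it is entirely routine given the tools already in hand. Everything else is bookkeeping: checking that $\mathbf{s}$ is computable, that the scaling leaves ratios and distances untouched, and that $c_v \le 2\beta$ follows from the choice of $\mathbf{s}$.
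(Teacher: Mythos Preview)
Your proof is correct and follows essentially the same approach as the paper's: choose $\mathbf{s}$ large enough that $c_v \le 2\beta$ for every link, observe that this makes the noisy affectance at most twice the noiseless affectance so that a noiseless $1$-signal set is a noisy $\tfrac12$-signal set, and then apply Proposition~\ref{prop:signal-length} to recover a feasible subset of constant-fraction size. Your write-up is in fact slightly more explicit than the paper's (you give a concrete formula for $\mathbf{s}$ and spell out the constant $|S|/8$), but the argument is the same.
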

\begin{proof}
We will be comparing the case where the noise is zero versus the case
where the noise is $N$. We will use $c_v^0$, $a^0_S(v)$, $a^0_u(V)$
etc.\ to denote the value of $c_v$ and affectances in the zero noise
case, and the un-superscripted versions ($c_v, a_S(v), a_u(v)$ etc.)
to denote the noisy case.

First we claim that we can choose $\mathbf{s}$ such that
$c_v \leq 2$ for all $\ell_v$. Recall that $c_v = \beta / (1 - \beta N \ell_v^{\alpha} / P_v)$
depends only on $P_v$. Thus we can easily choose $\mathbf{s}$ such that 
 such that $N \ell_v^{\alpha} /P^{\mathbf{s}}_v \leq \frac{2}{\beta}$
for all $\ell_v$. Thus $c_v \leq 2 \beta \leq 2$ (since we have assumed $\beta = 1$).

On the other hand, if the noise level is zero then $c^0_v = \beta = 1$. 
Consider a set $S$ that is feasible under $P^{\mathbf{s}}$ assuming zero noise. 
Then $a^0_S(v) \leq 1$ for all $\ell_v$. 
Now, 
\begin{align*}
a_S(v) & = \sum_{u \in S}  a_u(v)  = c_v \sum_{u \in S} \frac{P^s_u}{P^s_v} \left(\frac{\ell_v}{d_{uv}}\right)^{\alpha} \\
  & \leq 2 c^0_v\sum_{u \in S} \frac{P^s_u}{P^s_v} \left(\frac{\ell_v}{d_{uv}}\right)^{\alpha}  
    = 2 \sum_{u \in S}  a^0_u(v) = 2 a^0_S(v) \\
  & \leq 2 \ .
\end{align*}
The statement of the Lemma now follows from Prop.~\ref{prop:signal-length}.
\end{proof}



\subsection{Uni-directional Capacity with Power Control via Oblivious Power.}

In \cite{us:esa09full}, Halld\'orsson achieved a $O(\log n \cdot \log
\log \Delta)$ approximation factor for uni-directional capacity with
power control for \emph{doubling} metrics via $\calM$.  Using tools
developed in this paper, the same approximation can be achieved for
general metrics in a straightforward manner.  We additionally improve
the bound to $O(\log n + \log \log \Delta)$.

First we need an weaker result on a class of power assignments we call
Lipschitz power assignments.

\begin{defn}
A power assignment $\calP$ is \emph{Lipschitz} if
there is some constant $c$ such that $\calP(\ell_v) \le c \cdot
\calP(\ell_u)$, for any two links $\ell_u, \ell_v$ with 
$\ell_u/2 \le \ell_v \le 2 \ell_u$.
\end{defn}

We prove the following.
\begin{theorem}
\label{lipdelta}
Let $\calP$ be any Lipschitz power assignment.
Then, for any linkset $L$,
$\PCopt(L) = O(\log \Delta) OPT_\calP(L)$.
\end{theorem}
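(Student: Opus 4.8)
The plan is to partition the links of $L$ into $O(\log \Delta)$ length classes, where class $i$ contains links whose length lies in the range $[2^i, 2^{i+1})$ (for suitable indexing), and argue that on each individual class the power assignment $\calP$ is, up to a constant, equivalent to uniform power, so that $\prob{$\calP$-Capacity}$ restricted to one class loses only a constant factor compared to the optimum achievable by \emph{any} power assignment on that class. Since there are $O(\log \Delta)$ classes, and $\bipcopt$ (the optimum under arbitrary power control) on all of $L$ is at most the sum of the optima under arbitrary power on the classes — which is at most a constant times the sum of $OPT_\calP$ on the classes, which in turn is at most $O(\log \Delta)$ times $\max_i OPT_\calP(\text{class } i) \le O(\log\Delta)\, OPT_\calP(L)$ — the bound follows. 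Actually one must be slightly careful: it is cleaner to say $\bipcopt(L) \le \sum_i \bipcopt(L_i)$ where $L_i$ is the $i$-th class (restricting a feasible solution to a class keeps it feasible), and then bound each $\bipcopt(L_i)$ by $O(1)\cdot OPT_\calP(L_i) \le O(1)\cdot OPT_\calP(L)$.

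The heart of the matter is thus the single-class statement: if all links have lengths within a factor $2$ of each other, then $\bipcopt(L_i) = O(1)\cdot OPT_\calP(L_i)$ for any Lipschitz $\calP$. First I would observe that within a class, the Lipschitz property gives $\calP(\ell_u)/\calP(\ell_v) = \Theta(1)$ for all pairs, so $\calP$ and the uniform assignment $\calU$ differ only by bounded multiplicative factors; consequently the affectance matrices under $\calP$ and under $\calU$ are entrywise within a constant factor of each other, and a $\delta$-signal set under one is an $O(\delta)$-signal set under the other. So it suffices to bound $\bipcopt(L_i)$ by $O(1)\cdot OPT_\calU(L_i)$. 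For this I would invoke the constant-factor approximation for $\prob{$\calU$-Capacity}$ (Theorem~\ref{consfactorthm}, since $\calU$ is length-monotone and sub-linear), but that only compares the algorithm's output to $OPT_\calU$, not to the arbitrary-power optimum. The missing ingredient is that \emph{on a set of nearly-equal-length links, arbitrary power is no better than uniform power up to a constant}: given any feasible set $S$ under an arbitrary assignment $P$, I would pass to a large constant-fraction subset on which the powers are within a constant factor of each other — this can be done because after normalizing by $\ell_v^\alpha$ the sub-linearity-type reasoning, or simply a bucketing of the $P_v$ values combined with Proposition~\ref{prop:signal-length}-style sparsification, lets one extract a sub-slot where powers are comparable — and on that subset $P$ behaves like $\calU$, so $|S| = O(1)\cdot OPT_\calU(L_i)$.

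The step I expect to be the main obstacle is precisely this last claim — showing that on equal-length links arbitrary power control buys only a constant factor over uniform power. The subtlety is that even among equal-length links the optimal power assignment could in principle use a wide spread of powers (a "near-far" staircase), and one must show such configurations cannot be much larger than what uniform power achieves. The cleanest route is probably: sort $S$ by power, bucket into groups where powers lie within a factor $2$, note that a feasible set remains feasible when restricted to one bucket, and on a single bucket the arbitrary assignment is within a constant of uniform, so $OPT_\calU(L_i) = \Omega(|S|/(\text{number of buckets}))$ — but the number of buckets is unbounded a priori, so this alone is not enough. To close it, I would instead appeal to the fact that for nearly-equal-length links, Lemma~\ref{lem:ind-separation} forces geometric separation of the endpoints that depends only on the signal strength, not on the power ratios, which caps how many links a single bucket can "dominate"; combining this with the red-blue machinery of Lemma~\ref{lem:rb-links} applied between high-power and low-power sub-buckets should collapse the staircase down to a constant number of effective levels. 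If that is too delicate, an alternative is to cite the known result (e.g. from \cite{us:esa09full} or \cite{FKRV09}) that equal-length capacity under power control is within a constant factor of uniform-power capacity, and use it as a black box.
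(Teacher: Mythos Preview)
Your overall reduction --- partition $L$ into $O(\log\Delta)$ length classes $L_i$ with lengths in $[2^i,2^{i+1})$, use $\PCopt(L) \le \sum_i \PCopt(L_i)$, and reduce to the single-class statement $\PCopt(L_i)=O(OPT_\calP(L_i))$ --- is exactly what the paper does (this single-class statement is the paper's Thm.~\ref{lipscitz-nealy-eql}). You also correctly observe that on one class a Lipschitz $\calP$ is within a constant factor of $\calU$, so it suffices to compare $\PCopt(L_i)$ with $OPT_\calU(L_i)$.

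The gap is in how you propose to prove the single-class statement. Bucketing by power values fails, as you note, and the red-blue lemma does not give you control over a power staircase between buckets (it compares affectances of two given link sets under a \emph{fixed} assignment, not under two different ones). Citing \cite{us:esa09full} or \cite{FKRV09} as a black box would only yield the result in doubling metrics, whereas the present theorem is stated for general metrics. The paper's actual argument is quite different and is the idea you are missing: write feasibility of $S$ under the arbitrary assignment $P$ as $\affm\mathbf{p}\le\mathbf{p}$, where $\affm=(a^{\calU}_v(u))$ is the \emph{uniform-power} affectance matrix and $\mathbf{p}$ the vector of powers; then show, purely from near-equal lengths and the triangle inequality (after passing to a $6^\alpha$-signal set via Prop.~\ref{prop:signal-length}), that $d_{uv}\le 2d_{vu}$ and hence $\affm$ is $4^\alpha$-approximately-symmetric. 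Lemma~\ref{lem:nearly-symm} then gives $\trace(\affm)=O(|S|)$, and Claim~\ref{matrixsumclaim} plus signal-strengthening extracts a constant-fraction subset that is feasible under $\calU$. In other words, the whole power spread is absorbed into the single inequality $\affm\mathbf{p}\le\mathbf{p}$, and the near-symmetry of $\affm$ (not any bucketing of $\mathbf{p}$) is what collapses it.
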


For Lipschitz power assignments, the above result is the best
possible, given the matching lower bound for both uniform and linear 
power \cite{MoscibrodaOW07, ALP09}.

Before proving Thm.~\ref{pcunidir}, we need a weaker result for Lipschitz power assignments.

\begin{theorem}
\label{lipscitz-nealy-eql}
A set of links $L$ is said to contain nearly equilength links if for all $u, v \in L$,
$\frac{\ell_v}{2} \leq \ell_u \leq 2 \ell_v$.
Let $\calP$ be any Lipschitz power assignment.
Then, for any linkset $L$ of nearly equilength links, 
$\PCopt(L) = O(OPT_\calP(L))$.
\end{theorem}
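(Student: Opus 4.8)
The plan is to reduce to the constant-factor capacity algorithm (Theorem \ref{consfactorthm}) already established for fixed length-monotone, sub-linear power, by observing that on nearly equilength links an arbitrary ("power control") solution can be converted, with only constant loss, into a solution under a fixed Lipschitz assignment $\calP$. Concretely, let $O^*$ be an optimal \prob{PC-Capacity} solution for $L$, feasible under some power assignment $P^*$; I want to show $OPT_\calP(L) = \Omega(|O^*|)$. First I would pass to a $\delta$-signal version of $O^*$ for a suitable constant $\delta$ (say $\delta = 3^\alpha$ or whatever the red-blue machinery needs), losing only a constant factor by Proposition \ref{prop:signal-length}; call the resulting set $O$, still feasible (indeed a strong-signal set) under $P^*$.

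The heart of the argument is that once all links have lengths within a factor of $2$, the affectance matrix $\affm^{P^*}$ on $O$ and the affectance matrix $\affm^{\calP}$ on the same set are comparable up to a constant. The reason: for links $\ell_u,\ell_v$ in a strong-signal set, Lemma \ref{lem:ind-separation} gives $d_{uv} d_{vu} \ge q^2 \ell_u\ell_v$, and with nearly equilength links one also gets a near-symmetry of the geometric ratios $(\ell_v/d_{uv})^\alpha$ versus $(\ell_u/d_{vu})^\alpha$ up to a factor depending only on $\alpha$ (and the factor-$2$ length spread). Since $\calP$ is Lipschitz, $\calP(\ell_u)/\calP(\ell_v) = \Theta(1)$, so $a^{\calP}_u(v)$ and $a^{P^*}_u(v)$ differ only in the ratio $P^*_u/P^*_v$ versus $\calP(\ell_u)/\calP(\ell_v)$. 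The key point is that the sum $\sum_{u\in O} a^{\calP}_u(v)$ and $\sum_{u\in O} a^{\calP}_v(u)$ can be bounded by controlling $\trace(\affm^{\calP})$: I would show $\trace(\affm^{\calP}|_O) = O(|O|)$ by relating it to $\trace(\affm^{P^*}|_O)$ using near-symmetry of the $\calP$-affectance matrix (Lemma \ref{lem:nearly-symm}) — indeed on nearly equilength links with a Lipschitz assignment, $a^{\calP}_u(v) \le q \cdot a^{\calP}_v(u)$ for a constant $q = q(\alpha)$, and $\affm^{\calP}|_O \cdot \mathbf{p} \le \mathbf{p}$ for an appropriate positive vector $\mathbf{p}$ coming from feasibility. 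Then Lemma \ref{lem:nearly-symm} yields $\trace(\affm^{\calP}|_O) \le (q+1)|O|$, and Claim \ref{matrixsumclaim} extracts a constant fraction of $O$ that is feasible (a $1$-signal set, possibly after rescaling) under $\calP$.

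The step I expect to be the main obstacle is establishing the near-symmetry $a^{\calP}_u(v) = O(a^{\calP}_v(u))$ for links in a strong-signal set of nearly equilength links, since the four-way minimum defining $d_{uv}$ in the bi-directional case (and the asymmetric $d_{uv}$ in the uni-directional case) does not a priori relate $d_{uv}$ to $d_{vu}$ without using that both links have comparable length and that the signal is strong; here Lemma \ref{lem:ind-separation} does most of the work but one must check that $d_{uv}$ and $d_{vu}$ are within a constant factor (not merely that their product is large), which is exactly where equilength is used. A secondary subtlety is producing the positive vector $\mathbf{p}$ with $\affm^{\calP}|_O\mathbf{p}\le\mathbf{p}$: one can take $\mathbf{p}_v = \calP(\ell_v)\ell_v^{-\alpha}$ (or the constant vector, since lengths vary by at most $2$ and $\calP$ is Lipschitz) and verify the inequality coordinatewise from the $\delta$-signal feasibility of $O$ under $P^*$ combined with the pointwise comparison of $\calP$- and $P^*$-affectances. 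Once this is in place, the final assembly — $OPT_\calP(L) \ge \Omega(|O|) \ge \Omega(|O^*|) = \Omega(\PCopt(L))$ — is immediate, using Proposition \ref{prop:signal-length} once more to return from a strong-signal set to an ordinary feasible set.
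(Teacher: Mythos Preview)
Your overall strategy matches the paper's: pass to a strong-signal subset of the power-control optimum, establish near-symmetry of the fixed-power affectance matrix on that subset, apply Lemma~\ref{lem:nearly-symm} to bound its trace, and extract a large $\calP$-feasible subset via Claim~\ref{matrixsumclaim} and Proposition~\ref{prop:signal-length}. The paper first reduces $\calP$ to $\calU$ (on nearly equilength links any Lipschitz assignment differs from uniform by a constant factor, so this is harmless), which streamlines the bookkeeping, but the skeleton is the same.

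There is, however, a genuine gap in your ``secondary subtlety''. To invoke Lemma~\ref{lem:nearly-symm} you need a positive $\mathbf{p}$ with $\affm^{\calP}|_O\,\mathbf{p}\le\mathbf{p}$. Your candidates $\mathbf{p}_v=\calP(\ell_v)\ell_v^{-\alpha}$ and the constant vector do not work: with $\mathbf{p}=\mathbf{1}$ the inequality becomes $\sum_u a^{\calP}_u(v)\le 1$, which is exactly the $\calP$-feasibility of $O$ you are trying to \emph{establish}, and you cannot derive it by ``pointwise comparison of $\calP$- and $P^*$-affectances'' because $a^{\calP}_u(v)/a^{P^*}_u(v)=\calP(\ell_u)P^*_v\big/\bigl(\calP(\ell_v)P^*_u\bigr)$ is unbounded in general --- the whole point is that $P^*$ may be far from Lipschitz. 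The missing idea, and the paper's key observation, is to take $\mathbf{p}$ from $P^*$: working with the uniform-power matrix $\affm$ (entries $a_{uv}=a^{\calU}_v(u)$), the inequality $\affm\,\mathbf{p}\le \mathbf{p}$ with $\mathbf{p}_v=P^*_v$ is \emph{literally} the feasibility of $O$ under $P^*$, so the hypothesis of Lemma~\ref{lem:nearly-symm} comes for free. (Equivalently, if you insist on $\affm^{\calP}$, take $\mathbf{p}_v=P^*_v/\calP(\ell_v)$.)

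On the ``main obstacle'', you are right that Lemma~\ref{lem:ind-separation} gives only a product bound $d_{uv}d_{vu}\ge q^2\ell_u\ell_v$ and not $d_{uv}=\Theta(d_{vu})$. The paper does not use Lemma~\ref{lem:ind-separation} here; it argues directly: in a $6^\alpha$-signal set, assuming without loss of generality $P^*_u\le P^*_v$, the bound $a^{P^*}_v(u)\le 6^{-\alpha}$ forces $d_{vu}\ge 6\ell_u\ge 2(\ell_u+\ell_v)$ (using $\ell_v\le 2\ell_u$), after which the triangle inequality yields $|d_{uv}-d_{vu}|\le \ell_u+\ell_v\le d_{vu}/2$, giving $d_{uv}\ge d_{vu}/2$ and, symmetrically, $d_{vu}\ge 2d_{uv}/3$. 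This delivers $4^\alpha$-near-symmetry of the uniform-power affectance matrix and completes the argument.
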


\begin{proof}
First note that for nearly equilength links, the affectance for any
Lipschitz assignment $\calP$ varies from that of uniform power $\calU$
by no more than a constant factor. Thus by the signal strengthening
property (Proposition \ref{prop:signal-length}) we safely assume $\calP = \calU$.

Let $S$ be an optimal $6^\alpha$-signal subset of $L$.
Let $P$ be a power assignment for which $S$ is a $6^\alpha$-signal set.
Let $a_v^\calU(u) = (\ell_u/d_{vu})^{\alpha}$ be the affectance of
link $\ell_v$ on $\ell_u$ under uniform power. Then, $a^P_{v}(u) =  a_v^\calU(u) P_v/P_u$
is the affectance under $P$.
Viewed as a matrix $\affm = (a_{uv})$ where $a_{uv}=a_v^\calU(u); \forall u,v \in S$, feasibility of $S$ using $P$ 
is equivalent to
$\affm \mathbf{p} \le \mathbf{p}$ where $\mathbf{p} = [P_1, P_2 \ldots]^T$.


We claim that $d_{uv} \le 2 d_{vu}$ for all $u, v$. To show this assume $P_u \leq P_v$
and we will prove the inequality in both directions. Since 
$\frac{P_v}{P_u}\left(\frac{\ell_u}{d_{vu}}\right)^{\alpha} \leq 6^{-\alpha}$, we get 
$d_{vu} \geq 6 \ell_u \geq 2(\ell_u + \ell_v)$. 
By the triangular inequality $d_{uv} \ge d_{vu} -
(\ell_u + \ell_v) \ge d_{vu}/2$ and 
$d_{vu} \ge \max(2(\ell_u+\ell_v),d_{uv}-(\ell_u+\ell_v)) \ge 2 d_{uv}/3$.

Then,
$a^{\calU}_{u}(v) = (\ell_v/d_{uv})^{\alpha} \le (2\ell_u/\frac{1}{2}d_{vu})^{\alpha} \le 4^{\alpha} a^{\calU}_{v}(u)$.
Hence, $\affm$ is $4^{\alpha}$-nearly symmetric.
By Lemma \ref{lem:nearly-symm},
the set $S'$ of links with affectance at most $2 \cdot
4^{\alpha}$ is of size at least $|S|/2$.
By signal strengthening (Prop.~\ref{prop:signal-length}), applied both
to $S'$ and $S$, there is a 1-signal subset $S''$ in $S'$ of size
at least $|S'|/(8 \cdot 4^{\alpha})$, while $S \ge \PCopt/(4 \cdot 6^\alpha)$.
\end{proof}

Thm.~\ref{lipscitz-nealy-eql} can be used to achieve Thm.~\ref{lipdelta} (a generalization of a result from \cite{us:esa09full} to arbitrary distance metrics)
in a straightforward manner.

\begin{proof}[Proof of Thm.~\ref{lipdelta}]
Given an arbitrary linkset, we can divide it into $\log \Delta$
nearly-equilength groups. Let these groups be $L_i$ for $i = 1 \ldots \log \Delta$.
Then by Thm.~\ref{lipscitz-nealy-eql}, $\PCopt(L_i) = O(OPT_\calP(L_i))$. Let, 
$r = \arg \max \PCopt(L_r)$. Now,
\begin{align*}
\PCopt(L) & \leq \log \Delta \cdot  \PCopt(L_r) \\ 
& = O(\log \Delta \cdot OPT_\calP(L_r)) = O(\log \Delta \cdot OPT_\calP(L))\ .
\end{align*}
\end{proof}

We now combine the use of mean power $\calM$ with Thm.~\ref{lipdelta}
to get an algorithm with better approximation
ratio in terms of the dependence on $\Delta$.
To to do this, we modify ideas from \cite{us:esa09full}.
A key idea in  \cite{us:esa09full} was to partition the input links into length based
classes. We adopt this idea but employ a more efficient partitioning of the input links
and combine it with Thm.~\ref{lipdelta} for our result.

For this we will need the following lemma, a slight variation of 
Lemma 4.2 in \cite{us:esa09full}.

\begin{lemma}[\cite{us:esa09full}]
\label{wellsep}
Let $Q$ be a set of links that are SINR-feasible using some power assignment, and let
$\ell_v$ be a link that is shorter than the links in $Q$ by a factor of at least $n^{2/\alpha}$. Suppose 
$\max{(a^{\calM}_v(w), a^{\calM}_w(v))} \geq \frac{1}{2n}; \forall \ell_w \in Q$. Then $|Q| = O(\log \log \Delta)$.
\end{lemma}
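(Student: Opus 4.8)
\textbf{Proof proposal for Lemma~\ref{wellsep}.}

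The plan is to exploit the fact that $\ell_v$ is much shorter than every link in $Q$ together with the mean-power affectance formula, to show that each $\ell_w \in Q$ must live in a narrow annulus around $s_v$ (or around the other endpoints of $\ell_v$), and that these annuli are geometrically spaced. First I would write out the condition $\max(a^{\calM}_v(w), a^{\calM}_w(v)) \ge \frac{1}{2n}$ explicitly. Using $P^{\calM}_x \propto \ell_x^{\alpha/2}$, we have
$a^{\calM}_v(w) = c_w \frac{\ell_w^{\alpha/2}}{\ell_v^{\alpha/2}} \left(\frac{\ell_w}{d_{vw}}\right)^{\alpha}$ and $a^{\calM}_w(v) = c_v \frac{\ell_v^{\alpha/2}}{\ell_w^{\alpha/2}} \left(\frac{\ell_v}{d_{wv}}\right)^{\alpha}$.
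The first is a decreasing function of the sender-distance $d_{vw}$, the second a decreasing function of $d_{wv}$; so the hypothesis forces $d_{vw}$ or $d_{wv}$ to be at most some bound $D(\ell_w)$ that grows polynomially with $\ell_w$. Concretely, solving $a^{\calM}_v(w) \ge \frac{1}{2n}$ gives $d_{vw} \le (2 n c_w)^{1/\alpha} \ell_w^{3/2} \ell_v^{-1/2}$, and solving $a^{\calM}_w(v) \ge \frac{1}{2n}$ gives $d_{wv} \le (2n c_v)^{1/\alpha}\ell_v^{1/2}\ell_w^{1/2}$; in either case, since $\ell_v \le n^{-2/\alpha}\ell_w$, one of the endpoints of $\ell_w$ is within distance roughly $\ell_w^{3/2}\ell_v^{-1/2}\cdot n^{1/\alpha}$ of an endpoint of $\ell_v$, which is $O(\ell_w)$ after using $\ell_v \le n^{-2/\alpha}\ell_w$ — i.e. $\ell_w$ is "close" to $\ell_v$ relative to its own length.

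Next I would partition $Q$ into length classes $Q_i = \{\ell_w \in Q : 2^i \le \ell_w/\ell_v < 2^{i+1}\}$, so there are $O(\log \Delta)$ nonempty classes, with $i \ge \frac{2}{\alpha}\log n$. The key claim is that each $Q_i$ has bounded size: within a single length class, all links have comparable length, they all sit in a ball of radius $O(\ell_w)$ around one of the four endpoints of $\ell_v$, and — crucially — $Q$ is SINR-feasible under some power assignment, so by Lemma~\ref{lem:ind-separation} (after signal-strengthening via Prop.~\ref{prop:signal-length}) the links in $Q_i$ are pairwise well-separated: $d_{ww'}d_{w'w} \ge q^2 \ell_w \ell_{w'}$. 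Packing $k$ pairwise-separated comparable-length links into a ball of radius $O(\ell_w)$ around a common point — using only the triangle inequality, not any doubling assumption — forces $k = O(1)$: if two such links both have an endpoint within $O(\ell_w)$ of a common point, their inter-link distance (a min over the four pairwise endpoint distances) is $O(\ell_w)$, contradicting the $\Omega(\ell_w)$ lower bound once we have enough of them. So $|Q_i| = O(1)$ per class, hence $|Q| = O(\log \Delta)$ — but this is not yet good enough, we need $O(\log\log\Delta)$.

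The improvement to $O(\log\log\Delta)$ is where the real work lies, and I expect it to be the main obstacle. The idea (following the argument in \cite{us:esa09full}) is that the length classes containing a link of $Q$ cannot themselves be spread over all $\log\Delta$ scales: only $O(\log\log\Delta)$ of the $Q_i$ can be nonempty. Roughly, one should look at the affectance contributions in the other direction — a link $\ell_w \in Q_i$ that is close to $\ell_v$ and much longer than $\ell_v$ contributes non-trivially to the affectance within $Q$ of any still-longer link in $Q_j$, $j > i$, that also clusters near the same endpoint of $\ell_v$; since $Q$ is feasible, each link of $Q$ absorbs total affectance $\le 1$, which caps how many distinct scales can "stack up" near a given endpoint of $\ell_v$. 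Quantifying this stacking: if $Q_i, Q_j$ are both occupied with $j \gg i$, the ratio of lengths is $2^{j-i}$, and a short-ish clustered link affects a long-ish clustered link by an amount that decays only polynomially in the length ratio for mean power — so feasibility ($\sum \le 1$) permits only about $\log$(number of scales) many occupied scales between the shortest and longest, i.e. $O(\log\log\Delta)$ occupied classes. Combining $|Q_i| = O(1)$ with $O(\log\log\Delta)$ occupied classes yields $|Q| = O(\log\log\Delta)$. I would need to be careful that the four possible "reference endpoints" of $\ell_v$ ($s_v$, $r_v$, and the two used in the min for bi-directional distance — though here the links are uni-directional, so really just $s_v$ and $r_v$ for incoming/outgoing) only multiply the bound by a constant, and that the constant $c_w$ terms are bounded (which follows from the negligible-noise reduction or can be absorbed into signal-strengthening). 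The delicate point throughout is doing the packing and stacking arguments using only the triangle inequality, with no doubling-dimension assumption — this is exactly the generalization over \cite{us:esa09full} that the paper is after, and it is the step I would spend the most care on.
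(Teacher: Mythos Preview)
First, note that the paper does not actually prove this lemma: it is imported verbatim as ``a slight variation of Lemma 4.2 in \cite{us:esa09full}'' and simply cited. So there is no in-paper proof to compare against; what follows is a comparison with the argument from \cite{us:esa09full} that the citation points to.

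Your sketch has two concrete problems.

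\emph{(i) The affectance formulas are inverted.} With mean power, $a^{\calM}_v(w)=c_w\,(P_v/P_w)(\ell_w/d_{vw})^{\alpha}=c_w\bigl(\sqrt{\ell_v\ell_w}/d_{vw}\bigr)^{\alpha}$, and symmetrically $a^{\calM}_w(v)=c_v\bigl(\sqrt{\ell_v\ell_w}/d_{wv}\bigr)^{\alpha}$. Your formulas have the power ratio upside down, which is why you end up with the bound $d_{vw}\lesssim n^{1/\alpha}\ell_w^{3/2}\ell_v^{-1/2}$; plugging in $\ell_v\le n^{-2/\alpha}\ell_w$ this is of order $n^{2/\alpha}\ell_w$, not $O(\ell_w)$ as you claim. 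The correct conclusion from the hypothesis is that some endpoint of $\ell_w$ lies within $O\bigl(\sqrt{\ell_v\ell_w}\,\bigr)$ of some endpoint of $\ell_v$ (and since $\ell_v\le\sqrt{\ell_v\ell_w}$, one may take a single reference point, say $s_v$).

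\emph{(ii) The mechanism for $O(\log\log\Delta)$ is the wrong one.} In your step~5 you propose to use that ``affectance \dots\ decays only polynomially in the length ratio for mean power'' inside $Q$, and then invoke that $Q$ is feasible so total affectance is $\le 1$. But $Q$ is feasible under \emph{some} power assignment $P$, not under $\calM$; you have no control over how $a^{P}$-affectance between links of $Q$ decays with length ratio, and mean-power affectance says nothing about feasibility of $Q$. The power-free tool you do have is Lemma~\ref{lem:ind-separation} (after signal-strengthening $Q$ to a $q^{\alpha}$-signal subset $Q'$ of comparable size via Prop.~\ref{prop:signal-length}): $d_{ww'}d_{w'w}\ge q^{2}\ell_w\ell_{w'}$ for all $\ell_w,\ell_{w'}\in Q'$.

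The actual argument then runs as follows and needs no length-class partition. Take any two links $\ell_w,\ell_{w'}\in Q'$ with $\ell_w\le\ell_{w'}$. By (i), each has an endpoint within $c\sqrt{\ell_v\ell_{w'}}$ of $s_v$, so by the triangle inequality $d_{ww'}\le \ell_w+O(\sqrt{\ell_v\ell_{w'}})$ and $d_{w'w}\le \ell_{w'}+O(\sqrt{\ell_v\ell_{w'}})\le 2\ell_{w'}$ (using $\ell_v\le n^{-2/\alpha}\ell_{w'}$). Combining with $d_{ww'}d_{w'w}\ge q^{2}\ell_w\ell_{w'}$ gives $\ell_w+O(\sqrt{\ell_v\ell_{w'}})\ge (q^{2}/2)\ell_w$, hence $\ell_{w'}\ge c'\,\ell_w^{2}/\ell_v$. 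Sorting $Q'$ by length, the ratios $\rho_k=\ell_{w_k}/\ell_v$ satisfy $\rho_{k+1}\ge c'\rho_k^{2}$ with $\rho_1\ge n^{2/\alpha}$, so they grow doubly exponentially and $|Q'|=O(\log\log\Delta)$, whence $|Q|=O(\log\log\Delta)$. All of this uses only the triangle inequality, so it holds in general metrics.
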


\myparagraph{Partitioning the input into levels.}
For an input set $L$, assume the shortest link has length $\ell_{\min}$. Define non-overlapping sets,
$L_k = \{u \in L: \ell_{\min} D^{k+1} > \ell_u \geq \ell_{\min} D^k\}$ for all $k \geq 1$, 
where $D = 8 n^{2/\alpha}$. 
Let us call these sets \emph{levels}. 
We partition $L$ into $M_o$ and $M_e$, where
$M_o = \cup_{k \text{ is odd}} L_k$ and $M_e = \cup_{k \text{ is even}} L_k$.
$M_o$ (and $M_e$) is a union of levels such that intra-level distances
vary by no more than $D$ while inter-level distances vary by at least
$D$. In what follows, we will present an $O(\log n + \log \log
\Delta)$-approximation algorithm that takes as input such a
set. Since, $L = M_o \cup M_e$, we can claim the same approximation
factor for any input set with additional multiplicative factor of $2$.

\myparagraph{Algorithm and analysis.} 
We will adopt the convention that the levels are sorted in increasing
order, thus if $r > s$, links in $L_r$ are longer than links in $L_s$.

\begin{figure}[htbp]
\begin{pseudo}
  \alg{PC}($M_x=\cup_{r\geq0}L_r$)
     $S\leftarrow\emptyset$
     for $r\leftarrow{}0\ldots$ do
       Divide $L_r{}={}\cup{}Q_p$ such that lengths 
          of links within $Q_p$ varies by a 
          factor of at most $2$
       for each $Q_p$ do
           $S_p{}\leftarrow{}\emptyset$
           for $\ell_v{}\in{}Q_p$ in increasing order do
                add $\ell_v$ to $S_p$ if both:
                     1. $a_{S_p}(v){}+{}a_v(S_p)<\gamma$, and
                     2. $S$ contains no link $\ell_w$ s.t. 
                         $\max{(a^{\calM}_v(w),{}a^{\calM}_w(v))}{}\geq{}\frac{1}{2n}$
       $q\leftarrow{}\arg_p\max{|S_p|}$
       Set $S\leftarrow{}S\cup{}S_q$
     Output $X=\{\ell_v\in{}S:a_{S}(v)\le{}1\}$
\end{pseudo}
  \caption{Algorithm for \prob{PC-capacity}}
\end{figure}

\begin{lemma}
Consider the set $S$ selected by $\alg{PC}(L)$. Then $S = \Omega(\frac{\PCopt}{\log n + \log \log \Delta})$.
\end{lemma}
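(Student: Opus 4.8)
The plan is to bound $|S|$ from below by comparing it, level by level, to an optimal $\PCopt$-achieving solution, using the ideas already assembled in the paper. First I would fix an optimal feasible set $O$ of size $\PCopt$ together with the power assignment making it feasible, and after mild signal-strengthening (Proposition~\ref{prop:signal-length}) assume $O$ is a $\delta$-signal set for a suitable constant $\delta$ so that the separation lemma (Lemma~\ref{lem:ind-separation}) and the Red-Blue Lemma (Lemma~\ref{lem:rb-links}) apply. The key structural point is that $M_x$ has the property that intra-level lengths vary by at most a factor $D$ while inter-level lengths vary by at least $D = 8n^{2/\alpha}$, so two links in different levels are far apart in length; this is exactly the setting of Lemma~\ref{wellsep}.

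The heart of the argument is a charging scheme for the links of $O$ that are \emph{not} selected by $\alg{PC}$. A link $\ell_v \in O$ can fail to enter $S$ for one of two reasons: (1) within its own group $Q_p$, condition~1 ($a_{S_p}(v) + a_v(S_p) < \gamma$) failed, or (2) condition~2 failed, i.e.\ $S$ already contained some shorter link $\ell_w$ with $\max(a^{\calM}_v(w), a^{\calM}_w(v)) \ge \frac{1}{2n}$. For failures of type~(1), within a single group $Q_p$ the links are nearly equilength, so I would invoke Theorem~\ref{lipscitz-nealy-eql} (mean power is Lipschitz) together with the Red-Blue Lemma exactly as in Lemma~\ref{lem:rb-links}/Lemma~\ref{lem:rb-applic} to argue that the greedy choice within $Q_p$ captures a constant fraction of $|O \cap Q_p|$; since we then take $S_q$ with $q = \arg\max_p |S_p|$ and the links in $L_r$ have at most $O(\log n)$ distinct length classes $Q_p$ (as lengths within a level span a factor $D = 8n^{2/\alpha}$, giving $O(\log D) = O(\log n)$ groups after $\alpha \ge $ constant is absorbed — here one must be careful, since $\alpha$ may be small), we lose at most a $O(\log n)$ factor per level from this step. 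For failures of type~(2), each surviving link $\ell_w$ already in $S$ can ``block'' from higher levels only those $O$-links that are strongly affected by it; but a link $\ell_v$ much longer than $\ell_w$ with $\max(a^{\calM}_v(w),a^{\calM}_w(v)) \ge \frac{1}{2n}$ — applying Lemma~\ref{wellsep} with $Q$ the set of such longer $O$-links and $\ell_v \to \ell_w$ — means at most $O(\log\log \Delta)$ links of any single feasible set can be blocked by one given $\ell_w$. Summing over all $\ell_w \in S$, the total number of type-(2) losses is $O(|S| \log\log\Delta)$.

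Putting the two together: the number of $O$-links lost is at most $O(\log n)\cdot|S|$ from type-(1) charging (distributing the per-level loss against the selected $S_q$) plus $O(\log\log\Delta)\cdot|S|$ from type-(2) blocking, so $\PCopt = |O| \le |S| + O((\log n + \log\log\Delta)|S|) = O((\log n + \log\log\Delta))\cdot|S|$, which rearranges to the claimed $|S| = \Omega(\PCopt/(\log n + \log\log\Delta))$. A subtle point I would handle carefully is the exact accounting across levels: the per-level argument compares $|S_q^{(r)}|$ to $|O \cap L_r|$ up to $O(\log n)$, but we want a \emph{global} comparison, so I would either argue that $\sum_r |S_q^{(r)}| = |S|$ directly dominates $\frac{1}{\log n}\sum_r |O\cap L_r| = \frac{1}{\log n}|O|$ after discounting type-(2) losses, or first pass to the single worst level and note $|O| \le (\text{number of levels})\cdot \max_r |O\cap L_r|$ — but the number of levels can be $\Theta(\log\Delta)$, which is too lossy, so the former (summing over levels, not maxing) is the right route.

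The main obstacle I anticipate is precisely this cross-level bookkeeping combined with the type-(2) ``memory'' that $S$ carries from one level to the next: because condition~2 refers to the global $S$ (not to $S_q$ within the current level), the greedy choice at level $r$ is constrained by everything chosen at levels $<r$, so the clean per-group Red-Blue argument has to be run against $O \cap L_r$ \emph{minus} the links already blocked, and one must verify that these two loss sources don't compound multiplicatively. The resolution is that Lemma~\ref{wellsep} caps type-(2) blocking at $O(\log\log\Delta)$ per blocker uniformly, so it contributes additively (total $O(|S|\log\log\Delta)$) rather than multiplying the $O(\log n)$ factor — establishing that cleanly is the crux of the proof.
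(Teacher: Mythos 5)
Your proposal is correct and follows essentially the same route as the paper: partition $O$ and $S$ by level, use Lemma~\ref{wellsep} to bound the condition-2 (``blocking'') losses by $O(\log\log\Delta)$ per selected link so that they contribute additively as $O(|S|\log\log\Delta)$, run the per-level nearly-equilength/Red-Blue argument of Theorems~\ref{lipscitz-nealy-eql} and~\ref{lipdelta} against $O_r$ minus the blocked links to get $s_r \ge \frac{1}{c\log n}(o_r - \sum_{q<r}t_{qr})$, and sum over levels rather than maxing. Your identification of the cross-level bookkeeping and the additivity (not multiplicativity) of the two loss sources as the crux matches the paper's double-sum interchange $\sum_r\sum_{q<r}t_{qr} = \sum_q\sum_{r>q}t_{qr} \le \sum_q c_1 s_q\log\log\Delta$ exactly.
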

\begin{proof}
We partition both $\PCopt$ and $S$ into levels: 
$\PCopt = \cup_r O_r$ where $O_r = \PCopt \cap L_r$ and  $S = \cup_r S_r$ where $S_r = S \cap L_r$.  Also let $o_r = |O_r|, s_r = |S_r|$. 
We aim to prove $S = \Omega(\frac{\PCopt}{\log n + \log \log \Delta})$ or equivalently 
$\sum_r s_r = \Omega(\frac{1}{\log n + \log \log \Delta}) \sum_r o_r$.

Now let us define $T_{qr} \subseteq O_r$ for $r > q$ as the set of links in $O_r$ that fail condition 2 
due to links selected from level $q < r$. In other words, 
\begin{align*}
T_{qr}  = \{ w \in O_r : & \max{(a^{\calM}_v(w), a^{\calM}_w(v))} \geq \frac{1}{2n}, \\ 
& \text{for some } v \in S \cap L_q \} \ .
\end{align*}

\noindent Let $t_{qr} = |T_{qr}|$. By Lemma \ref{wellsep},
\begin{equation}
\sum_{r > q} t_{qr} \leq c_1 s_q \log \log \Delta \ ,
\label{app:tqrbnd}
\end{equation}
for some constant $c_1$.

With this in mind let us investigate what \alg{PC} does for a single $L_r$. We find that 
\alg{PC} runs on $L_r$ the algorithm suggested by Thm.~\ref{lipdelta}. 
This is done by partitioning the links in to nearly equilength sets $Q_p$, running
algorithm $\alg{C}$ separately on each $Q_p$ and choosing the largest feasible set found.
The only difference is due to condition 2, in other words, the $O(\log \Delta)$ approximation factor achieved is not
in relation to $O_r$ but rather $O_r \setminus \sum_{q<r} T_{qr}$.
Since for any $L_r$, $\log \Delta = \theta(\log n)$ by construction, we can claim that 
\[ s_r \geq \frac{1}{c \log n} (o_r - \sum_{q  < r} t_{qr})\ , \]
 for some constant $c$.
Using Eqn.~\ref{app:tqrbnd}, this implies that
\begin{align*}
c \log n \sum_r s_r & \geq \sum_r o_r - \sum_r \sum_{q < r} t_{qr} \\
 & =  \sum_r o_r - \sum_q \sum_{r > q} t_{qr} \\
 & \geq  \sum_r o_r - \sum_q c_1 s_q \log \log \Delta\ ,
\end{align*}
which yields that $(c \log n + c_1 \log\log \Delta) S \ge \PCopt$, as desired.
\end{proof}

To complete the proof of Thm.~\ref{pcunidir}, all we need is to relate $X$ to $S$.
For $S$, the average affectance from links at the same level is
bounded by construction.
Additionally, also by the design of the algorithm, one can observe that the 
total affectance from links at other levels is at most $\frac{1}{2n} \times n = \frac{1}{2}$.
Now arguing that $X$ is within a constant factor of $S$ takes the same route
we have already seen via Lemma ~\ref{matrixsumclaim} and Proposition ~\ref{prop:signal-length}.

\subsection{Bi-Directional Capacity with Power Control.}

Let $\biopt_P(L)$ denote the optimal bi-directional capacity with
respect to power assignment $P$ and
$\bipcopt(L)$ the optimal bi-directional capacity with respect to any
power assignment.

We show that $\calM$ is optimal for bi-directional capacity modulo
constant factors.
\begin{theorem}
For any linkset $L$, $\biopt_\calM(L) = O(\bipcopt(L))$.
\label{thm:mean-vs-any-bidi}
\end{theorem}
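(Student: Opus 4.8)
The plan is to mimic the structure of the uni-directional argument but exploit the symmetry of bi-directional distances to avoid the $\log\log\Delta$ and $\log n$ losses entirely. Take an optimal bi-directional set $L^* $ with $|L^*| = \bipcopt(L)$, feasible under some power assignment $P$; by Proposition~\ref{prop:signal-length} we may assume $L^*$ is a $c^\alpha$-signal set (for a suitable constant $c$, say $c=6$) at the cost of only a constant factor. First I would partition $L^*$ into nearly-equilength classes by length, and then group those classes into levels separated by large length gaps (as in the \alg{PC} partitioning, with $D = 8n^{2/\alpha}$), splitting into odd and even levels so that within a level lengths vary by at most $D$ and across levels by at least $D$. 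The key point is that, because bi-directional distances are symmetric up to the triangle inequality, two links in a $c^\alpha$-signal set that are far apart in length essentially cannot affect each other: if $\ell_w$ is much longer than $\ell_v$, then $d_{wv}, d_{vw} \geq c\sqrt{\ell_v\ell_w}$ by Lemma~\ref{lem:ind-separation}, and under mean power $\calM$ this forces $a^\calM_w(v)$ and $a^\calM_v(w)$ to be small — in fact summing over all longer links at geometrically increasing length scales gives a geometric series bounded by a constant. So the total cross-level affectance on any link (in either direction) is $O(1)$, not the $\frac12$-type bound that still left $n$ links interfering in the uni-directional case; the separation here is genuinely by length \emph{ratio}, so the number of contributing levels is irrelevant.

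Next I would handle a single level. Within one level the links are nearly equilength (after the $Q_p$ subdivision), so by Theorem~\ref{lipscitz-nealy-eql} applied to mean power — which is Lipschitz — we get $\PCopt$ on that sublevel is $O(OPT^B_\calM)$ on that sublevel. Actually the cleaner route is to observe that for nearly-equilength links, $\calM$ is within a constant of $\calU$, and then run the approximately-symmetric matrix argument: form the affectance matrix $\affm$ restricted to the level, show it is $q$-approximately-symmetric for a constant $q$ (using that bi-directional distances satisfy $d_{uv} \le 2d_{vu}$ type bounds, exactly as in the proof of Theorem~\ref{lipscitz-nealy-eql}), note $\affm \mathbf{p} \le \mathbf{p}$ for the witnessing power vector, apply Lemma~\ref{lem:nearly-symm} to get $\trace(\affm) \le (q+1)|L^*_{\text{level}}|$, then Claim~\ref{matrixsumclaim} to extract a constant fraction of the level whose affectance is $O(1)$ under $\calM$, and finally Proposition~\ref{prop:signal-length} to make it $1$-signal. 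That gives a constant-factor-size feasible subset of each level under $\calM$.

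Finally I would combine the two ingredients. For each level $L^*_k$, the above yields a subset $S_k \subseteq L^*_k$ of size $\Omega(|L^*_k|)$ that is feasible (under $\calM$) when considered in isolation. Now I take the union $S = \bigcup_k S_k$ over, say, the odd levels (the larger of odd/even, losing a factor $2$). The total affectance on any $\ell_v \in S$ is its within-level affectance ($O(1)$) plus its cross-level affectance ($O(1)$ by the geometric-series argument above), so $a^\calM_S(v) = O(1)$ uniformly; applying Proposition~\ref{prop:signal-length} once more strengthens $S$ to a genuinely $1$-signal — i.e., SINR-feasible — subset of size $\Omega(|L^*|) = \Omega(\bipcopt(L))$, and since this subset is feasible under $\calM$ we get $\biopt_\calM(L) = \Omega(\bipcopt(L))$, which is the claim. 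The main obstacle I expect is getting the cross-level affectance bound genuinely clean in the bi-directional case: one must carefully use the min-of-four-distances definition of $d_{wv}$ together with Lemma~\ref{lem:ind-separation} to show that a link at length scale $\approx \ell_v D^j$ contributes affectance (both $a^\calM_w(v)$ and $a^\calm_v(w)$) decaying like $D^{-\Omega(j\alpha)}$ — so that the sum over $j \ge 1$ and over the (at most $n$) links at each scale telescopes to a constant — and this is exactly where condition~2 of \alg{PC} and the factor $D = 8n^{2/\alpha}$ were doing work in the uni-directional proof; here the symmetry lets the same bound hold in \emph{both} affectance directions simultaneously, which is what removes the extra logarithmic factors.
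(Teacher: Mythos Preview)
Your cross-level affectance bound is the gap. You claim that a link $\ell_w$ at length scale $\approx \ell_v D^j$ contributes $a^\calM_w(v)$ decaying like $D^{-\Omega(j\alpha)}$, but this is false. In the bi-directional setting Lemma~\ref{lem:ind-separation} gives only $d_{wv}\ge c\sqrt{\ell_v\ell_w}$ (since $d_{wv}=d_{vw}$), and plugging into the mean-power affectance $a^\calM_w(v)=\bigl(\sqrt{\ell_v\ell_w}/d_{wv}\bigr)^\alpha$ yields $a^\calM_w(v)\le c^{-\alpha}$, a constant \emph{independent of the length ratio}. There is no geometric decay in $j$, so your sum over levels does not telescope; with many links across levels the cross-level affectance on a single $\ell_v$ can be $\Theta(n)$ from this bound alone. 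The level-partition machinery you imported from the uni-directional argument does not buy anything here.

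The paper's proof sidesteps all of this by noticing the very fact your computation above exhibits: under $\calM$ in the bi-directional model the affectance is \emph{exactly} symmetric, $a^\calM_v(w)=a^\calM_w(v)$, on the entire set at once --- not just within a nearly-equilength class. Writing $P(v)=P'(v)\sqrt{\ell_v^\alpha}$ one gets $a^P_v(u)=(P'(v)/P'(u))\,a^\calM_v(u)$, so feasibility of $S$ under $P$ reads $\affm'\mathbf{p'}\le\mathbf{p'}$ where $\affm'$ is the (globally symmetric) $\calM$-affectance matrix. Lemma~\ref{lem:nearly-symm} with $q=1$ then gives $\trace(\affm')\le 2|S|$ directly; Claim~\ref{matrixsumclaim} extracts a half-sized subset with $a^\calM_{S'}(u)\le 4$, and Proposition~\ref{prop:signal-length} finishes. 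In other words, the approximately-symmetric matrix step you planned to use \emph{inside} each level already works on the whole optimal set, and the rest of your scaffolding is unnecessary.
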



\begin{proof}
Let $P$ be a power assignment with which the optimal capacity
$\bipcopt(L)$ is attained, and let $S$ be the feasible subset in $L$
of cardinality $\bipcopt(L) = \biopt_P(L)$.
Let $P'$ be a function defined by 
$P(v) = P'(v) \cdot \calM(\ell_v) = P'(v) \cdot \sqrt{\ell_v^{\alpha}}$.
Recall that in the bi-directional setting,
$d_{uv} = d_{vu}$, and 
affectance under $\calM$ is symmetric and given by
$a_v^\calM(w) = \left(\frac{\sqrt{\ell_v\ell_w}}{d_{vw}}\right)^{\alpha} = a^\calM_w(v)$.
Observe that 
\begin{align*}
 a_v^P(u) & = \frac{P(v)}{P(u)} \cdot
 \left(\frac{\ell_u}{d_{vu}}\right)^\alpha \\
    & = \frac{P'(v) \ell_v^{\alpha/2}}{P'(u) \ell_u^{\alpha/2}} \cdot 
          \left(\frac{\ell_u}{d_{vu}}\right)^\alpha \\
    & = \frac{P'(v)}{P'(u)} \cdot a_{v}^\calM(u) \ .
\end{align*}
Since $S$ is feasible w.r.t.\ $P$,
it holds that for each link $\ell_u$ in $S$, 
$\sum_{\ell_v \in S} a^P_v(u) \le 1$, or
$\sum_{\ell_v \in S} P'(v) a^\calM_{v}(u) \le P'(u)$.
Thus, if $\mathbf{A'} = (a'_{uv})$ where $a'_{uv} = a_{v}^\calM(u)$ is the affectance matrix for $\calM$,
we have that $\mathbf{A'} \mathbf{p'} \le \mathbf{p'}$, where 
$\mathbf{p'}$ is the vector of the values $P'(u)$ for all $u$.
Since $\mathbf{A'}$ is symmetric, 
$\trace(\mathbf{A'}) \le 2 n$, by Lemma \ref{lem:nearly-symm},
and by Claim~\ref{matrixsumclaim} there is a subset $S'$ of $S$ with at least $|S|/2$ links 
such that $\sum_{v \in S'} a'_{uv} \le 4$ for each link $\ell_u \in S'$.
Namely, for each link $\ell_u \in S'$,
\[ a_{S'}^\calM(u) = \sum_{v \in S'} a_{v}^\calM(u) = \sum_{v \in S'}
a'_{uv} \le 4 \ . \]
Hence, $S'$ is a $1/4$-signal set under $\calM$.
Finally, by signal-strengthening, there is a 1-signal subset $S''$ in
$S'$ of size $|S'|/32 \ge |S|/64  = \Omega(\bipcopt(L))$.
\end{proof}

In light of the fact that \alg{C(L)} is a constant factor algorithm for  $\calM$ by Thm.~\ref{consfactorthm} (which works for bi-directional links as well), Thm.~\ref{pcbidir}
is a simple consequence of Thm.~\ref{thm:mean-vs-any-bidi}
.

\section{Scheduling vs.\ Capacity.}

Our capacity results immediately imply results for the corresponding
scheduling problems, within a logarithmic factor. We state these
below. Recall that the scheduling problem is to the capacity problem
as what the graph coloring problem is to the independent set problem.

\begin{corollary}
There is an $O(\log n)$-approximation algorithm for link scheduling 
under any fixed length-monotone,
sub-linear power assignment $\calP$.
Further, for scheduling with power control, by using square-root
power, we get both $O(\log \Delta \log n)$- and 
$O((\log \log \Delta + \log n) \log n)$-approximation for
uni-directional case and $O(\log n)$-approximation for bi-directional case.
All results apply in general distance metrics and for all fixed $\alpha > 0$.
\end{corollary}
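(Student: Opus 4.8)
The plan is to derive each clause of the corollary as a black-box consequence of the capacity approximations already proven, together with a single generic ``capacity-to-scheduling'' reduction that costs a multiplicative $O(\log n)$ factor. First I would state and sketch that generic reduction: given a $\rho$-approximation algorithm for $\calP$-Capacity (respectively PC-Capacity, Bi-PC-Capacity), repeatedly invoke it, remove the returned feasible set, and recurse on the remainder, assigning each returned set to its own slot. Since one slot of the optimal schedule is itself a feasible set, the optimum capacity on the residual instance is always at least $|L_{\mathrm{res}}|/\mathrm{OPT}_{\mathrm{sched}}$, so each round removes a $\Omega(1/(\rho\,\mathrm{OPT}_{\mathrm{sched}}))$ fraction of the surviving links; after $O(\rho\,\mathrm{OPT}_{\mathrm{sched}}\log n)$ rounds all links are scheduled, giving an $O(\rho\log n)$-approximation. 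I would note this reduction is oblivious to the metric, to $\alpha$, and to whether power is fixed, uni-directional, or bi-directional, so it applies uniformly in all the settings of the corollary.

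Next I would instantiate $\rho$ from the earlier theorems. For fixed length-monotone sub-linear $\calP$, Theorem~\ref{consfactorthm} gives $\rho=O(1)$, hence $O(\log n)$-approximate scheduling. For uni-directional power control via square-root (mean) power, Theorem~\ref{pcunidir} gives $\rho=O(\log\log\Delta+\log n)$, and separately the $\calM$-instance together with Theorem~\ref{lipdelta} (applied through the capacity algorithm) gives the weaker-in-$\Delta$ bound $\rho=O(\log\Delta)$; multiplying each by the $O(\log n)$ scheduling overhead yields the stated $O(\log\Delta\log n)$ and $O((\log\log\Delta+\log n)\log n)$ bounds. For bi-directional power control, Theorem~\ref{pcbidir} gives $\rho=O(1)$ and hence $O(\log n)$-approximate scheduling. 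Throughout, the approximation is measured against the optimal schedule under \emph{any} power assignment in the power-control cases, which is legitimate because the capacity algorithms there are themselves compared against $\PCopt$; I would make this comparison explicit so the reader sees the schedule using only square-root power competes with the best schedule using arbitrary power.

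One subtlety I would address carefully rather than wave past: in the capacity subroutine the returned set is feasible under the \emph{fixed} power assignment $\calP$ (or $\calM$), so the slots produced by the reduction are consistently powered and can actually be executed; there is no need to re-power links between slots. For the power-control variants this is exactly the point of using the \emph{oblivious} assignment $\calM$ — every slot uses the same length-indexed powers. I would also remark that the dependence on $\alpha$ inherited from the capacity theorems is exponential but constant for fixed $\alpha$, matching the rest of the paper, and that all the ingredients (Theorems~\ref{consfactorthm}, \ref{lipdelta}, \ref{pcunidir}, \ref{pcbidir} and Proposition~\ref{prop:signal-length}) have already been established for arbitrary metrics and all $\alpha>0$, so the corollary's final sentence requires no extra work.

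The main obstacle is not any deep argument — the reduction is standard — but rather bookkeeping: ensuring the ``one optimal slot is a feasible set'' comparison is set up against the correct optimum in each of the four cases (fixed $\calP$; PC uni-directional measured against $\PCopt$; PC bi-directional measured against $\bipcopt$), and confirming that the geometric shrinkage argument tolerates the constant-factor losses hidden in the $O(1)$-approximations (it does, since $(1-\Omega(1/(\rho k)))^{O(\rho k\log n)}<1/n$ for the appropriate constants). I would therefore spend most of the write-up pinning down the residual-instance inequality and the iteration count, and treat the per-case substitutions of $\rho$ as one-line corollaries of the cited theorems.
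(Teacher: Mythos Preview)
Your proposal is correct and matches the paper's intended argument; the paper itself states the corollary without proof, treating the capacity-to-scheduling reduction as folklore, and you have filled in exactly that standard geometric-shrinkage argument together with the correct instantiations of $\rho$ from Theorems~\ref{consfactorthm}, \ref{lipdelta}, \ref{pcunidir}, and \ref{pcbidir}. Your care in noting that the comparison in the power-control cases is against $\PCopt$ (respectively $\bipcopt$), and that all slots are powered by the same oblivious assignment $\calM$, is appropriate and slightly more explicit than the paper itself.
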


The relationship between capacity and scheduling has remained an open
question for some time. For instance, while a constant factor
approximation for capacity with uniform power (in \emph{doubling} metrics) was
obtained in \cite{GHWW09}, a corresponding result for scheduling has
been elusive, with a faulty claim in \cite{HW09}.
We show here that scheduling can indeed be more difficult by a
logarithmic factor. 

\subsection{A Lower Bound.}
In this section we provide bounds on how well oblivious power assignments can do in a general metric space. 
Lower bounds of this type have been proved previously for uni-directional links on the line \cite{FKRV09,us:esa09full},
giving
$\Omega(n)$ and $\Omega(\log \log \Delta)$ lower bounds. However, these
arguments don't say anything about bi-directional scheduling.

We  show a construction of equilength links that are feasible with
some power assignment, but  using oblivious power, one needs $\Omega(\log n)$ slots to schedule the links.
The reader may recall that for the capacity problem, in contrast, $O(1)$-approximation is possible for
the bi-directional case via oblivious power (Thm.~\ref{thm:mean-vs-any-bidi}).
Since the links have equal lengths, the only possible oblivious assignment is $\calU$.
The construction works for both uni-directional and bi-directional cases.

We start with a single link $\ell_{\hi}$ and define the other links in the link set $L$ in relation to $\ell_{\hi}$. For some 
suitably large $K$, define sets $S_k$ of links 
 for $k = 1 \ldots K$, such that $|S_k| = 2^{2(k-1)}$, and for all 
$i \in S_k$, $d_{i \hi} = (\gamma |S_k|)^\frac{1}{\alpha}$, where
 $\gamma$ is a fixed constant to
be defined later. Now, $L = \{\ell_{\hi}\} \cup (\cup_k S_k)$, and the total number of links $n = |L| = 1+ \sum_{k = 1 \ldots K} |S_k|$. It is easy to see that
\begin{equation}
\label{KeqLogn}
K = \theta(\log n)\ .
\end{equation}
Furthermore, let $\ell_i = 1, \forall i$ and $d_{ij} = d_{ji} = d_{i \hi} + d_{j \hi} , \forall i, j \neq \hi$.
All other distances are defined by transitivity. Let $N = 0$ and $\beta = 1$.

\begin{theorem}
\label{th:lb}
There exists a power-assignment $P$ such that $L$ is SINR-feasible
using $P$, whereas $L$ cannot be partitioned into $o(\log n)$
SINR-feasible sets using uniform power $\calU$. 
\end{theorem}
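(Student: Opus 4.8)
The plan is to verify the two halves of the statement separately. For the positive half, I would exhibit an explicit power assignment $P$ making all of $L$ feasible in one slot. The natural choice is to give $\ell_{\hi}$ a large power and to scale the power of links in $S_k$ so as to compensate for their distance $d_{i\hi} = (\gamma|S_k|)^{1/\alpha}$ to $\ell_{\hi}$; concretely one sets $P_i$ proportional to $|S_k|$ for $i \in S_k$ (so that $P_i/d_{i\hi}^\alpha$ is a constant, the same for every $i$), and $P_{\hi}$ a suitably large constant. Then I would bound the affectance on each link. For $\ell_{\hi}$: the contribution of $S_k$ is $\sum_{i\in S_k} P_i/(P_{\hi} d_{i\hi}^\alpha) = |S_k|\cdot |S_k|/(\gamma P_{\hi} |S_k|) = |S_k|/(\gamma P_{\hi})$, which is a geometric-like sum in $k$; choosing $P_{\hi}$ large enough (it needs to dominate $\sum_k |S_k| = \Theta(4^K)$, so $P_{\hi}$ is exponential in $K$, which is fine) makes $a_L(\hi) \le 1$. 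For a link $i \in S_k$: interference from $\ell_{\hi}$ is $P_{\hi}/(P_i d_{i\hi}^\alpha)$, and since $d_{ij}=d_{i\hi}+d_{j\hi}$ for $i,j\neq\hi$, the interference from another link $j \in S_{k'}$ is $P_j/(P_i (d_{i\hi}+d_{j\hi})^\alpha) \le P_j/(P_i\, d_{j\hi}^\alpha)$, a constant independent of which link $j$ is, so summed over all $n$ links it is $O(n)$ times a constant $1/\gamma$-scaled term. This forces the choice of $\gamma$: I pick $\gamma$ large enough (polynomial in $n$, or rather so that $\gamma$ beats the total number of links) that every such affectance sum is at most $1$. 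One has to check the $\ell_{\hi}$-to-$i$ term is also controlled, but since $P_{\hi}$ is huge this is the delicate point — it is handled because $d_{i\hi}^\alpha$ grows with $|S_k|$ and the power ratio, while large, is outweighed once $\gamma$ is chosen after $P_{\hi}$. So the constants are chosen in the order: $K$, then $P_{\hi}$, then $\gamma$.

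For the negative half, I would argue that under uniform power $\calU$, any feasible set $S \subseteq L$ can contain links from only $O(1)$ of the sets $S_k$, which immediately gives that $\Omega(K) = \Omega(\log n)$ slots are needed. The key observation is that $\ell_{\hi}$ together with links from $S_k$ behaves like a "star": if a feasible slot contains $\ell_{\hi}$ and some links of $S_k$, then each such link $i$ contributes affectance $1/d_{i\hi}^\alpha = 1/(\gamma |S_k|)$ on $\ell_{\hi}$, so at most $\gamma|S_k|$ of them fit — not restrictive by itself. The real constraint comes from looking at a link $i \in S_k$ and the interference it receives from $\ell_{\hi}$: under uniform power this is $(\ell_i/d_{\hi i})^\alpha = 1/d_{i\hi}^\alpha = 1/(\gamma|S_k|)$, which is tiny, so $\ell_{\hi}$ alone does not block anything either. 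So the blocking must come from mutual interference among links of different $S_k$'s: a link $i\in S_k$ and a link $j\in S_{k'}$ with $k' < k$ are at distance $d_{i\hi}+d_{j\hi}$, and I would show that if $k$ is large relative to $k'$ the short link $j$ (all links have length $1$, so "short" is the wrong word) — rather, the link $j$ sitting closer to $\ell_{\hi}$ sees interference $(1/(d_{i\hi}+d_{j\hi}))^\alpha$ from $i$, and summing over all the $|S_k|=2^{2(k-1)}$ links of $S_k$ gives roughly $|S_k|/d_{i\hi}^\alpha = |S_k|/(\gamma|S_k|)=1/\gamma$; that is still only a constant. I think the correct accounting is the reverse: consider the longer-distance link $i \in S_k$ and count interference from the many links $j$ in levels $k' \ge k$; but the cleanest route is a counting argument over all pairs.

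Therefore I expect the main obstacle to be the negative direction: pinning down exactly which pairwise-affectance sum, under uniform power, is forced to exceed $1$ whenever a slot meets too many of the $S_k$'s. The right statement to prove is a \emph{trace} bound — the uniform-power affectance matrix restricted to $\{\ell_{\hi}\}\cup S_{k_1}\cup\dots\cup S_{k_m}$ has total affectance (sum of all entries) that is $\Omega(m)$ while feasibility of an $N$-link set forces the trace to be at most $N$ (each row sums to $\le 1$); combined with Claim~\ref{matrixsumclaim} this yields that $m = O(1)$ for a feasible $S$, hence $\Omega(K)$ slots. Concretely, the contribution to the trace from the pair $(\ell_{\hi}, S_k)$ — i.e.\ $a_{\hi}(S_k) + a_{S_k}(\hi)$ under $\calU$ — is $|S_k|\cdot(1/(\gamma|S_k|)) + |S_k|\cdot(1/(\gamma|S_k|)) = 2/\gamma$, a constant, so this is not enough and the trace must instead be driven by inter-level pairs $S_k \times S_{k'}$: the pair contributes about $|S_k|\,|S_{k'}|/(d_{i\hi}+d_{j\hi})^\alpha$, and by picking the geometric growth rate of the $|S_k|$ and the definition $d_{i\hi}=(\gamma|S_k|)^{1/\alpha}$ I can make this a constant per pair that does \emph{not} decay, giving $\Omega(m^2)$ total trace over $m$ levels — which, against the feasibility bound of $N = 1 + \sum |S_{k_i}|$ links, still needs a careful balancing. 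Getting this bookkeeping right, so that "many levels in one slot" provably violates feasibility, is the crux; everything else (translating a per-slot bound of $O(1)$ levels into $\Omega(\log n)$ slots via \eqref{KeqLogn}, and invoking Claim~\ref{matrixsumclaim} to clean up) is routine.
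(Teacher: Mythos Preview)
Your proposal has a genuine gap in \emph{both} halves, and the two are linked.

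\textbf{Positive half.} Your power assignment ($P_i\propto |S_k|$, then $P_{\hi}$ huge, then $\gamma$ chosen last, polynomial in $n$) can indeed be made to satisfy the SINR constraints for $L$, but it does so by blowing up $\gamma$. That is fatal: $\gamma$ is part of the \emph{instance}, and every uniform-power affectance in this construction carries a factor $1/\gamma$. With $\gamma$ polynomial in $n$, a quick check shows $a^{\calU}_L(j)=O(K/\gamma)=O(\log n/\mathrm{poly}(n))<1$ for every link $j$, so the whole set $L$ becomes feasible in a single slot under uniform power and there is nothing left to prove in the negative half. The paper avoids this by keeping $\gamma$ an absolute constant ($\gamma=14$) and instead using a geometrically \emph{decreasing} assignment $P_{\hi}=1$, $P_i=2^{-k}$ for $i\in S_k$; the three-term bound (contribution from $\ell_{\hi}$, from higher levels, from lower levels) then telescopes to $O(1/\gamma)\cdot P_j$.

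\textbf{Negative half.} The per-slot lemma you aim for --- ``a feasible set under $\calU$ meets only $O(1)$ of the $S_k$'' --- is simply false: one link from each $S_k$ is $\calU$-feasible (the affectance on a link in level $\tilde k$ from the single link in level $k>\tilde k$ is at most $1/(\gamma|S_k|)$, a geometric series), yet meets all $K$ levels. Your trace argument cannot rescue this, because for such a set the number of links $N$ equals $K$ and a trace of order $K$ is no contradiction. The correct invariant is not ``few levels'' but ``small total fraction'': if $Q$ is $\calU$-feasible and $f_k=|Q\cap S_k|/|S_k|$, then $\sum_k f_k=O(1)$. The paper gets this by taking any link $j$ in the \emph{lowest} level $\tilde k$ present in $Q$ and bounding $a^{\calU}_Q(j)\ge \frac{1}{2^\alpha\gamma}\sum_{k>\tilde k} f_k$; feasibility then forces $\sum_k f_k\le 2^\alpha\gamma+1$. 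A pigeonhole over the $r$ slots (the paper's Lemma~\ref{simplesetsum}: some slot has $\sum_k f_k\ge K/r$) then yields $r=\Omega(K)=\Omega(\log n)$. This fraction-based accounting, together with the ``look at the lowest level present'' trick, is the idea you are missing.
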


The following two technical lemmas imply the two claims in Thm.~\ref{th:lb}. Let
the $n \times n$ matrix $\affm = a_{ij}$ be defined by entries $a_{ii} = 0$ and $a_{ij} = a^\calU_j(i)$, and the vector
$\mathbf{p}$ be defined by $\mathbf{p}_i = P_i$.
\begin{lemma}
\label{lbmain1}
There exists a power assignment $P$ such that
$(\affm \mathbf{p})_i \leq \mathbf{p}_i, \forall i \in L \label{eqn2}$.
\end{lemma}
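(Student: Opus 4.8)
The plan is to exhibit an explicit power assignment $P$ and verify $\affm \mathbf{p} \le \mathbf{p}$ directly from the geometry. The natural choice is to give the central link $\ell_{\hi}$ a large power and to give all links in a shell $S_k$ a common power $p_k$, chosen so that the interference budgets balance. Concretely I would set $\mathbf{p}_i = p_k$ for $i \in S_k$ and $\mathbf{p}_{\hi} = p_0$ for suitable positive reals $p_0 \ge p_1 \ge \cdots \ge p_K$, decreasing geometrically; the exact ratios will be forced by the inequalities below. Since $N = 0$ and $\beta = 1$, we have $c_v = 1$ for every link, so $a^\calU_j(i) = (\ell_i/d_{ji})^\alpha \cdot (P_j/P_i) = (P_j/P_i)\, d_{ji}^{-\alpha}$ using $\ell_i = 1$.

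**The two types of constraints.**
There are two rows of $\affm$ to check. First, the row for $\ell_{\hi}$: its affectance is $\sum_k \sum_{i \in S_k} (P_i/P_{\hi}) d_{i\hi}^{-\alpha} = \sum_k |S_k| \cdot (p_k/p_0) \cdot (\gamma |S_k|)^{-1} = \gamma^{-1} \sum_k p_k/p_0$, which we need to be at most $1$; a geometrically decaying sequence $p_k$ makes this a convergent sum, bounded by choosing $\gamma$ large enough (say $\gamma = 2\sum_k (p_k/p_0)$). Second, the row for a link $i \in S_k$: it receives interference from $\ell_{\hi}$, from the other links in $S_k$, and from links in all other shells $S_{k'}$. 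The contribution from $\ell_{\hi}$ is $(p_0/p_k) d_{i\hi}^{-\alpha} = (p_0/p_k)(\gamma|S_k|)^{-1}$; the contribution from $j \in S_{k'}$ (including $k'=k$, $j\ne i$) uses $d_{ij} = d_{i\hi} + d_{j\hi} = (\gamma|S_k|)^{1/\alpha} + (\gamma|S_{k'}|)^{1/\alpha}$, so $a^\calU_j(i) = (p_{k'}/p_k) d_{ij}^{-\alpha} \le (p_{k'}/p_k)\, (\gamma|S_k|)^{-1} \cdot$ something, and summing over the $|S_{k'}|$ links in each shell gives a bound of the form $(p_{k'}/p_k) \cdot |S_{k'}|/(\gamma \max(|S_k|,|S_{k'}|))$. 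The key point is that this is at most $(p_{k'}/p_k)/\gamma$ for $k' \le k$ and at most $(p_{k'}/p_k)\,|S_{k'}|/(\gamma|S_{k'}|) = (p_{k'}/p_k)/\gamma$ for $k' > k$ as well (since $|S_{k'}| \ge |S_k|$ forces the denominator up) — in every case bounded by $(p_{k'}/p_k)/\gamma$.

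**Balancing the exponents.**
So the $i\in S_k$ constraint becomes, after dividing through, $p_0/(\gamma |S_k| p_k) + \gamma^{-1}\sum_{k'} p_{k'}/p_k \le 1$. Since $|S_k| = 2^{2(k-1)}$ grows, the first term $p_0/(\gamma|S_k|p_k)$ stays bounded provided $p_k \gg p_0/|S_k|$, i.e. provided $p_k$ does not decay faster than $|S_k|^{-1} = 4^{-(k-1)}$. Meanwhile the second term needs $\sum_{k'} p_{k'} \le \gamma p_k / K$ roughly — but if the $p_{k'}$ decay geometrically with ratio $\rho < 1$, then $\sum_{k'} p_{k'} = O(p_1)$, which is a constant multiple of $p_1$, not of $p_k$. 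This is the main obstacle: the "other shells" interference on a deep shell $S_k$ is dominated by the shallow shells and is roughly constant, so we cannot afford $p_k$ to shrink much. The resolution is to take $p_k$ decaying only polynomially or to absorb the sum into the choice of $\gamma$: set $p_k = 1$ for all $k$ and $p_0 = 1$ as well, then the $\ell_{\hi}$-row gives $\gamma^{-1} K \le 1$ so take $\gamma = K$; the $i\in S_k$ row gives $1/(\gamma|S_k|) + \gamma^{-1}(K+1)/1 \le 1/\gamma + (K+1)/\gamma \le (K+2)/K$, which is slightly above $1$ — so instead take $\gamma = 2(K+2)$, a constant times $K = \theta(\log n)$. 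With uniform $p_i$ all equal and $\gamma = \Theta(K)$, every row of $\affm\mathbf{p}$ is at most $\mathbf{p}_i$, proving the lemma. I would present the clean version with $\mathbf{p}$ constant and $\gamma = \Theta(\log n)$, doing the row-sum bookkeeping carefully for the $S_k$ rows (splitting into the $\ell_{\hi}$ term, the same-shell term, and the cross-shell terms, each bounded by $1/\gamma$ times a count) as the one genuine computation.
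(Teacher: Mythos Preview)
Your final proposal sets $\mathbf{p}$ constant and takes $\gamma = \Theta(\log n)$. Both moves are fatal. The parameter $\gamma$ is part of the \emph{construction} of $L$, fixed once and for all before Lemmas~\ref{lbmain1} and~\ref{lbmain2} are stated; the paper eventually pins it at $\gamma = 14$. If you let $\gamma$ grow like $\log n$, the companion Lemma~\ref{lbmain2} collapses: its key inequality is $\sum a_{j\hj} \ge \tfrac{1}{\gamma 2^\alpha}\bigl(\Omega(\log n / r) - 1\bigr)$, which is no longer $> 1$ for $r = o(\log n)$ once $\gamma = \Theta(\log n)$. Worse, a constant vector $\mathbf{p}$ means $(\affm \mathbf{p})_i \le \mathbf{p}_i$ is exactly the statement that $L$ is feasible under \emph{uniform} power, i.e., under the unique oblivious assignment for equilength links --- which is precisely what Theorem~\ref{th:lb} asserts is impossible.

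The place where you went astray is the cross-shell estimate. You correctly derived the bound $(p_{k'}/p_k)\,|S_{k'}|/(\gamma\max(|S_k|,|S_{k'}|))$, but then you discarded the factor $|S_{k'}|/|S_k|$ for $k' < k$ and bounded it by $1$. That is the step that manufactures the false obstacle. Keep the factor: with $|S_k| = 4^{k-1}$, it equals $4^{-(k-k')}$, which decays quadratically in $2^{k-k'}$. The paper's choice $p_k = 2^{-k}$ then makes the $k' < k$ sum $\sum_{k'<k} 2^{k-k'}\cdot 4^{-(k-k')}/\gamma = \sum_{k'<k} 2^{-(k-k')}/\gamma \le 2/\gamma$, the $k' > k$ sum $\sum_{k'>k} 2^{-(k'-k)}/\gamma \le 1/\gamma$, and the $\ell_{\hi}$ term $2^{k}/(\gamma\,4^{k-1}) \le 4/\gamma$. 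A constant $\gamma$ now suffices. The geometric decay $p_k = 2^{-k}$ is exactly the balance point between ``too fast'' (the $\ell_{\hi}$ term blows up) and ``too slow'' (the shallow-shell terms blow up); your instinct to try geometric decay was right, and you abandoned it one line too early.
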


\begin{lemma}
\label{lbmain2}
Take any partition of $L$ into $r = o(\log n)$ sets $Q_i$ (
$i = 1 \ldots r$). Then there exists a link $j$ in some $Q_i$ such that $\sum_{\hj \in Q_i} a_{j \hj} > 1$, thus
proving that $Q_i$ is not feasible using $\calU$.
\end{lemma}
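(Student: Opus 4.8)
The plan is to prove Lemmas~\ref{lbmain1} and~\ref{lbmain2}, since together they establish Theorem~\ref{th:lb}: the first exhibits a power assignment making $L$ feasible, and the second shows uniform power requires $\Omega(\log n)$ slots.

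\textbf{Proof of Lemma~\ref{lbmain1}.} The idea is to give $\ell_{\hi}$ a large power and the other links a small uniform power, exploiting the fact that there is only one link $\ell_{\hi}$ that is ``close'' to everything. Set $P_i = 1$ for all $i \neq \hi$ and $P_{\hi} = M$ for a parameter $M$ to be fixed. We must check the SINR constraint at each link. At a link $i \in S_k$: the interference comes from $\ell_{\hi}$ (contributing $M/d_{i\hi}^\alpha = M/(\gamma|S_k|)$) and from all other links $j \neq \hi$ (contributing $\sum_j 1/d_{ij}^\alpha = \sum_j 1/(d_{i\hi}+d_{j\hi})^\alpha \le \sum_j 1/d_{i\hi}^\alpha \le n/(\gamma|S_k|)$, but we need a better bound --- in fact group by level: links in $S_m$ contribute $|S_m|/(d_{i\hi}+d_{\cdot\hi})^\alpha$, which is summable when the $d_{\cdot\hi}$ grow geometrically, giving a constant). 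Since the signal at $i$ is $P_i/\ell_i^\alpha = 1$, we need the total interference bounded by $1$, which forces $\gamma$ large enough and then $M \le \gamma$ roughly. At link $\hi$: signal is $M$, interference is $\sum_{j\neq\hi} 1/d_{j\hi}^\alpha = \sum_k |S_k|/(\gamma|S_k|) = K/\gamma$, so we need $M \ge K/\gamma$. Combining, we need $K/\gamma \le M \le \gamma \cdot(\text{const})$, which is satisfiable by choosing $\gamma = \Theta(\sqrt{K})$ and $M = \Theta(\sqrt K)$. This is the natural ``square-root''-type choice adapted to this instance; the bookkeeping is to verify the geometric-series bound on interference at each $i \in S_k$ and pin down the constant $\gamma$.

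\textbf{Proof of Lemma~\ref{lbmain2}.} Here is where the real content lies. Under uniform power, $a_{ij} = a^\calU_j(i) = \beta(\ell_i/d_{ij})^\alpha = 1/d_{ij}^\alpha$ (using $\beta=1$, $N=0$, $\ell_i=1$). Given a partition into $r = o(\log n)$ feasible sets $Q_1,\dots,Q_r$, consider the set $Q_i$ containing $\ell_{\hi}$. A counting argument: the links of $L$ other than $\hi$ number $\sum_{k=1}^K 2^{2(k-1)} = \Theta(4^K)$, and they are distributed among $r$ sets, so some set $Q_i$ contains at least $\Theta(4^K/r)$ of them. The key point is the structure of $d_{ij}$ for $i,j$ in the same level $S_k$: $d_{ij} = 2d_{i\hi} = 2(\gamma|S_k|)^{1/\alpha}$, so $a_{ij} = 1/(2^\alpha \gamma |S_k|)$, a quantity of order $1/|S_k|$. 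Thus if a feasible set $Q$ contains $m$ links from level $S_k$, the affectance on any one of them from the others in that level is $\approx m/(2^\alpha\gamma|S_k|)$, which exceeds $1$ once $m = \Omega(|S_k|)$. More generally, for a link $j \in S_k \cap Q$, $\sum_{\hj \in Q, \hj \in S_m} a_{j\hj} \approx |S_m \cap Q| / (\gamma \max(|S_k|,|S_m|))$. So feasibility forces, for every level $k$ and every feasible set $Q$, that $|S_k \cap Q|$ is at most a constant fraction of $|S_k|$ when we also account for cross-level contributions from \emph{shorter} levels $m < k$ (which have $d_{j\hj} \approx d_{\cdot\hi}$ of the longer, i.e.\ bounded below by $d_{j\hi}$, so each contributes $\le 1/(\gamma|S_k|)$ and there are $\le|S_k|$ relevant... ) --- the cleanest route is: each feasible $Q$ can contain $O(|S_k|)$ links total whose ``home level'' index is $\le k$, which when summed telescopes to show $|Q| = O(4^K / K)$ is impossible for... no: rather, one shows each feasible $Q$ contains at most $O(1)$ \emph{levels'} worth, i.e.\ $|Q \cap S_k| = O(\gamma)$ after the cross terms are charged, forcing $r = \Omega(K) = \Omega(\log n)$ sets, contradicting $r = o(\log n)$ and yielding the overloaded link $j$.

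\textbf{Main obstacle.} The delicate part is Lemma~\ref{lbmain2}: carefully accounting for affectances \emph{between different levels}. Within a level the estimate is clean, but a feasible set $Q$ could spread a few links across many levels; I must show that the cross-level affectances (which decay only polynomially, governed by $1/\max(|S_k|,|S_m|)$) still aggregate so that no feasible set can ``cover'' more than $O(1)$ levels' worth of links relative to the geometric sizes $|S_k| = 4^{k-1}$. The geometric growth of the $|S_k|$ and of the distances $d_{\cdot\hi}$ is exactly what makes the relevant sums converge; the bookkeeping is to turn ``$r=o(\log n)$ sets'' plus ``each set overloaded unless it respects the per-level capacity'' into a contradiction via a pigeonhole on the $\Theta(4^K)$ short links across $K=\Theta(\log n)$ levels. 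I would also need to choose $\gamma$ in Lemma~\ref{lbmain1} consistently with the constants appearing here.
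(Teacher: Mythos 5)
Your high-level intuition is right --- infeasibility under uniform power is a cross-level phenomenon, and the right invariant is that a feasible set can hold only ``$O(1)$ levels' worth'' of links --- but the argument as written has real gaps and a couple of false intermediate claims. First, your opening pigeonhole is the wrong one: a part containing $\Theta(4^K/r)$ links proves nothing, since those could all lie in $S_K$ (which holds roughly $3/4$ of all links and is perfectly feasible under $\calU$ by itself). The pigeonhole has to be on the level \emph{fractions} $f_{ik}=|S_k\cap Q_i|/|S_k|$: since $\sum_i f_{ik}=1$ for each $k$, summing over $k$ gives $\sum_{i}\sum_k f_{ik}=K$, so some part has $\sum_k f_{ik}\ge K/r$. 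Second, your within-level claim is false: for $j,\hj\in S_k$ one has $a_{j\hj}=1/(2^\alpha\gamma|S_k|)$, so even the \emph{entire} level contributes at most $1/(2^\alpha\gamma)<1$ to any of its members; the affectance never ``exceeds $1$ once $m=\Omega(|S_k|)$.'' Likewise ``$|Q\cap S_k|=O(\gamma)$'' cannot be what feasibility forces, since a full level is feasible. Third, the crux is never executed: you must fix the witness $j$ in the \emph{minimum} occupied level $\tilde k$ of the chosen part, note that for $\hj\in S_k$ with $k>\tilde k$ one has $d_{\hj j}\le 2(\gamma|S_k|)^{1/\alpha}$, hence each such level contributes $f_{ik}|S_k|\cdot\frac{1}{2^\alpha\gamma|S_k|}=f_{ik}/(2^\alpha\gamma)$ to $j$'s affectance, and sum to get $\sum_{\hj\in Q_i}a_{j\hj}\ge\frac{1}{2^\alpha\gamma}\bigl(\sum_k f_{ik}-1\bigr)\ge\frac{1}{2^\alpha\gamma}(K/r-1)>1$ for $r=o(\log n)$. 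You circle this computation but defer it as ``bookkeeping''; it is the entire proof.

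There is also a consistency problem you flag yourself but cannot resolve with your choices: your power assignment for Lemma~\ref{lbmain1} ($P_i=1$ for $i\ne\hi$) makes each longer-distance level $S_m$, $m>k$, contribute $|S_m|/(\gamma|S_m|)=1/\gamma$ of interference to a link in $S_k$, so the $\Theta(K)$ levels above force $\gamma=\Omega(K)=\Omega(\log n)$. With non-constant $\gamma$, the bound $\frac{1}{2^\alpha\gamma}(K/r-1)>1$ no longer yields $r=\Omega(\log n)$. The paper avoids this by taking $P_i=2^{-k}$ for $i\in S_k$, so the cross-level interference decays geometrically and $\gamma$ can be the constant $14$, which is what Lemma~\ref{lbmain2}'s quantitative conclusion requires.
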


Recalling that SINR condition is equivalent to $a^P_L(i) = \frac{1}{\mathbf{p}_i}\sum_{j \neq i} a_{ij} \mathbf{p}_j \leq 1$, we see that
Lemma \ref{lbmain1} shows that the links are SINR-feasible with power assignment $P$. 

\begin{proof}[Proof of Lemma \ref{lbmain1}]
We will use the power assignment $P$ defined by, $P_{\hi} = 1$ and $P_i = \frac{1}{2^{k}}, \forall i \in S_k$.
Note that $a_{ij} = \left(\frac{\ell_i}{d_{ji}}\right)^{\alpha} = \frac{1}{d_{ji}^{\alpha}}$.



\noindent For link $\ell_\hi$, 
\begin{align*}
(\affm\mathbf{p})_{\hi} & = \sum_{i \in L} a_{\hi i} P_i = \sum_k \sum_{i \in S_k} P_i a_{\hi i} \\
 & = \sum_k \sum_{i \in S_k} \frac{1}{2^{k}} \frac{1}{\gamma |S_k|} = \sum_k \frac{1}{2^{k}} \frac{1}{\gamma}
 \leq \frac{1}{\gamma} \leq 1 = P_{\hi}\ ,
\end{align*}
for $\gamma \ge 1$. 
The first equality is simply the vector multiplication written out explicitly, the second equality
divides the sum into the sums for sets $S_k$. The third equality uses the explicit values of the power assignment and
$a_{\hi j} =  \frac{1}{\gamma |S_k|}$ for $j \in S_k$. The first inequality
follows from the fact that $\sum_{k\ge 1}{\frac{1}{2^k}} = 1$.

Now consider $j \in S_{\tilde{k}}$ for some $\tilde{k}$. We write
\begin{equation}
  \label{eqn:lb3terms}
(\affm\mathbf{p})_{j} = \sum_{k > \tilde{k}} \sum_{i \in S_k} P_i a_{j i}
   + \sum_{k \leq \tilde{k}} \sum_{i \in S_k} P_i a_{j i} + P_{\hi} a_{j \hi}
\end{equation}
and bound these three terms separately. 
The last term is $P_{\hi} a_{j \hi} = \frac{1}{\gamma |S_{\tilde{k}}|} = \frac{1}{ \gamma 2^{2(\tilde{k} - 1)}} \leq \frac{4}{\gamma 2^{\tilde{k}}} = \frac{4 P_j}{\gamma}$. 

Recall that if $k > \tilde{k}$, then $d_{i j} \geq d_{i \hi} \geq (\gamma |S_k|)^{1/\alpha}$ for $i \in S_k$,
while if $k \leq \tilde{k}$, then $d_{i j} \geq (\gamma |S_{\tilde{k}}|)^{1/\alpha}$ for
$i \in S_k$.
The first term in Eqn.~\ref{eqn:lb3terms} is bounded by 
\begin{align*}
\sum_{k > \tilde{k}} \sum_{i \in S_k} P_i a_{j i}
 & \leq \sum_{k > \tilde{k}} \sum_{i \in S_k} \frac{1}{2^{k}} \frac{1}{\gamma |S_k|} 
    =  \sum_{k > \tilde{k}} \frac{1}{2^{k}} \frac{1}{\gamma}  \\
& \le \frac{1}{\gamma} \frac{1}{2^{\tilde{k}}} \sum_{m \ge 1}  \frac{1}{2^m}
    = \frac{1}{\gamma} \frac{1}{2^{\tilde{k}}} = \frac{1}{\gamma} P_{j}\ .
\end{align*}
The first inequality is simply by writing out the explicit values of $P_i$ and the bound $a_{ji} = \frac{1}{d_{ij}^\alpha} \leq \frac{1}{\gamma |S_k|}$. 
Everything else is manipulation,
the final equality is a consequence of the definition of $P_j$.
Finally, the second term in Eqn.~\ref{eqn:lb3terms} is bounded by 
\begin{align*}
\lefteqn{\sum_{k \leq \tilde{k}} \sum_{i \in S_k} P_i a_{j i}} \\
 &  \leq \sum_{k \leq \tilde{k}} |S_k| \frac{1}{2^{k}} \frac{1}{\gamma |S_{\tilde{k}}|}  
  = \frac{1}{\gamma}  \sum_{k \leq \tilde{k}} 2^{2 (k - 1) - k - 2(\tilde{k}-1)} \\
 & = \frac{1}{\gamma 2^{\tilde{k}}}  \sum_{k \leq \tilde{k}} 2^{k - \tilde{k}}
 \le \frac{2}{\gamma 2^{\tilde{k}}}  = \frac{2}{\gamma} P_{j} \ .
\end{align*}
%
The first inequality follows from the definition of $P_i$ and $a_{ji}
= \frac{1}{d_{ij}^\alpha} \leq \frac{1}{\gamma |S_{\tilde{k}}|}$. The
first equality follows from the value of $|S_k|$. The rest are
consequences of basic facts of the geometric series, and the values of
$|S_{\tilde{k}}|$ and $P_j$. 


Setting $\gamma = 14$, $(\affm\mathbf{p})_{j} \leq \frac{4}{\gamma} P_{j} + \frac{2}{\gamma} P_{j} + \frac{1}{\gamma} P_{j} \leq \frac{1}{2} P_j$. This completes the proof of the Lemma.
\end{proof}

To prove Lemma \ref{lbmain2}, first we need a simple lemma about
partitions of sets. 

\begin{lemma}
\label{simplesetsum}
Consider a set $S = \cup S_{k}$ for $k = 1 \ldots K$ where the $S_k$ are disjoint. Consider any arbitrary partition of $S$ into $r$ parts labelled $Q_1 \ldots Q_r$ where $S = \cup_{i \leq r} Q_i$. Define $f_{ik} = \frac{|S_k \cap Q_i|}{|S_k|}$ to be the fraction of 
$S_k$ that falls in $Q_i$. Then there exists $i \leq r$ such that 
$\sum_{k} f_{ik} \geq \frac{K}{r}\ .$
\end{lemma}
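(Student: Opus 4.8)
The plan is to prove Lemma~\ref{simplesetsum} by a straightforward averaging (pigeonhole) argument over the quantities $\sum_k f_{ik}$. First I would observe that for each fixed level $k$, since the sets $Q_1,\ldots,Q_r$ partition $S$ and hence partition $S_k$, we have $\sum_{i=1}^r |S_k \cap Q_i| = |S_k|$, and therefore $\sum_{i=1}^r f_{ik} = \sum_{i=1}^r \frac{|S_k \cap Q_i|}{|S_k|} = 1$ for every $k = 1,\ldots,K$ (note $|S_k| > 0$ so the fractions are well-defined).

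Next I would sum this identity over all $K$ levels and swap the order of summation:
\[
\sum_{i=1}^r \Bigl( \sum_{k=1}^K f_{ik} \Bigr) = \sum_{k=1}^K \Bigl( \sum_{i=1}^r f_{ik} \Bigr) = \sum_{k=1}^K 1 = K\ .
\]
This exhibits $K$ as the total, over the $r$ parts, of the per-part quantities $\sum_k f_{ik}$. By the pigeonhole principle (an average is always attained or exceeded by some term), there must exist an index $i \leq r$ with $\sum_{k=1}^K f_{ik} \geq \frac{K}{r}$, which is exactly the claimed bound.

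There is no real obstacle here — the statement is a clean double-counting fact and the only things to be careful about are that all $|S_k|$ are strictly positive (so the $f_{ik}$ make sense; this holds since $|S_k| = 2^{2(k-1)} \geq 1$ in the application, though the lemma as stated should simply assume the $S_k$ are nonempty) and that the $Q_i$ genuinely partition $S$ so that within each $S_k$ the pieces $S_k \cap Q_i$ are disjoint and cover $S_k$. I would state these minor points explicitly and then the averaging step concludes the proof immediately.
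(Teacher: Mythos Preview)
Your proof is correct and follows essentially the same approach as the paper: observe that $\sum_{i} f_{ik} = 1$ for each $k$, swap the order of summation to get $\sum_i \sum_k f_{ik} = K$, and conclude by averaging. The paper's proof is the same double-counting argument, stated more tersely.
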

\begin{proof}
From the definition of $f_{ik}$ we see that $\sum_{i \leq r} f_{ik} = 1$. Then
$ \sum_{i \leq r} \sum_{k} f_{ik} =  \sum_{k} \sum_{i \leq r} f_{ik} =  \sum_{k} 1 = K \  . $ The claim now follows.
\end{proof}

\begin{proof}[Proof of Lemma \ref{lbmain2}]
Let us invoke Lemma \ref{simplesetsum} and consider the set $Q_i$ for which the claim of the lemma holds. Thus,
$\sum_{k} f_{ik} \geq \frac{K}{r} = \Omega\left(\frac{\log n}{r}\right)$ (the last equality follows from Eqn.~ (\ref{KeqLogn})). 
Since the set $Q_i$ is clearly non-empty, there is some minimum value $\tilde{k}$ for which 
an element of $S_{\tilde{k}}$ exists in $Q_i$. Then, $\sum_{k > \tilde{k}} f_{ik} \geq \Omega\left(\frac{\log n}{r}\right) - 1$. 
Consider an arbitrary link $j \in S_{\tilde{k}} \cap Q_i$.

Note that for $\hj \in S_k$ where $k > \tilde{k}$, $d_{\hj j} = (\gamma |S_k|)^{1/\alpha} + (\gamma |S_{\tilde{k}}|)^{1/\alpha} \leq 2 (\gamma |S_k|)^{1/\alpha}$.
Now,
\begin{align*}
\sum_{\hj \in Q_i} a_{j \hj} & \geq \sum_{k > \tilde{k}} \sum_{\hj \in Q_i \cap S_k} a_{j \hj} \\
& \geq \frac{1}{\gamma 2^\alpha }\sum_{k > \tilde{k}} \sum_{\hj \in Q_i \cap S_k} \frac{1}{|S_k|} \\
& = \frac{1}{\gamma 2^\alpha }\sum_{k > \tilde{k}} f_{ik}|S_k| \frac{1}{|S_k|} \\
& = \frac{1}{\gamma 2^\alpha }\sum_{k >\tilde{k}} f_{ik} \geq \frac{1}{\gamma 2^\alpha } \left (\Omega\left(\frac{\log n}{r}\right) - 1\right) 
& > 1 \ ,
\end{align*}
for $r = o(\log n)$.
The first inequality comes from dividing the sum in to sums for different $S_k$'s. The second inequality comes from
the upper bound on $d_{\hj j}$. The first equality follows from the definition of $f_{ik}$. The third inequality follows
from the choice of $i$.

This proves that one cannot schedule the links in $o(\log n)$ slots using uniform (and thus, oblivious) power. 
\end{proof}


%
%

\bibliographystyle{abbrv}
\bibliography{./references}

\appendix

\section{Missing proofs}

\subsection{Matrix-algebraic Claims}
\label{app:matrix}
\begin{proof}[Proof of Claim \ref{matrixsumclaim}]
Suppose otherwise. Then, there are at least $\frac{n}{\lambda}$ rows $\mathbf{r}$ such that $\trace(\mathbf{r}) > \gamma \cdot \lambda$. Call the set of these rows $X$.
Then
\[\trace(\mathbf{M}) \geq \sum_{r \in X} \trace(\mathbf{r}) > \frac{n}{\lambda} \cdot \gamma \cdot \lambda 
= \gamma n \ ,\]
which is a contradiction. The proof for columns is nearly identical.
\end{proof}

\begin{proof}[Proof of Lemma \ref{lem:nearly-symm}]
The claim is proved by induction.
The case $n=1$ is immediate since $m_{11}\mathbf{p}_1 \le \mathbf{p}_1$, so $m_{11} \le
1=n \le (q+1)n$.

Suppose the claim holds for $n=k-1$, and consider the case $n=k$.
Assume without loss of generality that $\mathbf{p}_n$ is the smallest value in
$\mathbf{p}$; we otherwise rearrange the columns and rows accordingly.

By the matrix property,
$ \sum_i m_{ni} \mathbf{p}_i \leq \mathbf{p}_n\ . $
Since $\mathbf{p}_i \ge \mathbf{p}_n$ by assumption, 
we have that $ \sum_i m_{ni}  \leq \sum_i m_{ni} \frac{\mathbf{p}_i}{\mathbf{p}_n} \leq 1\ . $
%
%
By $q$-approximate symmetry,
$ \sum_{i} m_{in} \le q \sum_{i} m_{ni} \le q$.
It follows by the inductive hypothesis that
$ \sum_{ij} m_{ij} = \sum_{i=1}^{n-1} \sum_{j=1}^{n-1} m_{ij} 
         + \sum_{i=1}^n m_{in} + \sum_{i=1}^n m_{ni} - m_{nn}
   \le (q+1)(n-1) + q + 1 =  (q+1) n, $
establishing the claim.
\end{proof}

\subsection{Proof of Lemma \ref{rblem2}}

To prove this part of the result, we essentially switch the role of
senders and receivers, and then the proof follows that of Lemma \ref{rblem1}.

For each link $\ell_t \in R$, we will assign a set of ``guards" $X_t \in B$.
For different $t$, the $X_t$'s will be disjoint and each $X_t$ will be of size at most $2$. 
We claim that each link $\ell_b$ that
remains in $B$ after removing all the guard sets satisfies the lemma.

We process the links in $R$ in an arbitrary order.
Initially set $B' \leftarrow B$.
For each link $\ell_t \in R$, add to $X_t$ the link $\ell_y \in B^+_t$
with the sender nearest to
$r_t$ (if one exists); also add the link $\ell_z \in B^+_t$ with the
receiver nearest to $r_t$ (if one exists; possibly, $\ell_y = \ell_z$); 
finally, remove the links in $X_t$ from $B'$ and repeat the loop.

Since $|B| > 2 |R|$, $B' \geq |B| - 2|R| > 0$, by construction.
Consider any link $\ell_b \in B'$.
In what follows, we will show that the affectance of $\ell_b$ on any link  
$\ell_r \in R^-_b$ is comparable to the affectance of $\ell_b$ on one of the guards of $\ell_r$. 
Once we recall that the guards are blue, this implies that the overall affectance
on $R^-_b$ from $\ell_b$ is not much higher than that on $B$.

Consider any $\ell_r \in R^-_b$.
Since $\ell_b \not \in X_r$, $X_r$ contains two guards,
$\ell_y$ and $\ell_z$ (possibly the same).
Let $d$ denote $d(r_{z}, r_t)$ and observe that
since $\ell_b \not \in X_r$, $d(r_b, r_t) \ge d$.

We claim that $d_{bt} \ge d/2$. Before proving the claim, let us see how 
this leads to the proof of the Lemma. If the claim is true, then by the triangular inequality,
\begin{align}
\label{rbdistbound2}
d_{bz} = d(s_{b}, r_z) & \le d(s_b, r_t) + d(r_t, r_z) \\ 
& = d_{bt} + d \le 3 d_{bt} \ . \nonumber 
\end{align}

Thus, 
\begin{align*}
 \frac{a_b(z)}{a_b(t)} & = \frac{c_z}{c_t}\frac{P_b/d_{bz}^{\alpha}}{P_z/\ell_z^{\alpha}} \cdot \frac{P_t/\ell_t^{\alpha}}{P_b/d_{bt}^{\alpha}}  \\
& = \frac{P_t/\ell_t^{\alpha}}{P_z/\ell_z^{\alpha}} \cdot \left(\frac{d_{bt}}{d_{bz}}\right)^{\alpha} \frac{c_z}{c_t}
\geq \left(\frac{d_{bt}}{d_{bz}}\right)^{\alpha} \geq 3^{-\alpha}\ . 
\end{align*}
The first inequality follows from sub-linearity, and the second from Eqn.~\ref{rbdistbound2}.
Summing over all links in $B$,
\[ a_{b}(B) \ge \sum_{t \in R^-_b} a_b{(X_t)} \ge 3^{-\alpha} \sum_{\ell_t \in
  R^-_b} a_b(t) = 3^{-\alpha} a_b{(R^-_b)}\ . \]

To prove the claim that $d_{bt} \ge d/2$, let us suppose otherwise for contradiction.
Then, by the triangular inequality,
\[ \ell_b = d(s_b,r_b) \ge d(s_b,s_t) - d(s_t,r_b) > d - d/2 = d/2\ . \]
By the choice of $\ell_y$, its sender is at least as close to
$r_t$ as $b$, that is $d(s_y,r_t) \le d(s_b, r_t) < d/2$,
and its receiver is also at least far as $r_{z}$, or
$d(r_y, r_t) \ge d$. So, $\ell_y \ge d(r_y,r_t) - d(r_t,s_y) > d/2$.
But, that implies $\ell_{y}$ and $\ell_b$ are too close together, because
$d(s_{y},s_b) \le d(s_y, r_t) + d(s_b, r_t) < d$ and
\begin{align*} 
d_{y b} \cdot d_{b y} & \le (\ell_b + d(s_y,s_b)) \cdot (\ell_y + d(s_y,s_b)) \\
   & < (\ell_{b}+d) \cdot (\ell_{y} + d) < 9 \cdot \ell_{y} \ell_b. 
\end{align*}
Since $B$ is a $3^\alpha$-signal set,  
$b$ and $y$ cannot coexist in $B$ by Lemma \ref{lem:ind-separation},
which contradicts our assumption. 
Hence, any link $\ell_t \in R_b^-$ satisfies $d(s_b, r_t) = d_{bt} \ge d/2$. 

\end{document}